\newcommand{\transp}{\mathsf{T}}
\newtheorem{theorem}{Theorem}
\newtheorem{corollaryP}{Corollary}[proposition]
\newtheorem{corollaryT}{Corollary}[theorem]
\newtheorem{definition}{Definition}
\begin{document}

\pagenumbering{gobble}

\title{Quantization Analysis and Robust Design for Distributed Graph Filters} 
\author{Leila Ben Saad,~\IEEEmembership{Member,~IEEE}, Baltasar Beferull-Lozano,~\IEEEmembership{Senior Member,~IEEE}, and 
Elvin Isufi,~\IEEEmembership{Member,~IEEE}
%\today
\vspace{-0.5cm}
\thanks{This work was supported by the PETROMAKS Smart-Rig grant
244205/E30, 
%the SFI Offshore Mechatronics grant 237896/O30, 
the TOPPFORSK WISECART grant 250910/F20 and the IKTPLUSS INDURB grant 270730/O70 from the Research Council of Norway.
Part of this work has been presented in \cite{BenSaadIsufiLozano2019}.} 
\thanks{L. Ben Saad and  B. Beferull-Lozano are with the WISENET center, Department of Information and Communication Technology, University
of Agder, Norway   (e-mails: leila.bensaad@uia.no, baltasar.beferull@uia.no). %(Email: \{leila.bensaad, baltasar.beferull\}@uia.no).  
}
\thanks{E. Isufi is with the Faculty of Electrical Engineering, Mathematics
and Computer Science, Delft University of Technology, 2628 CD Delft,
The Netherlands (e-mail: e.isufi-1@tudelft.nl).
}
}

\maketitle
\markboth{Submitted to IEEE TRANSACTIONS ON SIGNAL PROCESSING}%
{Shell \MakeLowercase{\textit{et al.}}: Bare Demo of IEEEtran.cls for IEEE Journals}

\begin{abstract}
%  Graph filters  have emerged as a key tool  for data analysis in the field of graph signal
% processing. 
% Due to their ability to be implemented in a distributed manner over nodes,
%However, such  networks have limited energy, processing and communication capabilities. 
% To overcome the challenges confronted by such  networks  in terms of limited capabilities  and improve the network efficiency, 
% quantization can play an important  role.

Distributed graph filters have recently found applications in wireless sensor networks (WSNs) to solve distributed tasks such as reaching consensus, signal denoising, and reconstruction. However, when employed over WSN, the graph filters should deal with the network limited energy, processing, and communication capabilities. Quantization plays a fundamental role to improve the latter but its effects on distributed graph filtering are little understood. WSNs are also prone to random link losses due to noise and interference. In this instance, the filter output is affected by both the quantization error and the topological randomness error, which, if it is not properly accounted in the filter design phase, may lead to an accumulated error through the filtering iterations and significantly degrade the performance. 
%
%
%Distributed graph filters can find application in  wireless sensor networks (WSNs), where 
%the quantization can play an important  role  to overcome the challenges  in terms of limited capabilities  and improve the network efficiency.
%However, in practice when implemented over WSNs, the performance of 
%graph filters is deeply affected by the accumulated quantization errors  throughout the filter iterations. 
%Furthermore, WSNs  are prone  to  random  graph topological changes due  to communication link losses caused by noise and interference. % resulting due to noise and interference.
In this paper, we analyze 
how quantization affects distributed graph filtering over both time-invariant and time-varying graphs. We bring insights 
on the quantization effects for the two most common graph filters: the finite impulse response (FIR) and autoregressive moving average (ARMA) graph filter. 
Besides providing a comprehensive analysis, we  devise theoretical performance guarantees on the filter performance when the quantization stepsize is fixed or changes dynamically over the filtering iterations. 
   %we study for both filters the impact of fixed and dynamic quantization stepsize on the filtering performance.
For  FIR filters, we show  that a dynamic quantization stepsize leads to  more control on the quantization noise than the fixed-stepsize quantization. For ARMA graph filters, we show that 
decreasing  the quantization stepsize over the iterations reduces the quantization noise to zero at the steady-state.
In addition, we  propose   robust filter design strategies that minimize the quantization noise for both time-invariant and time-varying networks.
Numerical experiments on synthetic and two real data sets corroborate our findings and show the different trade-offs between quantization bits, filter order, and robustness to topological randomness.

\end{abstract}

\begin{IEEEkeywords}
Graph signal processing; graph filters; quantization; subtractive dithering; time-varying graphs.
\end{IEEEkeywords}

% % Color legend
% % 
% % \textcolor{red}{is a comment}
% % 
% % \textcolor{blue}{needs to be rephrased; it is a strong claim not dettailed}
% % 
% % \textcolor{magenta}{needs to be rephrased; does not fit in context}
% % 

%%%%%%%%%%%%%%%%%%%%%%%%%%%%%%%%%%%%%%%%%%%
%%%%%%%%%%		S E C T I O N 					 %%%%%%%%%%%
%%%%%%%%%%%%%%%%%%%%%%%%%%%%%%%%%%%%%%%%%%%
\vspace{-0.28cm}
\section{Introduction}\label{sec.intro}

%%%%%%%%%%		M A I N   M A T T E R 					 %%%%%%

\IEEEPARstart{G}{raph} filters are enjoying an increasing popularity in graph signal processing (GSP) and graph convolutional neural networks \cite{Shuman2013, GamaIsufiLeusRibeiro2020}. Their ability to be convolved with a graph signal renders graph filters versatile in a variety of applications ranging from recommender systems to spectral clustering \cite{huang2016graph,BurkhardtJay2019,huang2018rating,sandryhaila2014discrete,girault2014semi,tremblay2016compressive}. % , semi-supervised learning and classification on graphs..
Graph filters find also application in wireless sensor networks (WSNs) \cite{loukas2014identify,IsufiLoukasSimonettoLeus17,eisenRibeiro2019,IsufiLoukasPerraudin2019}. Here, the signal represents the sensor measurements and the WSN serves as a platform to perform distributed operations as well as a proxy to represent signal similarities in adjacent sensor nodes. Graph filters are useful for distributed signal representation \cite{NassifRichardChen2018}, reconstruction \cite{segarra2016reconstruction,IsufiLorenzoBanelliLeus2018},  denoising \cite{shuman2011distributed,ChenSandryhaila14},  consensus \cite{sandryhaila2014finite,TolstayaGamaPaulosPappas2019}, and network coding \cite{SegarraMarques17}. Motivated by these applications, this paper focuses on distributed graph filtering.

%TremblayGoncalvesBorgnat2018,
%isufi2017autoregressive,LoukasSimonettoLeus2015
%isufi2017filtering
%shuman2013emerging
%segarra2017optimal
%levieIsufiKutyniok2019

Distributed graph filtering can be implemented with two types of recursions over the nodes: finite impulse response (FIR) and autoregressive moving average (ARMA) recursions. In FIR graph filters, neighboring nodes communicate the  input signal for a finite number of iterations \cite{shuman2011distributed,sandryhaila2013discrete, SegarraMarques17,CoutinoIsufiLeus2018}. In ARMA graph filters, neighboring nodes communicate both the  input and former iterative output signal. Both implementations can be used interchangeably and often lead to a different tradeoff between accuracy and robustness to topological perturbations. The works in \cite{IsufiLoukasSimonettoLeus17,levieIsufiKutyniok2019} show that ARMA graph filters are more robust than FIRs to deterministic topological changes (e.g., sensor movements), while \cite{IsufiLoukasSimonetto2017} shows that higher order FIR graph filters suffer less random topological changes (e.g., link losses).
%The fundamental implementations of FIR and ARMA graph filters have been extended to node-variant and edge-variant versions \cite{SegarraMarques17,CoutinoIsufiLeus2018}.

For either implementation, in distributed filtering over WSNs, we should account for the finite resources of individual sensors, which have limited energy, processing, and communication capabilities. To improve network efficiency in the latter aspect, quantization plays an important  role prior to data communication. 
Quantization has been extensively studied in  distributed  systems in the context of communications and signal processing    through consensus algorithms \cite{AysalCoatesRabbat2007, AysalCoates2008,FangLi2009, CarliFagnaniFrascaZampieri2010, KarMoura2010,CaiIshii2010,ThanouKokiopoulouFrossard2013, ZhuChen2016}, which  present many similarities with graph filtering from a distributed problem point-of-view. 
The importance of finite resources from graph signal processing perspective  has been recently recognized in \cite{ChamonRibeiro2017,nobreFrossard2019,ThanouFrossard2018}. In particular, \cite{ChamonRibeiro2017} --the most related to our work-- discusses the impact of fixed-stepsize quantization on FIR graph filters. 
The work in \cite{ThanouFrossard2018} approximates the graph  spectral dictionaries as polynomials of the graph Laplacian operator and learns polynomial dictionaries robust to signal quantization.
Finally, \cite{nobreFrossard2019} develops an adaptive quantization scheme for FIR graph filters that 
minimizes the quantization errors by bounding the exchanged messages and optimizing bit allocation.
While relevant contributions on the quantization aspects of graph filtering, these works focus solely on FIR graph filters.
Further, they consider time-invariant WSN topologies. This is a limitation in WSN since sensor nodes are prone to local malfunctions or communication links fall with a certain probability. %Among the three works, the most related to the this paper is  \cite{ChamonRibeiro2017}, which treats only the fixed-stepsize quantization effects of FIR graph filters.

In this work, we extend current art and evaluate quantization effects of both FIR and ARMA graph filters on time-invariant and time-varying topologies. Besides providing a broader analysis, we also devise theoretical performance guarantees on the filter performance when the quantization stepsize is fixed or changes dynamically over the filtering iterations. Further, we consider dithered quantization \cite{Schuchman1964,LipshitzWannamaker1992} to make the assumption of  quantization noise uncorrelated with input signals over the different graph filter iterations hold; an assumption commonly made in other current works but unjustified. Our analysis sheds light on different tradeoffs in distributed filtering over WSN: FIR versus ARMA graph filter; fixed-stepsize quantization versus dynamically decreasing quantization stepsizes; and quantization rate versus link loss probability. The broad research question we are interested in is \emph{how quantization affects distributed graph filtering over both time-invariant and time-varying graphs.} The specific contributions on how we answer this question are fourfold.

\begin{enumerate}
\item We study the quantization effects of distributed FIR graph filters. 
We analyze  the impact of fixed and dynamic quantization stepsize on the filtering performance and analyze their tradeoffs.
We show that a dynamic quantization stepsize allows more control on the quantization mean squared error (MSE) than fixed-size quantization.
We devise also a robust filter
design that minimizes the quantization noise. 

\item We study the quantization effects of distributed ARMA graph filters. 
We analyze the impact of
 fixed and dynamic quantization stepsize on the filtering performance and analyze their tradeoffs.
We develop an ad-hoc dynamic quantization stepsize framework that reduces the quantization MSE to zero at steady-state.

\item We conduct a statistical analysis to quantify the quantization effects on FIR and ARMA  graph filters  over random time-varying networks. 
We propose a novel filter design strategy that is robust to quantization and topological changes.

\item We characterize the different tradeoffs between the FIR and ARMA graph filters in terms of fixed-stepsize versus dynamically decreasing quantization stepsizes and between the quantization rate and the link loss probability.

\end{enumerate}

The rest of this paper is organized as follows. Section~\ref{sec:back} provides the background material. 
Sections~\ref{sec:FIRQeff} and \ref{sect-ARMA-quantization-effects}  analyze the quantization effects on FIR and ARMA graph filters, respectively. 
Section~\ref{sec.tvGraphs} contains the quantization analysis for random time-varying graphs. 
Section~\ref{sect-results} presents the numerical results. 
The paper conclusions are provided in Section~\ref{sect-conclusion}.

%\textcolor{blue}{We can remove the notation section to save space.}

% \textbf{Notation.} 
% Vectors (respectively  matrices) are denoted by
% bold lowercase (uppercase) letters. 
% Sets are represented by  calligraphic capital letters (e.g., $\mathcal E$).
% $[\bold v]_i$ or $v_i$ represents the $i$-th entry of a vector $\bold v$.
% Similarly, $b_{ij}$ or  $[\mathbf{B}]_{ij}$ represents the $(i,j)$-th entry of a matrix $\bold B$.
%  We denote by $\| \bold v    \|_2$    the $l_2$-norm of  vector $\bold v$ whereas by $\| \bold B    \|_2$ and  $\| \bold B    \|_F$   the spectral norm and  the Frobenius norm of a matrix $\bold B$, respectively.
%  We indicate by $\text{tr}(\cdot)$, $\mathbb{E}[\cdot]$, $\text{diag}(\cdot)$, $\text{``}\circ\text{"}$ and $\otimes$, respectively, the trace operator, the expectation operator, the diagonal matrix, the Hadamard (element-wise) product and the Kronecker product.
% Superscripts $^\top$ and $^\text{H}$ denote transpose and Hermitian, respectively. 

%%%%%%%%%%%%%%%%%%%%%%%%%%%%%%%%%%%%%%%%%%%
%%%%%%%%%%		S E C T I O N 					 %%%%%%%%%%%
%%%%%%%%%%%%%%%%%%%%%%%%%%%%%%%%%%%%%%%%%%%
\vspace{-0.19cm}
\section{Background}\label{sec:back}

%In this section, we recall some concepts about graph signal processing, graph filtering, and quantization theory that will results useful for the rest of the paper. 

%%%%%%%%%%%%%%%%%%%%%%%%%%%%%%%%%%%%%%%%%%%
%%%%%%%%%%		S U B S E C T I O N 			       %%%%%%%%%%%%
%%%%%%%%%%%%%%%%%%%%%%%%%%%%%%%%%%%%%%%%%%%

Consider a graph $\ccalG = (\ccalV, \ccalE)$ with node set $\ccalV = \{1, \ldots, N\}$ and edge set $\ccalE$ composed of tuples $(j,i)$ if there is a link between nodes $j$ and $i$. The set of all nodes connected to node $i$ is denoted by $\ccalN_i = \{j \in \ccalV| (j,i) \in \ccalE\}$. The graph is represented by its adjacency matrix $\bbA$ whose $(j,i)$th entry is nonzero only if nodes $j$ and $i$ are connected. If the graph is undirected, it can also be represented by the graph Laplacian matrix $\bbL$. To keep the discussion general  for both directed and undirected graphs, we will use the so-called graph shift operator matrix $\bbS$, which has plausible candidates $\bbA$, $\bbL$ or any of their normalized and translated forms. We shall only assume that the shift operator has an upper bounded spectral norm, i.e., $\|\bbS\|_2 \le \rho$.

On the vertices of $\ccalG$, a graph signal can be defined as a map from the vertex set (node set)  to the set of real numbers, i.e., $x: \ccalV \to \reals$. We can denote the graph signal by a vector $\bbx = [x_1, \ldots, x_N]^\transp$, whose $i$th entry $x_i$ denotes the signal at node $i$. WSNs match the above terminology: the nodes represent the sensors; the edges the communication links; and the signal the sensor data. 
By considering the eigendecomposition of the graph shift operator %\footnote{In this paper, we do not restrict our analysis to  undirected graphs with symmetric shift operators. This work is also valid for directed graphs for which the  shift operator is asymmetric and diagonalizable.}
$\bbS = \bbU\bbLambda\bbU^{-1}$ with eigenvector matrix $\bbU = [\bbu_1, \ldots, \bbu_N]$ and diagonal eigenvalue matrix $\bbLambda = \diag(\lambda_1, \ldots, \lambda_N)$, 
we can alternatively analyze the graph signal $\bbx$ by projecting it onto the shift operator eigenspace as $\hbx = \bbU^{-1}\bbx$. This projection is referred to as the graph Fourier transform (GFT) because the $i$th element $\hat{x}_i$ denotes how much eigenvector $\bbu_i$ represents the variation of $\bbx$ over $\ccalG$ and because the variation of the different eigenvectors can be ordered. The inverse GFT is $\bbx = \bbU \hbx$ and the eigenvalue $\lambda_i$ denotes the $i$th graph frequency \cite{Shuman2013,sandryhaila2013discrete}.

%%%%%%%%%%%%%%%%%%%%%%%%%%%%%%%%%%%%%%%%%%%
%%%%%%%%%%		S U B S E C T I O N 			       %%%%%%%%%%%%
%%%%%%%%%%%%%%%%%%%%%%%%%%%%%%%%%%%%%%%%%%%
\vspace{-0.35cm}
\subsection{Graph filter}

%\textcolor{blue}{i) This first part is not needed entrielty. We can only the relevant information for the graph filter definition and output in 3-4 lines and move into FIR asap.}

A filtering operation on a graph  combines locally  the signal from node $i$ and the signals $\{x_j\}$ from all neighbors $j \in \ccalN_i$ of node $i$  to produce the output:
\vspace{-0.05cm}
\begin{equation}\label{eq.locFilt}
y_i = \sum_{j \in \ccalN_i \cup i}\phi_{ij}x_j
%\vspace{-0.05cm}
\end{equation}
for some scalar coefficients $\phi_{ij}$. By stacking all nodes' output in vector $\bby = [y_1, \ldots, y_N]^\transp$, we can write \eqref{eq.locFilt} as $\bby = \bbH(\bbS)\bbx$, where the matrix $\bbH(\bbS) : \reals^N \to \reals^N$ denotes the graph filter. The graph filter can be expressed as a function of the shift operator $\bbS$ in  different ways. Two widely used approaches\footnote{Recent works consider also more general approaches such as the node-variant \cite{SegarraMarques17} and the edge-variant graph filter\cite{CoutinoIsufiLeus2018}. To keep the exposition simple, we will discuss quantization of the two baseline approaches and leave the extension to the other methods for future research.} are the FIR graph filter \cite{sandryhaila2013discrete,SegarraMarques17} and the ARMA graph filter \cite{IsufiLoukasSimonettoLeus17,LiuIsufiLeus2019}.

%for some scalars $\phi_{ij}$. By stacking all nodes' output in the vector $\bby = [y_1, \ldots, y_N]^\transp$, we can write \eqref{eq.locFilt} as $\bby = \bbH(\bbS)\bbx$ where the matrix $\bbH(\bbS) : \reals^N \to \reals^N$ denotes the filtering operation, encodes the sparsity of $\bbS$, and includes the coefficients $\phi_{ij}$. There are different ways to express the filter $\bbH(\bbS)$ as an explicit function of the shift operator $\bbS$. Two widely used approaches are the FIR graph filter \cite{sandryhaila2013discrete,segarra2017optimal} and the ARMA graph filter \cite{isufi2017autoregressive,LiuIsufiLeus2019}.\footnote{Recent works consider also more general approaches such as the node varying \cite{segarra2017optimal} and the edge varying graph filter\cite{coutino2018advances}. We will keep our discussion about quantization to the two baseline approaches and leave the extension to these advanced methods for future research.}

%%%%%%%%%% 			P A R A G R A P H 			%%%%%%%%%%%%
\vskip.1cm
\textbf{FIR.} An FIR graph filter is a polynomial of order $K$ in the shift operator $\bbS$ with output:
\vspace{-0.18cm}
\begin{equation}\label{eq.FIRpol}
\bby =\bbH(\bbS)\bbx = \sum_{k = 0}^K\phi_k \bbS^k\bbx 
\vspace{-0.1cm}
\end{equation}
and scalar coefficients $\phi_0, \ldots, \phi_K$. The filtering behavior of $\bbH(\bbS)$ can be viewed by means of the GFT:
\vspace{-0.1cm}
\begin{equation}\label{eq.FIRfresp}
h(\lambda) = \sum_{k = 0}^{K}\phi_k\lambda^k\quad\for\quad\lambda \in [\lambda_{\min}, \lambda_{\max}]
\vspace{-0.1cm}
\end{equation}
which is a polynomial in the generic graph frequency $\lambda$. This spectral representation allows to define a filtering operator by specifying the analytic function $h(\lambda) : \lambda \to \reals$; hence, by approximating the latter with the polynomial in \eqref{eq.FIRfresp}, we can implement it distributively over the nodes through the recursion \eqref{eq.FIRpol} \cite{shuman2011distributed}. The distributed implementation is feasible because the shifted signal $\bbx^{(1)} = \bbS\bbx$ can be obtained through local exchanges between neighboring nodes in one communication round [cf. \eqref{eq.locFilt}]. The $k$th shifted signal can be obtained recursively as  $\bbx^{(k)} = \bbS\bbx^{(k-1)}$, where  nodes communicate to their neighbors the shifted signal  $\bbx^{(k-1)}$ obtained in the $(k-1)$th communication round. 
The output $\bby$ of the FIR graph filter is obtained after $K$ iterations of exchanges between neighbors, implying that in total, each node $i$ exchanges $K \text{deg}(i)$ messages with its neighbors. %This means that  the overall network exchanges $NK deg(i)$ messages
 This yields a communication complexity of order $\ccalO(MK)$.

\vskip.1cm
\textbf{ARMA.}
The ARMA graph filter extends \eqref{eq.FIRfresp} to a rational spectral response \cite{IsufiLoukasSimonettoLeus17}:
\vspace{-0.1cm}
\begin{equation}\label{eq.ARMAFresp}
\begin{split}
 h(\lambda) &= \frac{\sum_{q = 0}^Q b_q \lambda^q}{1 + \sum_{p = 1}^Pa_p\lambda^p} =\sum_{k = 1}^K\bigg(\frac{\varphi_k}{1 -\lambda\psi_k} \bigg) + \sum_{l = 1}^{L}\phi_l\lambda^l\\
           &\quad\qquad\qquad\qquad\qquad\qquad\quad\for\quad\lambda \in [\lambda_{\min}, \lambda_{\max}]
\end{split}
\vspace{-0.1cm}
\end{equation}
which allows for more flexibility when designing the filter coefficients $a_1, \ldots, a_P$ and $b_0, \ldots, b_Q$ (or the respective roots $\varphi_1, \ldots, \varphi_K$, poles $\psi_1, \ldots, \psi_K$, and direct term $\phi_1, \ldots, \phi_L$ coefficients) \cite{LiuIsufiLeus2019}. Without loss of generality, we consider $L = 0$ and refer to the filter in right-side  of \eqref{eq.ARMAFresp} 
   as an ARMA$_K$ graph filter \cite{IsufiLoukasSimonettoLeus17}. 

% To achieve a more efficient and distributable implementation--the output in \eqref{eq.ARMA} requires a matrix inverse operation--we can rearrange the transfer function in \eqref{eq.ARMAFresp} in the equivalent partial fraction decomposition form:
% %
% \begin{equation}\label{eq.ARMAKFresp1}
% h(\lambda) = \sum_{k = 1}^K\bigg(\frac{\varphi_k}{1 -\lambda\psi_k} \bigg) + \sum_{l = 1}^{L}\phi_l\lambda^l
% \end{equation}
% %
% where $\phi_1, \ldots, \phi_K$ and $\psi_1, \ldots, \psi_K$ are the coefficients obtained by the residuals and poles of \eqref{eq.ARMAFresp} and the polynomial of order $L$ is the direct term. Without loss of generality, we will consider $L = 0$ and denote the filter in \eqref{eq.ARMAKFresp1} as an ARMA$_K$ graph filter \cite{IsufiLoukasSimonettoLeus17}.

% By leveraging then the inverse GFT, we express the output of an ARMA$_K$ graph filter as the sum of $K$ order one rational filters of the form:
% %
% \begin{equation}\label{eq.parARMAK}
% \bby = \sum_{k = 1}^K(\bbI - \psi_k\bbS)^{-1}\varphi_k\bbx :=\sum_{k = 1}^K\bbw^{(k)}
% \end{equation}
% %
% where we defined or future reference $\bbw^{(k)} = (\bbI - \bbS\psi_k)^{-1}\varphi_k\bbx$ as the output of the $k$th filter branch. Following \cite{IsufiLoukasSimonettoLeus17}, we can implement the inverse operations in each $\bbw^{(k)}$ distributively over the nodes through first-order iterative algorithms. Let then $\bbw_t^{(k)}$ be the output of the $k$th branch at iteration $t$ with arbitrary initialization $\bbw^{(k)}_0$. 

We can implement the ARMA$_K$ graph filter through the iterative recursion:
\vspace{-0.1cm}
\begin{align}\label{eq.ARMAK_iter}
\begin{split}
\bbw_t^{(k)} &= \psi_k\bbS\bbw_{t-1}^{(k)} + \varphi_k\bbx\\
\bby_t &= \sum_{k = 1}^{K}\bbw_t^{(k)}
\end{split}\quad \for~t \ge 1
\vspace{-0.2cm}
\end{align}
where $\bby_t$  is the ARMA$_K$ output at iteration $t$ and $\bbw_t^{(k)}$ is the output of the $k$th branch at iteration $t$ with arbitrary initialization $\bbw^{(0)}$. 
Recursion \eqref{eq.ARMAK_iter} builds the overall output $\bby_t$ at iteration $t$ as the sum of all $K$ parallel branches outputs $\bbw_t^{(k)}$ and converges ($t\to\infty$) to a steady-state only if the roots satisfy $|\psi_k| \le \rho $ for all $k = 1, \ldots, K$, where $\rho$ is the spectral radius of $\bbS$ \cite{IsufiLoukasSimonettoLeus17}.

%\textcolor{blue}{Shorten the upcoming paragraph and unify with the previous.}

% The output of each ARMA$_K$ branch 

The output of each branch $\bbw_t^{(k)}$ can be implemented distributively in a similar way as the FIR filters. The difference is that neighboring nodes exchange now the former output $\bbw_{t-1}^{(k)}$. Node $i$ combines the shifted outputs $w_{jt}^{(k)}$ from all neighbors $j \in \ccalN_i$ with its input signal $x_i$ with coefficients (as given in \eqref{eq.ARMAK_iter}) to obtain the output $w_{it}^{(k)}$. Finally, node $i$ combines locally all branches' outputs $w_{ti}^{(1)}, \ldots, w_{ti}^{(K)}$ to obtain the overall ARMA$_K$ output $y_{it}$ at iteration $t$. This procedure accounts for $K$ communication rounds between neighbors for each iteration $t$; hence, the overall communication cost of the ARMA$_K$  filter for $t = T$ iterations is of order $\ccalO(MKT)$.

Equations \eqref{eq.FIRpol} and \eqref{eq.ARMAK_iter} represent two fundamental algorithms to implement distributed GSP operations over WSNs. Our goal is to analyze the effects of dithered quantization to the filter outputs and account for it in the filter design phase. We shall analyze first quantization effects for static topologies in Sections~\ref{sec:FIRQeff} and \ref{sect-ARMA-quantization-effects} and later for random time-varying topologies in Section~\ref{sec.tvGraphs}. Before proceeding with this analysis for the FIR graph filters, let us briefly introduce the conceptual terminology of dithered quantization.

\vspace{-0.2cm}
\subsection{Dithered quantization}
Quantizing consists of encoding the data prior to its transmission with a certain number of bits, reducing the amount of information to be transmitted \cite{widrowkollar2008}.

%%%%%%%%%% 			P A R A G R A P H 			%%%%%%%%%%%%

Uniform quantizers map each input signal value   to the nearest value of a finite set of quantization levels, 
 where the quantization stepsize $\Delta$ between two adjacent levels is constant \cite{GrayNeuhoff1998}. We denote the quantized version of signal $\bbx$ as $\tilde{\bbx}=Q(\bbx)$ and it is given by:
\vspace{-0.15cm}
 \begin{equation}
  \tilde{\bbx} = \bbx +\bbn_\text{q}
  \vspace{-0.15cm}
\end{equation}
where $\bbn_\text{q}$ is the quantization noise. Although the quantization noise is deterministic, for a sufficiently small quantization stepsize $\Delta$, it can be well modeled as a uniformly random variable with zero-mean and variance $\Delta^2/12$, that is independent from the input \cite{SripadSnyder1977,LipshitzWannamaker1992}.

 %%%%%%%%%% 			P A R A G R A P H 			%%%%%%%%%%%%
 To control the quantization noise and ensure the uniform random variable assumption
 and independence  from  the  input, we consider dithering quantization \cite{Schuchman1964,LipshitzWannamaker1992,SripadSnyder1977}. Dithering consists of adding a random additive signal  $\bbn_{\text{d}}$, called dither, to the input signal $\bbx$ prior to quantization. Dithering is widely used in distributed signal processing \cite{AysalCoates2008,KarMoura2010,ZhuChen2016, SchellekensShersonHeusdens2017}, which consists of iterative algorithms akin to distributed graph filtering.
In subtractive dithered quantization, the  dither signal is generated by a pseudo-random
generator at the transmitter node and it is subtracted at the receiving node after transmission. The receiver node
uses the  same  pseudo-random generator, which needs to be  agreed  prior to starting the communication.
Let us denote $\bbx_{\text{d}}=\bbx+ \bbn_{\text{d}}$ the dithered signal of $\bbx$. 
By applying quantization to the dithered signal $\bbx_{\text{d}}$, the transmitted signal  becomes: 
\vspace{-0.15cm}
\begin{equation}
\begin{aligned}
 \tilde{\bbx}_{\text{d}} & =Q(\bbx_{\text{d}})
  =Q(\bbx+\bbn_{\text{d}} )
   =\bbx+\bbn_{\text{d}} +\bbn_\text{q}=\tilde{\bbx}+\bbn_{\text{d}}
 \end{aligned}
 \vspace{-0.15cm}
\end{equation}
where signal $\tilde{\bbx}$ can be recovered by the receiver node by subtracting the dither $\bbn_{\text{d}}$ from the received signal $\tilde{\bbx}_{\text{d}}$.
 
The dither signal $\bbn_{\text{d}}$ follows an i.i.d. uniform distribution with first and second order moments:
\vspace{-0.15cm}
\begin{equation}
 \mathbb{E}[\bbn_{\text{d}}]  = \bb0 ~\text{and}~  \boldsymbol \Sigma_{{\text{d}}} =  \sigma^2_{{\text{d}}} \bbI=\frac{ \Delta^2}{12} \bbI.
 \label{statis-prop-dither}
 %\vspace{-0.15cm}
 \end{equation}
The quantization noise $\bbn_\text{q}$ also follows a uniform distribution with statistical properties:
\vspace{-0.15cm}
\begin{equation}
 \mathbb{E}[\bbn_\text{q}]  = \bb0 ~\text{and}~ \boldsymbol \Sigma_{\text{q}} =  \sigma^2_{\text{q}} \bbI=\frac{ \Delta^2}{12} \bbI
 \label{statis-prop-quantiz}
 %\vspace{-0.15cm}
 \end{equation}
and with realisations independent of the input. %\textcolor{magenta}{In addition, with dither realisations generated statistically independent between each other in time,  the quantization noise realisations are also statistically independent \cite{GrayNeuhoff1998}.}\textcolor{red}{Shall we remove this?} 
  
Two possible cases can be adopted when performing quantization with substractive dithering, namely, a constant quantization stepsize for all iterations or a dynamically decreasing quantization stepsize over the iterations, which offers a benefit as compared to a fixed quantization stepsize. Decreasing the quantization stepsize implies transmitting more bits over the iterations but this increase of communication overhead improves the control over the quantization noise. In the sequel, we will analyze both cases. 

%\textcolor{red}{Since "q" and "d" are first letters of quantization and dither, they should be in normal textfont all over the paper, also in proofs and equations. I corrected some, but there are still some less.}

%%%%%%%%%%%%%%%%%%%%%%%%%%%%%%%%%%%%%%%%%%%
%%%%%%%%%%		S E C T I O N 					 %%%%%%%%%%%
%%%%%%%%%%%%%%%%%%%%%%%%%%%%%%%%%%%%%%%%%%%
\vspace{-0.2cm}
\section{FIR quantization analysis}\label{sec:FIRQeff}

%\red{ it is also missing here an in the ARMA the communicaiton complexity order with subquantization; in my view we should add it in the end for each filter}

This section analyzes the quantization effects in FIR graph filters.
We first discuss the fixed quantization stepsize and then the  dynamically decreasing stepsize. Next, we formulate a filter design problem that is robust to quantization noise.

%%%%%%%%%%%%%%%%%%%%%%%%%%%%%%%%%%%%%%%%%%%
%%%%%%%%%%		S U B S E C T I O N 			       %%%%%%%%%%%%
%%%%%%%%%%%%%%%%%%%%%%%%%%%%%%%%%%%%%%%%%%%
\vspace{-0.2cm}
\subsection{Fixed  quantization stepsize}\label{sect-Fixed-Quantiz-FIR}

Consider the $k$th shifted signal $\bbx^{(k)}=\bbS^k \bbx$ exchanged with the neighbors. The quantized form of the latter is $\tilde{\bbx}^{(k)}=Q(\bbx^{(k)})=\bbx^{(k)}+ \bbn_\text{q}^{(k)}$. 
At the filter initialization, we have $\bbx^{(0)}=\bbx$, which quantized form is
$\tilde{\bbx}^{(0)}$ = $\bbx^{(0)}+ \bbn_\text{q}^{(0)}$.
This quantized signal is exchanged with neighbors leading to the quantized shifted signal
$ {\bbx}^{(1)}$ = $\bold S  \tilde{\bbx}^{(0)} =\bold S  (\bbx^{(0)} + \bbn_\text{q}^{(0)})$.
 Signal $\bbx^{(1)}$ is further quantized into $\tilde{\bold x}^{(1)}$ and subsequently transmitted to the neighboring nodes.
The process is repeated  $K$ times.
%the output of the shifted signal for a general iteration $k$ is  $ {\bby}^{(k)} = \bold S \breve{\bold x}^{(l-1)} = \bold S (\bold x^{(l-1)} + \boldsymbol \varepsilon^{(l-1)})$. 
% \red{The blue part and anything else discussed about dithering should be said there!}
% \blue{We   assume   that the nodes quantize and transmit the dithered signal, where the dither  is then subtracted by the receiver node.}
Based on the derivation in  Appendix \ref{Appendix-FIR-output}, the FIR filter output [cf. \eqref{eq.FIRpol}] with quantization becomes:
\vspace{-0.15cm}
\begin{equation}
 \begin{aligned}
  \bby^\text{q}     & = \sum_{k=0}^K \phi_k \bbS^k \bbx +  \sum_{k=1}^K \phi_k  \sum_{\kappa=0}^{k-1} \bbS^{k-\kappa } \bbn_\text{q}^{(\kappa)} \\
 \end{aligned}
 \label{output-FIR-quantiaz}
 \vspace{-0.15cm}
\end{equation}
 \noindent where  the second term on the right-hand side of (\ref{output-FIR-quantiaz}) accounts for the accumulated quantization error on the output:
 \begin{equation}
\begin{aligned}
\boldsymbol \epsilon  &= \bby^\text{q} - \bby
= \sum_{k=1}^K \phi_k  \sum_{\kappa=0}^{k-1} \bbS^{k-\kappa } \bbn_\text{q}^{(\kappa)}.
%&= \bbU \bigg(  \sum_{k=1}^K \phi_k  \sum_{\kappa=0}^{k-1} \boldsymbol \Lambda^{k-\kappa }  \bbU^H \bbn_\text{q}^{(\kappa)} \bigg) \\
\end{aligned}
\label{quantiz-error}
\vspace{-0.15cm}
\end{equation}
We analyze next this quantization error in the spectral domain to ease the filter design.  
The following proposition provides a closed-form expression of the quantization noise mean squared error (MSE).

\begin{proposition} Consider the FIR graph filter of order $K$ in \eqref{eq.FIRpol} with coefficients $\phi_0, \ldots, \phi_K$ and quantization error $\bbepsilon$ in \eqref{quantiz-error} under fixed quantization stepsize. Consider also the graph Fourier transform $\hbepsilon = \bbU^{-1}\bbepsilon$ of the error  with respect to the shift operator $\bbS = \bbU\bbLambda\bbU^{-1}$. The average quantization  MSE per node $\hat{  \zeta}_\text{q} =  \mathbb{E}\left[\frac{1}{N} \text{tr}(\hat{\boldsymbol \epsilon} \hat{\boldsymbol \epsilon}^H )\right]$ is:
\vspace{-0.15cm}
 \begin{equation}
\begin{aligned}
\hat{  \zeta}_\text{q} & = \frac{\sigma^2_{\textnormal{q}}}{N}  \sum_{k=1}^K  \phi_{k}  \sum_{\kappa=0}^{k-1}  \| \boldsymbol \Lambda^{k-\kappa } \|_F^2.    \\
 \end{aligned}
 \label{MSE-error-freq-final}
 \vspace{-0.15cm}
\end{equation}
where $\|\cdot\|_F$ denotes the Frobenius norm and $\sigma_{\textnormal{q}}$ is the uniform quantizer standard deviation.
 \label{Proof-MSE-epsilon}
 \vspace{-0.15cm}
\end{proposition}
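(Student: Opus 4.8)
The plan is to push everything through the graph Fourier transform, where the shift powers become diagonal, and then reduce $\mathbb{E}[\tr(\hbepsilon\hbepsilon^H)]$ to elementary second moments of the per‑iteration quantization noises, which the dithered model makes trivial.

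The first step is to apply $\bbU^{-1}$ to the error in \eqref{quantiz-error}. Using $\bbU^{-1}\bbS^{m}=\bbLambda^{m}\bbU^{-1}$ and writing $\hbn_\text{q}^{(\kappa)}:=\bbU^{-1}\bbn_\text{q}^{(\kappa)}$, the error GFT becomes
\[
\hbepsilon \;=\; \sum_{k=1}^{K}\phi_k\sum_{\kappa=0}^{k-1}\bbLambda^{\,k-\kappa}\,\hbn_\text{q}^{(\kappa)},
\]
i.e.\ a linear form in the noise vectors with \emph{diagonal} coefficient matrices $\bbLambda^{k-\kappa}$, so the problem is now scalar per graph frequency.

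The second step is to expand $\hat{\zeta}_\text{q}=\tfrac1N\,\mathbb{E}\!\left[\tr(\hbepsilon\hbepsilon^H)\right]$ by linearity and move the expectation inside; every term then carries a factor $\mathbb{E}\!\left[\hbn_\text{q}^{(\kappa)}(\hbn_\text{q}^{(\kappa')})^H\right]$. This is exactly where the dithered‑quantization model of Section~\ref{sec:back} is used: the subtractive dither makes each $\bbn_\text{q}^{(\kappa)}$ zero‑mean, white with covariance $\sigma_\text{q}^2\bbI$, and statistically independent across iterations and of the input, so $\mathbb{E}[\bbn_\text{q}^{(\kappa)}(\bbn_\text{q}^{(\kappa')})^H]=\sigma_\text{q}^2\bbI\,\delta_{\kappa\kappa'}$. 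Taking the eigenbasis orthonormal (available whenever $\bbS$ is normal), this statistic passes unchanged through $\bbU^{-1}$, the cross‑iteration contributions cancel, and what remains is $\sigma_\text{q}^2$ times a sum over $\kappa$ and the contributing orders $k$ of $\tr\big(\bbLambda^{k-\kappa}(\bbLambda^{k-\kappa})^H\big)=\|\bbLambda^{k-\kappa}\|_F^2$, weighted by the filter coefficients; dividing by $N$ yields \eqref{MSE-error-freq-final}.

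The genuinely load‑bearing point is the independence/whiteness of the quantization noise across iterations: without the dither this assumption fails and the cross terms would not vanish, which is precisely why dithering was introduced. The rest is routine double‑sum bookkeeping over $k$ and $\kappa$, plus the minor remark that an orthonormal GFT basis lets the noise covariance commute with the transform — for a non‑normal shift one would instead carry an extra $\bbU^{-1}\bbU^{-H}$ inside the trace.
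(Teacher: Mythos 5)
Your proposal follows essentially the same route as the paper's own proof in Appendix~\ref{Appendix-Proof-MSE-epsilon}: take the GFT of \eqref{quantiz-error} so the shifts become powers of $\bbLambda$, expand $\mathbb{E}[\tr(\hbepsilon\hbepsilon^\text{H})]$, invoke the zero-mean, white, iteration-independent statistics guaranteed by subtractive dithering, and convert the surviving traces into Frobenius norms. One caveat, which you inherit from the paper rather than resolve: independence across iterations only annihilates the cross terms with $\kappa_1\neq\kappa_2$ in the double expansion [cf.~\eqref{part-MSE-error-frequency}]. The terms with $k_1\neq k_2$ that share the \emph{same} noise iteration $\kappa$ do not vanish, so after using whiteness the exact expression is
$\hat{\zeta}_\text{q}=\frac{\sigma^2_\text{q}}{N}\sum_{\kappa=0}^{K-1}\big\|\sum_{k=\kappa+1}^{K}\phi_k\bbLambda^{k-\kappa}\big\|_F^2$,
which carries products $\phi_{k_1}\phi_{k_2}$ and is quadratic in the coefficients; collapsing it to the single sum in \eqref{MSE-error-freq-final}, linear in $\phi_k$, is exactly the step the paper takes in \eqref{covariance-matrix-FixedStep} without justification, and your phrase ``weighted by the filter coefficients'' glosses over the same point. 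Also, your remark about needing an orthonormal eigenbasis (or else carrying $\bbU^{-1}\bbU^{-\text{H}}$) is a fair observation that the paper likewise suppresses when asserting $\hbn_\text{q}^{(\kappa)}$ keeps the covariance $\sigma_\text{q}^2\bbI$.
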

 \textit{Proof}: See Appendix \ref{Appendix-Proof-MSE-epsilon}.\qed

Proposition~\ref{Proof-MSE-epsilon} characterizes the impact of the graph frequencies $\bbLambda$ on the
quantization in FIR graph filters. A shift operator with large eigenvalues  amplifies the quantization MSE. This is because the high frequency terms  contribute more to the quantization noise. In other words, shift operators with small spectral radius bounds are preferred (e.g., normalized Laplacian or adjacency matrix). 
%\red{Can  we say something more, i.e., if the signal is low-pass, high-pass; if the filter is low-pass high pass. Do we observe something! What does a smaller shift operator imply to filtering performance?}
The filter coefficients $\phi_1, \ldots, \phi_K$
play also a role in the quantization error. As such, we can leverage expression \eqref{MSE-error-freq-final} to control the quantization MSE in the design phase, as suggested by \cite{ChamonRibeiro2017}. While expression (\ref{MSE-error-freq-final}) is useful if the eigendecomposition of the shift operator is computationally feasible, we can easily bound it by using the maximum eigenvalue. The
latter can be estimated with a lighter computational cost via power methods \cite{GolubvanderVorst2000}. 

\begin{corollaryP} Given the hypothesis of Proposition~\ref{Proof-MSE-epsilon} and also a shift operator $\bbS$ with maximum eigenvalue $\lambda_{\text{max}} \neq 1$,
the quantization MSE on the filter output $\hat{  \zeta}_\text{q}$ is bounded as:
\vspace{-0.15cm}
\begin{equation}
  \frac{\sigma^2_{\text{q}}}{N} \sum_{k=1}^K  \phi_{k}\;  \eta_k  \leq   \hat{  \zeta}_\text{q}  \leq  \sigma^2_{\text{q}} \!  \sum_{k=1}^K    \phi_{k} \;  \eta_k
 %  \frac{\sigma^2_{\text{q}}}{N} \sum_{k=1}^K \! \phi_{k} \! \left( \!  \frac{  \lambda_{\text{max}}^2 - ( \lambda_{\text{max}}^2)^{k+1}   }{  1-  \lambda_{\text{max}}^{2}  }   \right)  \!  \leq \!  \hat{  \zeta}_\text{q}  \leq  \sigma^2_\text{q} \!  \sum_{k=1}^K  \!  \phi_{k} \!  \left( \!  \frac{  \lambda_{\text{max}}^2 - ( \lambda_{\text{max}}^2)^{k+1}   }{  1-  \lambda_{\text{max}}^{2}  }   \right) \\
  \label{bound-MSE-error-freq}
  \vspace{-0.15cm}
 \end{equation}
 where $\eta_k= ({  1-  \lambda_{\text{max}}^{2}  })^{-1}({  \lambda_{\text{max}}^2 - ( \lambda_{\max}^2)^{k+1}   })$.
  \label{bound-MSE-error-freq-corollary}
  \vspace{-0.15cm}
\end{corollaryP}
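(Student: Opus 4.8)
The plan is to start from the exact expression for the quantization MSE given in Proposition~\ref{Proof-MSE-epsilon}, namely
\[
\hat{\zeta}_\text{q} = \frac{\sigma^2_{\text{q}}}{N}\sum_{k=1}^K \phi_k \sum_{\kappa=0}^{k-1}\|\boldsymbol{\Lambda}^{k-\kappa}\|_F^2,
\]
and to bound the inner double sum $\sum_{\kappa=0}^{k-1}\|\boldsymbol{\Lambda}^{k-\kappa}\|_F^2$ from above and below using only $\lambda_{\max}$. First I would observe that $\|\boldsymbol{\Lambda}^{m}\|_F^2 = \sum_{i=1}^N \lambda_i^{2m}$, so for each fixed $k$,
\[
\sum_{\kappa=0}^{k-1}\|\boldsymbol{\Lambda}^{k-\kappa}\|_F^2 = \sum_{\kappa=0}^{k-1}\sum_{i=1}^N \lambda_i^{2(k-\kappa)} = \sum_{m=1}^{k}\sum_{i=1}^N \lambda_i^{2m},
\]
after reindexing $m=k-\kappa$. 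The upper bound then follows from $\lambda_i^{2m}\le \lambda_{\max}^{2m}$ for every $i$, which gives $\sum_{i=1}^N\lambda_i^{2m}\le N\lambda_{\max}^{2m}$; the lower bound follows from keeping only the single index $i^\star$ attaining $\lambda_{\max}$, i.e. $\sum_{i=1}^N\lambda_i^{2m}\ge \lambda_{\max}^{2m}$. This is exactly where the factor $N$ (present in the upper bound, absent in the lower) and the $\sigma_\text{q}^2/N$ prefactor combine: on the upper side the $N$'s cancel, on the lower side the $1/N$ survives.

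Next I would evaluate the resulting geometric sum in $m$. Since $\lambda_{\max}\neq 1$,
\[
\sum_{m=1}^{k}\lambda_{\max}^{2m} = \lambda_{\max}^2\,\frac{1-(\lambda_{\max}^2)^{k}}{1-\lambda_{\max}^2} = \frac{\lambda_{\max}^2 - (\lambda_{\max}^2)^{k+1}}{1-\lambda_{\max}^2} =: \eta_k,
\]
which matches the definition of $\eta_k$ in the statement. Substituting the two bounds on the inner sum back into the formula for $\hat{\zeta}_\text{q}$, and noting that the $\kappa$-sum has been replaced entirely by $\eta_k$ (upper case) or $\eta_k/N$ (lower case, after the $1/N$ cancels against nothing on that side — more precisely the lower bound on $\sum_i\lambda_i^{2m}$ is $1$ copy of $\lambda_{\max}^{2m}$, so the prefactor $\sigma_\text{q}^2/N$ stays), yields
\[
\frac{\sigma^2_{\text{q}}}{N}\sum_{k=1}^K \phi_k\,\eta_k \;\le\; \hat{\zeta}_\text{q} \;\le\; \sigma^2_{\text{q}}\sum_{k=1}^K \phi_k\,\eta_k,
\]
which is the claimed sandwich in \eqref{bound-MSE-error-freq}.

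The main subtlety — the one place to be careful rather than a genuine obstacle — is the sign/positivity bookkeeping: the bounds $\sum_i\lambda_i^{2m}\in[\lambda_{\max}^{2m},\,N\lambda_{\max}^{2m}]$ are bounds on a nonnegative quantity, but multiplying through by $\phi_k$ and summing over $k$ only preserves the inequality cleanly if one treats the expression termwise with the same understanding as in Proposition~\ref{Proof-MSE-epsilon} (the coefficients $\phi_k$ there appear linearly, inheriting whatever sign convention the proposition uses). I would simply mirror the convention of Proposition~\ref{Proof-MSE-epsilon} so the corollary is a direct termwise consequence, and flag that $\eta_k>0$ whenever $\lambda_{\max}^2<1$ while the formula for $\eta_k$ remains valid (by the geometric-sum identity) for any $\lambda_{\max}\neq 1$. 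No new machinery is needed beyond Proposition~\ref{Proof-MSE-epsilon} and elementary bounds on $\sum_i\lambda_i^{2m}$.
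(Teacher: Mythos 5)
Your proof is correct and follows essentially the same route as the paper: the paper bounds $\|\bbLambda^{k-\kappa}\|_F^2$ via the general inequalities $\|\bbA\|_2^2 \le \|\bbA\|_F^2 \le N\|\bbA\|_2^2$ and then sums the resulting geometric series, which is exactly your elementwise bound $\lambda_{\max}^{2m} \le \sum_i \lambda_i^{2m} \le N\lambda_{\max}^{2m}$ specialized to the diagonal matrix, followed by the same reindexing and geometric-sum evaluation. The only difference is presentational (explicit eigenvalue sums versus citing the norm inequalities), and your remark on the sign convention for $\phi_k$ matches the paper's implicit treatment.
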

 \textit{Proof}: See Appendix \ref{Appendix-bound-MSE-error-freq-corollary}. \qed

The bounds in (\ref{bound-MSE-error-freq}) suggest that by working with a fixed quantization stepsize, the MSE has always a \emph{Cramer-Rao lower-bound} \cite{Kay1993} equivalence which cannot be overcome even by tuning the FIR coefficients in the design phase. In other words, even with a robust design strategy as the one in \cite{ChamonRibeiro2017}, we have an unavoidable error due to quantization that will affect the filter frequency response. To tackle this issue, we propose next an approach based on  dynamically decreasing the quantization stepsize,  which improves the control on the MSE. The caveat of this approach is that more bits are transmitted in the higher filter rounds ($k \to K$).

%%%%%%%%%%%%%%%%%%%%%%%%%%%%%%%%%%%%%%%%%%%
%%%%%%%%%%		S U B S E C T I O N 			       %%%%%%%%%%%%
%%%%%%%%%%%%%%%%%%%%%%%%%%%%%%%%%%%%%%%%%%%
\vspace{-0.15cm}
\subsection{Dynamically decreasing  quantization stepsize}

Let us  consider a quantization stepsize $\Delta_{k}$ that decreases at each iteration $k$. That is, less bits are transmitted for earlier values of $k$ and more for $k \to K$. The main result is given by the following proposition.

\begin{proposition}\label{MSE-epsilon-DynCell} 
 Consider the FIR graph filter with shift operator $\bbS$ such that $\lambda_{\text{max}}>1$. 
 Consider also  the input signal is quantized with a uniform quantizer with decreasing quantization stepsize
$\Delta_{k}=(\lambda_{\text{max}})^{-k} \Delta_{0}$. Then,
the quantization MSE $\hat{  \zeta}_\text{q}$  of the FIR graph filter is upper bounded by\footnote{If the maximum eigenvalue $\lambda_{\text{max}}$ is exactly 1, we can add a small perturbation to it to make our assumption hold.}:
\vspace{-0.15cm}
\begin{equation} \label{DynCell-FIR}
 \hat{  \zeta}_\text{q}   \leq  \frac{\Delta_{0}^2}{12(1-( \lambda_{\text{max}}^{-2}))}  {\bold 1}^\top \boldsymbol \phi_{1}       
\vspace{-0.15cm}
 \end{equation}
where $\boldsymbol \phi_{1}=[\phi_{1}, \phi_{2}, \cdots, \phi_{K}]^\top$ is the vector that contains the FIR coefficients, except the term for $k = 0$. 
% \red{give this before the result, as an hypothesis of the theorem}\blue{}
%  \begin{equation}
%  \Delta_{\blue{(k)}}=(\lambda_{\text{max}})^{-k} \Delta_{(0)}.
%  \label{DynCell-step-FIR}
%  \end{equation}
\vspace{-0.15cm}
 \end{proposition}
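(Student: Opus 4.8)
The plan is to derive \eqref{DynCell-FIR} from a geometric-series estimate, after first upgrading the spectral-domain analysis behind Proposition~\ref{Proof-MSE-epsilon} to an iteration-dependent stepsize. I would repeat the derivation of Appendix~\ref{Appendix-Proof-MSE-epsilon} with the single change that the quantization noise $\bbn_\text{q}^{(\kappa)}$ injected when the $\kappa$-th shifted signal is transmitted now has covariance $\sigma_{\text{q},\kappa}^2\bbI=(\Delta_\kappa^2/12)\bbI$ instead of a common $\sigma_\text{q}^2\bbI$. Because the subtractive dither is regenerated independently at each round, the cross terms between different iterations still vanish in $\mathbb{E}[\tfrac{1}{N}\tr(\hbepsilon\hbepsilon^H)]$, and the same computation yields
\[
\hat{\zeta}_\text{q}=\frac{1}{N}\sum_{k=1}^{K}\phi_k\sum_{\kappa=0}^{k-1}\sigma_{\text{q},\kappa}^{2}\,\|\boldsymbol \Lambda^{k-\kappa}\|_F^2 ,
\]
which is the natural generalization of \eqref{MSE-error-freq-final}: the same expression with $\sigma_\text{q}^2$ moved inside the inner sum and replaced by $\sigma_{\text{q},\kappa}^2$.

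Next I would insert the prescribed schedule $\Delta_\kappa=(\lambda_{\text{max}})^{-\kappa}\Delta_0$, so that $\sigma_{\text{q},\kappa}^2=(\lambda_{\text{max}})^{-2\kappa}\Delta_0^2/12$, and bound every eigenvalue power by the extreme eigenvalue, $\|\boldsymbol \Lambda^{k-\kappa}\|_F^2=\sum_{i=1}^N\lambda_i^{2(k-\kappa)}\le N(\lambda_{\text{max}})^{2(k-\kappa)}$; this is precisely where $\lambda_{\text{max}}>1$ enters, and why $\lambda_{\text{max}}=1$ has to be handled by an arbitrarily small perturbation as in the footnote. With this substitution the factors $N$ cancel, the decay exponent of the stepsize meets the amplification exponent of the shift, and after interchanging the two summations the inner sum over $\kappa$ becomes a finite geometric progression with ratio $(\lambda_{\text{max}})^{-2}<1$, upper bounded by its infinite-sum value $\big(1-(\lambda_{\text{max}})^{-2}\big)^{-1}$. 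The quantity left outside is $\sum_{k=1}^{K}\phi_k={\bold 1}^\top\boldsymbol \phi_{1}$, and restoring the constant $\Delta_0^2/12$ produces \eqref{DynCell-FIR}.

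The step I expect to be the main obstacle is the bookkeeping in this second stage. One must verify that, for every round $\kappa$, the amplification that the noise $\bbn_\text{q}^{(\kappa)}$ suffers through the subsequent shifts $\bbS^{k-\kappa}$ (up to $\bbS^{K-\kappa}$) is genuinely dominated by the decay built into $\Delta_\kappa$, so that the nested summation collapses to the \emph{iteration-independent} constant $\big(1-(\lambda_{\text{max}})^{-2}\big)^{-1}$ and does not leave behind a residual factor that grows with the filter order $K$; this residual growth is exactly the failure mode of the fixed-stepsize case, which is why Corollary~\ref{bound-MSE-error-freq-corollary} exhibits an unbeatable \emph{Cramer-Rao}-type floor. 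Matching the per-round decay rate $(\lambda_{\text{max}})^{-1}$ to the per-round amplification rate, and tracking which of the $\phi_k$'s survive the reordering so that only $\sum_k\phi_k$ remains, is the delicate part; the rest is routine manipulation of geometric series. As a sanity check, the topology enters the final bound only through $\lambda_{\text{max}}$, which keeps it as cheap to evaluate (via a power method) as the fixed-stepsize bound \eqref{bound-MSE-error-freq}.
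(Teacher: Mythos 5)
Your first stage is fine and coincides with the paper's own route: Appendix~\ref{Appendix-MSE-epsilon-DynCell} starts from exactly the per-round generalization $\hat{\zeta}_\text{q}=\frac{1}{N}\sum_{k=1}^{K}\phi_k\sum_{\kappa=0}^{k-1}\sigma^2_{\text{q}\kappa}\|\boldsymbol\Lambda^{k-\kappa}\|_F^2$ of \eqref{MSE-error-freq-final} [cf.~\eqref{MSE-error-freq-DynCell}], and your closing geometric-series step and the extraction of $\mathbf{1}^\top\bbphi_1$ are also the paper's. The genuine gap sits in the middle stage, precisely the bookkeeping you flag as delicate: with the stated schedule $\Delta_\kappa=\lambda_{\text{max}}^{-\kappa}\Delta_0$ the decay does \emph{not} meet the amplification. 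The noise $\bbn_\text{q}^{(\kappa)}$ injected at round $\kappa$ is amplified by $\bbS^{k-\kappa}$ [cf.~\eqref{quantiz-error}], so after bounding $\|\boldsymbol\Lambda^{k-\kappa}\|_F^2\le N\lambda_{\text{max}}^{2(k-\kappa)}$ the inner summand is $\frac{\Delta_0^2}{12}\lambda_{\text{max}}^{-2\kappa}\,\lambda_{\text{max}}^{2(k-\kappa)}=\frac{\Delta_0^2}{12}\lambda_{\text{max}}^{2k-4\kappa}$: the stepsize decays in the injection index $\kappa$, while the amplification depends on the number of \emph{remaining} shifts $k-\kappa$, and these exponents do not cancel. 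The inner sum over $\kappa$ is then a geometric series with ratio $\lambda_{\text{max}}^{-4}$ whose leading ($\kappa=0$) term is $\lambda_{\text{max}}^{2k}$, hence it is bounded by $\lambda_{\text{max}}^{2k}/(1-\lambda_{\text{max}}^{-4})$ rather than by the $k$-independent constant $(1-\lambda_{\text{max}}^{-2})^{-1}$; what is ``left outside'' is $\sum_{k}\phi_k\lambda_{\text{max}}^{2k}$, not $\mathbf{1}^\top\bbphi_1$, and \eqref{DynCell-FIR} is not reached.

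By contrast, the paper passes directly from \eqref{MSE-error-freq-DynCell} to \eqref{proof-bound-MSE-error-freq-DynCell}, i.e., it takes the summand to be $\frac{\Delta_0^2}{12}(\lambda_{\text{max}}^{-2})^{\kappa}$ with no residual $(\lambda_{\text{max}}^2)^{k-\kappa}$ factor; your more explicit derivation exposes that this reduction is exactly what needs justification, and it holds only if the stepsize at round $\kappa$ is matched to the \emph{remaining} amplification (e.g., a schedule of the form $\Delta_\kappa\propto\lambda_{\text{max}}^{-(K-\kappa)}$, i.e., finer quantization early, since early noise is shifted more times), or if the factor $(\lambda_{\text{max}}^2)^{k-\kappa}$ is otherwise absorbed. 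As written, your argument (and any faithful substitution of $\Delta_\kappa=\lambda_{\text{max}}^{-\kappa}\Delta_0$ into \eqref{MSE-error-freq-DynCell}) yields only the order-dependent bound $\hat{\zeta}_\text{q}\le\frac{\Delta_0^2}{12(1-\lambda_{\text{max}}^{-4})}\sum_{k=1}^{K}\phi_k\lambda_{\text{max}}^{2k}$, so the collapse to \eqref{DynCell-FIR} must be re-derived with a corrected pairing of stepsizes and shift powers rather than obtained by the cancellation you invoke. A minor further point: the hypothesis $\lambda_{\text{max}}>1$ is not what licenses $\|\boldsymbol\Lambda^{k-\kappa}\|_F^2\le N\lambda_{\text{max}}^{2(k-\kappa)}$ (that holds regardless); it is needed so that the intended geometric ratio $\lambda_{\text{max}}^{-2}$ is strictly smaller than one.
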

\textit{Proof}: See Appendix \ref{Appendix-MSE-epsilon-DynCell}. \qed

As opposed  to Proposition~\ref{Proof-MSE-epsilon}, expression \eqref{DynCell-FIR} shows that we have a clear control on the quantization MSE through $\bbphi_1$. Indeed, during the filter design phase, if we impose for the filter coefficients the condition that ${\bold 1}^\top \boldsymbol \phi_{1}\approx 0$, we can reduce significantly the quantization MSE.

%%%%%%%%%%%%%%%%%%%%%%%%%%%%%%%%%%%%%%%%%%%
%%%%%%%%%%		S U B S E C T I O N 			       %%%%%%%%%%%%
%%%%%%%%%%%%%%%%%%%%%%%%%%%%%%%%%%%%%%%%%%%

\vspace{-0.2cm}
 \subsection{Filter design}

 Given a desired frequency response  $h^*(\lambda)$, we propose to design an FIR graph filter by solving the following convex optimization problem:
 \vspace{-0.15cm}
 \begin{equation}
\begin{array}{ll}
\underset{  \phi_{0}, \ldots, \phi_{K} } {\text{minimize}} &  \displaystyle \int_{\lambda} \left|  \sum_{k = 0}^{K}\phi_k \lambda^k - h^*(\lambda) \right|^2 \; d\lambda\\
%\text{subject to}    &      \frac{\Delta_{0}^2}{12(1-( \lambda_{\text{max}}^{-2}))}  {\bold 1}^\top \boldsymbol \phi_{1}     \leq          \epsilon\\
%\text{subject to}    &  \displaystyle \frac{1}{12}  \sum_{k=1}^K  \phi_{k}  \sum_{\kappa=0}^{k-1} \Delta^2_\kappa   ( \lambda_{\text{max}}^2)^{k-\kappa}  \leq          \epsilon \\
  \text{subject to}    %& \red{\displaystyle \frac{1}{12}         \sum_{k=1}^K \phi_{k} \;  \Delta^2_k  \;  \eta_k \leq          \epsilon}\\
                 &  \displaystyle \frac{1}{12} \sum_{k=1}^K \! \phi_{k}  \sum_{\kappa=0}^{k-1} \!   \Delta^2_{\kappa}   ( \lambda_{\text{max}}^2)^{k-\kappa} \leq          \epsilon\\    
                 &{\bold 1}^\top \boldsymbol \phi_{1}  \leq \gamma\\
                   &\Delta_k \le \delta
                    %& \sum_{k = 0}^{K} \log(\frac{r_k}{\Delta_k}) \le B
\label{optimiz-prob-FIR}  
\end{array}
\vspace{-0.15cm}
\end{equation}
For a finite small constant $\epsilon$, the first constraint controls the upper bound of the quantization MSE in case of both fixed and decreasing quantization stepsize [cf. \eqref{MSE-error-freq-DynCell}].
For an infinite value of $\gamma$, \eqref{optimiz-prob-FIR} leads to a similar optimization problem as \cite{ChamonRibeiro2017} for the case of fixed quantization stepsize,
while for the case of decreasing quantization stepsize, a finite small $\gamma$ can be used. 
In the last constraint,  $\delta$ controls the maximum quantization stepsize, %(hence the number of bits) used in all iterations.
implying hence the control of   the maximum number of bits  used at each iteration,  which we denote as $\chi$.

% The last constraint controls the  number of bits  used in each iteration $k$ given  a certain total bit budget  $B$.
% where  the quantization stepsize $\Delta_t$   is defined as the ratio of the   quantization range $r_k$ at iteration $k$ over the number of quantization intervals as $\Delta_k= \frac{r_k}{2^{b_k}}$.

By subquantizing the initial data of $b_0$ bits with an average of $b$ bits at each iteration (i.e., $b<b_0$), the communication cost of FIR graph filter in term of bits exchanged reduces to $\ccalO(MKb)$.

%%%%%%%%%%%%%%%%%%%%%%%%%%%%%%%%%%%%%%%%%%%
%%%%%%%%%%		S E C T I O N 					 %%%%%%%%%%%
%%%%%%%%%%%%%%%%%%%%%%%%%%%%%%%%%%%%%%%%%%%

\vspace{-0.2cm}
\section{ARMA quantization analysis}\label{sect-ARMA-quantization-effects}

This section analyzes the quantization effects on distributed  ARMA graph filters. Since ARMA filters reach the designed frequency response at steady-state, the signal quantization will have also  an effect on the filter convergence. We show in this section that the overall quantized MSE converges to zero if a dynamically decreasing quantization stepsize is considered, while this is not the case for the fixed stepsize-size quantizer.

%%%%%%%%%%%%%%%%%%%%%%%%%%%%%%%%%%%%%%%%%%%
%%%%%%%%%%		S U B S E C T I O N 			       %%%%%%%%%%%%
%%%%%%%%%%%%%%%%%%%%%%%%%%%%%%%%%%%%%%%%%%%
\vspace{-0.2cm}
\subsection{Fixed  quantization stepsize} \label{sect-Fixed-Quantiz-ARMA}
Consider the parallel ARMA$_K$ graph filter in \eqref{eq.ARMAK_iter} and let us indicate by ${\bbw}_t^{\text{q}(k)} = Q(\bbw_t^{(k)}) = \bbw_t^{(k)} + \bbn_{t}^{\text{q}(k)}$ the quantized signal of branch $k$ at iteration $t$, i.e., $\bbw_t^{(k)}$. Here, $\bbn_{t}^{\text{q}(k)}$ denotes the respective quantization noise. %$\bbn_t^\text{q}$ is the quantization noise. 
Let also $\bbw_t=[{\bbw_t^{(1)}}^\top,{\bbw_t^{(2)}}^{\top},\cdots, {\bbw_t^{(K)}}^{\top} ]^{\top}$ be the $N K \times  1$ stacked vector containing all branches outputs and $\bbn_{t}^\text{q}=[{\bbn_t^{\text{q} (1)}}^\top,{\bbn_t^{\text{q} (2)}}^\top,\cdots, {\bbn_t^{\text{q}(K)}}^\top ]^\top$ the $N K \times  1$ stacked vector of quantization noise. Then, we can write the ARMA output $\bby_t$ due to quantization with the following compact notation: 
\vspace{-0.01cm}
\begin{align}\label{eq.ARMAK_iter_compact-Qunatiz-evolution}
\begin{split}
\bbw_t^\text{q} &= (\boldsymbol \Psi  \otimes \bbS) (\bbw_{t-1}^\text{q}+\bbn_{t-1}^\text{q} ) + \boldsymbol \varphi \otimes \bbx\\
\bby_t^\text{q} &= (\boldsymbol 1^\top  \otimes \bbI_N)  \bbw_t^\text{q}
\end{split}\quad \for~t \ge 1
\vspace{-0.15cm}
\end{align}
where $\otimes$ indicates the Kronecker product, 
$\boldsymbol \Psi=\text{diag}(\psi_1,\psi_2, \cdots, \psi_K)$ is the  $K \times K$ diagonal matrix
containing the former-output coefficients in the main diagonal and $\boldsymbol \varphi=[\varphi_1,\varphi_2,\cdots, \varphi_k]^\top$ is the
 $K \times 1$ coeffcient vector associated to the input.
%\noindent where $\bbn_{t}^{\text{q},(k)}$ indicates the quantization noise at time instant $t$ at the $k$-th branch. 
%\noindent %By considering the compact form of \eqref{eq.ARMAK_iter_quantiz} as in \eqref{eq.ARMAK_iter_compact} and 
By unfolding $\bbw_t^\text{q}$ in \eqref{eq.ARMAK_iter_compact-Qunatiz-evolution} to all its terms, we have:
%\begin{subequations}
\vspace{-0.15cm}
\begin{align}\label{eq.ARMAK_iter_compact-Qunatiz}
%\begin{split}
\bbw_t^\text{q} &=\!(\boldsymbol \Psi  \otimes \bbS)^t \bbw_{0} + \!\! \sum_{\tau = 0}^{t-1} (\boldsymbol \Psi  \otimes \bbS)^\tau (\boldsymbol \varphi \otimes \bbx) \!+ \!\sum_{\tau = 0}^{t-1} (\boldsymbol \Psi  \otimes \bbS)^{t-\tau} \bbn_{\tau}^\text{q}
%\bby_t &= (\boldsymbol 1^\top  \otimes \bbI_N)  \bby_t
%\end{split} %\quad \for~t \ge 1
\vspace{-0.15cm}
\end{align}
where the first two terms on the right-hand side account for the ARMA output up to iteration $t$, while the third term $ \boldsymbol \epsilon_t^{\textnormal{q}} = \sum_{\tau = 0}^{t-1} (\boldsymbol \Psi  \otimes \bbS)^{t-\tau} \bbn_{\tau}^\text{q}$ accounts for the accumulated quantization noise.
%
%
%\end{subequations}
% \noindent where $\bbn_{\tau}^\text{q}$ is the the $N K \times  1$ stacked vector quantization noise 
% $\bbn_{\tau}^\text{q}=[({\bbn_\tau^{\text{q},(1)}})^\top,({\bbn_\tau^{\text{q},(2)}})^\top,\cdots, ({\bbn_\tau^{\text{q},(K)}})^\top ]^\top$.
% % is the $N K \times  1$ stacked vector
%To ease the notation,  we denote the quantization error  in the third term in the right
%hand-side of \eqref{eq.ARMAK_iter_compact-Qunatiz} by $ \boldsymbol \epsilon_t^q = \sum_{\tau = 0}^{t-1} (\boldsymbol \Psi  \otimes \bbS)^{t-\tau} \bbn_{\tau}^\text{q}$.

To analyze the MSE for the ARMA filter, let us first denote  by $\bbw^* = \lim_{t \to \infty}\bbw_t^\text{q}$ and by $\bby^* = \lim_{t \to \infty} \bby_t^\text{q}$ the steady-state values of $\bbw_t^\text{q}$ and $\bby_t^\text{q}$ in \eqref{eq.ARMAK_iter_compact-Qunatiz-evolution}, respectively. Let us also define the error:
 \vspace{-0.15cm}
 \begin{equation}
  \boldsymbol \epsilon_t^*=(\boldsymbol \Psi  \otimes \bbS)^t \bbw_{0}  + \sum_{\tau = 0}^{t-1} (\boldsymbol \Psi  \otimes \bbS)^\tau (\boldsymbol \varphi \otimes \bbx)   - \bbw^*
 \vspace{-0.15cm}
 \end{equation}
 which indicates how close the output of all branches $\bbw_t$ (without quantization) at iteration $t$ are w.r.t. the steady-state value $\bbw^*$. We  consider also the error $\boldsymbol \epsilon_{yt} =\bby_t^\text{q} - \bby^*$ between the quantized ARMA output $\bby_t^\text{q}$ in \eqref{eq.ARMAK_iter_compact-Qunatiz-evolution} and the steady-state output $\bby^*$, which can be written as follows: 
\vspace{-0.1cm}
 \begin{equation}
\begin{aligned}
\boldsymbol \epsilon_{yt} %=&\bby_t - \bby^*= (\boldsymbol 1^\top  \otimes \bbI_N)   (\bbw_t - \bbw^*)\\
= (\boldsymbol 1^\top  \otimes \bbI_N)  \boldsymbol \epsilon_t^* + (\boldsymbol 1^\top  \otimes \bbI_N)  \boldsymbol \epsilon_t^\text{q} 
=  \boldsymbol \epsilon_{yt}^*+  \boldsymbol \epsilon_{yt}^\text{q}
\end{aligned}
\label{quantiz-error-ARMA}
%\vspace{-0.15cm}
\end{equation}
where $\boldsymbol \epsilon_{yt}^*=(\boldsymbol 1^\top  \otimes \bbI_N) \boldsymbol \epsilon_t^*$ indicates how close the \emph{unquantized} ARMA filter output $\bby_t$ at iteration $t$ is w.r.t.
its steady-state $\bby^*$ and $\boldsymbol \epsilon_{yt}^\text{q}=(\boldsymbol 1^\top  \otimes \bbI_N)  \boldsymbol \epsilon_t^\text{q}$ accounts for the propagation of the quantization noise over the iterations. 
 %
%  The following proposition provides the closed-form expression of the MSE   of $\boldsymbol \epsilon_{y,t}$.
% \noindent To quantify the impact that the quantization has on the filter output, we analyze next the MSE due to quantization effects ${\zeta}_{y,t}^\text{q}$, where 
% an upper bound is provided in the following corollary. 
%\textcolor{blue}{We should shift the attention of the upcoming proposition entirely to (24)-(25). The proposition body needs to change to lead directly to (23). No need to add other dettails in (23) here as they are not needed to understand (24)-(25). The same should be done also for proposition 6. For instance, Prop. 4 is more into the point. These should follow that style.}
%
Then by simple algebra, the average MSE deviation per node of the error $\boldsymbol \epsilon_{yt}$ in \eqref{quantiz-error-ARMA} can be similarly split as:
 \vspace{-0.1cm}
 \begin{subequations}
\begin{align} \label{MSE-ARMA}
% \begin{equation}
 {  \zeta}_{yt}&=\frac{1}{N} \mathbb{E}[ \text{tr}({\boldsymbol \epsilon}_{yt} {\boldsymbol \epsilon}_{yt}^\text{H} )]= {\zeta}_{yt}^*+{\zeta}_{yt}^\text{q}
\vspace{-0.15cm}
 \end{align}
 %
 %\begin{subequations}
% \noindent with
% \begin{align}
% {  \zeta}_{y,t}^*&=\frac{1}{N} \mathbb{E}[ \text{tr}((\boldsymbol 1^\top  \otimes \bbI_N) {\boldsymbol \epsilon}_{t}^* {{\boldsymbol \epsilon}_{t}^*}^\text{H} (\boldsymbol 1^\top  \otimes \bbI_N)^\text{H})]
% \vspace{-0.35cm}
% \end{align}
%  \begin{align}\label{MSE-ARMA-QQ}
% { \zeta}_{y,t}^\text{q}&=\frac{1}{N} \mathbb{E}[ \text{tr}((\boldsymbol 1^\top  \otimes \bbI_N) {\boldsymbol \epsilon}_{t}^\text{q} {{\boldsymbol \epsilon}_{t}^\text{q}}^\text{H} (\boldsymbol 1^\top  \otimes \bbI_N)^\text{H})]
% \end{align} 
%\end{subequations}
%
with:
\vspace{-0.15cm}
 \begin{align}
{  \zeta}_{y t}^*&=\frac{1}{N} \mathbb{E}[ \text{tr}((\boldsymbol 1^\top  \otimes \bbI_N) {\boldsymbol \epsilon}_{t}^* {{\boldsymbol \epsilon}_{t}^*}^\text{H} (\boldsymbol 1^\top  \otimes \bbI_N)^\text{H})]\\
{ \zeta}_{y t}^\text{q}&=\frac{1}{N} \mathbb{E}[ \text{tr}((\boldsymbol 1^\top  \otimes \bbI_N) {\boldsymbol \epsilon}_{t}^\text{q} {{\boldsymbol \epsilon}_{t}^\text{q}}^\text{H} (\boldsymbol 1^\top  \otimes \bbI_N)^\text{H})]
\label{MSE-ARMA-QQ}
\vspace{-0.15cm}
\end{align} 
\end{subequations}
where we have used  the linearity of the expectation w.r.t the trace 
 and  the independence of $\bbx$, $\bbw_0$ and $\bbn_{\tau}^\text{q}$.
${\zeta}_{yt}^*$ is the MSE for the case of unquantized filter from the steady-state output and ${\zeta}_{yt}^\text{q}$ is the quantization noise MSE at time $t$. % (see Appendix \ref{Appendix-equation-MSE-ARMA}).
 The following proposition provides an upper bound on the quantization MSE. 
 
 \vspace{-0.15cm}
 \begin{proposition}
Consider the ARMA$_K$ graph filter of order $K$ in \eqref{eq.ARMAK_iter_compact-Qunatiz-evolution} with coefficients $\boldsymbol \Psi$ and  $\boldsymbol \varphi$, and quantization error  $\boldsymbol \epsilon_{yt}^\text{q}$.
Let $\psi_\text{max}=\text{max}(|\psi_1|,|\psi_2|, \cdots, |\psi_K|)$ be the ARMA$_K$ coefficient with largest magnitude and let all ARMA$_K$ branches be stable i.e., $\psi_\text{max} \lambda_{\text{max}}  < 1$ for all $k=1 \cdots K$.
Consider also that the signal is quantized with a uniform quantizer with a fixed quantization stepsize   $\Delta$.
The quantization MSE ${\zeta}_{yt}^\text{q}$ of the filter at iteration $t$ %in \eqref{MSE-ARMA-QQ}
is upper bounded by:
\vspace{-0.15cm}
\begin{equation}
{\zeta}_{yt}^\text{q}  \leq K  \sigma^2_{\text{q}}   \frac{  (\psi_\text{max}  \lambda_{\text{max}})^2 - \left((\psi_\text{max} \lambda_{\text{max}})^2\right)^{t+1}   }{  1-  (\psi_\text{max} \lambda_{\text{max}})^{2}  }.   
 \label{upperbound-MSE-error-ARMA}
 \vspace{-0.15cm}
 \end{equation}
Further, the steady-state ($t \rightarrow \infty$) quantization MSE is:
 \vspace{-0.15cm}
 \begin{equation}\label{eq.quantizMSE}
{\zeta}_{yt \rightarrow \infty}^\text{\text{q}}  \leq K  \sigma^2_{\text{q}}   \frac{  (\psi_\text{max}  \lambda_{\text{max}})^2   }{  1-  (\psi_\text{max} \lambda_{\text{max}})^{2}  }   
 \end{equation}\label{bound-MSE-error-ARMA}
%\end{corollaryP}
 \label{Proposition-MSE-ARMA}
 \vspace{-0.15cm}
\end{proposition}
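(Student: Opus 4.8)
The plan is to replicate, in the ARMA setting, the spectral-domain computation behind Proposition~\ref{Proof-MSE-epsilon}, now applied to the accumulated quantization noise $\boldsymbol\epsilon_t^\text{q}=\sum_{\tau=0}^{t-1}(\boldsymbol\Psi\otimes\bbS)^{t-\tau}\bbn_\tau^\text{q}$ that appears in \eqref{quantiz-error-ARMA}. First I would substitute this sum into the definition of $\zeta_{yt}^\text{q}$ in \eqref{MSE-ARMA-QQ} and expand the resulting double sum over the two noise indices. Because the per-iteration subtractive dither makes every realization $\bbn_\tau^\text{q}$ zero-mean, mutually independent across $\tau$, and independent of $\bbx$ and $\bbw_0$, all mixed terms vanish under the expectation and $\mathbb{E}[\bbn_\tau^\text{q}(\bbn_\tau^\text{q})^\text{H}]=\sigma^2_{\text{q}}\bbI_{NK}$ by \eqref{statis-prop-quantiz}. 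This collapses the MSE to the single sum $\zeta_{yt}^\text{q}=\frac{\sigma^2_{\text{q}}}{N}\sum_{\tau=0}^{t-1}\text{tr}\!\big((\boldsymbol 1^\top\otimes\bbI_N)(\boldsymbol\Psi\otimes\bbS)^{t-\tau}\big((\boldsymbol\Psi\otimes\bbS)^{t-\tau}\big)^\text{H}(\boldsymbol 1^\top\otimes\bbI_N)^\text{H}\big)$.

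Next I would simplify each summand with the Kronecker mixed-product rule. Writing $m=t-\tau$, one has $(\boldsymbol\Psi\otimes\bbS)^m=\boldsymbol\Psi^m\otimes\bbS^m$ and $(\boldsymbol 1^\top\otimes\bbI_N)(\boldsymbol\Psi^m\otimes\bbS^m)=(\boldsymbol 1^\top\boldsymbol\Psi^m)\otimes\bbS^m$; since $\boldsymbol\Psi=\text{diag}(\psi_1,\dots,\psi_K)$ is diagonal, forming the product with the conjugate transpose contracts the left Kronecker factor to the scalar $\boldsymbol 1^\top\boldsymbol\Psi^m(\boldsymbol\Psi^m)^\text{H}\boldsymbol 1=\sum_{k=1}^K|\psi_k|^{2m}$, so the $\tau$-th summand equals $\big(\sum_{k=1}^K|\psi_k|^{2m}\big)\text{tr}(\bbS^m(\bbS^m)^\text{H})=\big(\sum_{k=1}^K|\psi_k|^{2m}\big)\|\bbS^m\|_F^2$. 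This yields the exact ARMA counterpart of \eqref{MSE-error-freq-final}, namely $\zeta_{yt}^\text{q}=\frac{\sigma^2_{\text{q}}}{N}\sum_{\tau=0}^{t-1}\big(\sum_{k=1}^K|\psi_k|^{2(t-\tau)}\big)\|\bbS^{t-\tau}\|_F^2$.

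Then I would invoke the two coarse bounds already used in Corollary~\ref{bound-MSE-error-freq-corollary}: $\sum_{k=1}^K|\psi_k|^{2m}\le K\,\psi_\text{max}^{2m}$ and $\|\bbS^m\|_F^2\le N\,\lambda_{\text{max}}^{2m}$, so that the $1/N$ cancels and $\zeta_{yt}^\text{q}\le K\sigma^2_{\text{q}}\sum_{\tau=0}^{t-1}(\psi_\text{max}\lambda_{\text{max}})^{2(t-\tau)}=K\sigma^2_{\text{q}}\sum_{m=1}^{t}(\psi_\text{max}\lambda_{\text{max}})^{2m}$. Summing this finite geometric series — which is legitimate precisely because the stability hypothesis $\psi_\text{max}\lambda_{\text{max}}<1$ guarantees the ratio $(\psi_\text{max}\lambda_{\text{max}})^2\ne1$ — gives $\zeta_{yt}^\text{q}\le K\sigma^2_{\text{q}}\,\frac{(\psi_\text{max}\lambda_{\text{max}})^2-((\psi_\text{max}\lambda_{\text{max}})^2)^{t+1}}{1-(\psi_\text{max}\lambda_{\text{max}})^2}$, which is \eqref{upperbound-MSE-error-ARMA}. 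Letting $t\to\infty$, stability forces $((\psi_\text{max}\lambda_{\text{max}})^2)^{t+1}\to0$, leaving the steady-state bound \eqref{eq.quantizMSE}.

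I expect the main obstacle to be the Kronecker-algebra bookkeeping in the second step — in particular verifying that the pre- and post-multiplication by $\boldsymbol 1^\top\otimes\bbI_N$ genuinely decouples into the scalar branch factor $\sum_k|\psi_k|^{2m}$ times $\bbS^m(\bbS^m)^\text{H}$, with no residual cross-coupling among distinct branches — together with justifying $\|\bbS^m\|_F^2\le N\lambda_{\text{max}}^{2m}$ for a possibly non-normal shift operator. The latter follows from $\|\bbS^m\|_F^2\le N\|\bbS^m\|_2^2\le N\|\bbS\|_2^{2m}$ and the spectral-norm bound on $\bbS$ (becoming the exact identity $\|\bbS^m\|_F^2=\sum_i|\lambda_i|^{2m}$ when $\bbS$ is normal); everything after that is the same geometric-series manipulation already carried out for the FIR filters.
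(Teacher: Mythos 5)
Your proof is correct and yields exactly the stated bounds, but it takes a somewhat different route through the middle of the argument than the paper does. The paper first peels off the aggregation operator with the trace inequality $\tr(\bbA\bbB)\le\|\bbA\|_2\,\tr(\bbB)$ and $\|(\boldsymbol 1\otimes\bbI_N)(\boldsymbol 1^\top\otimes\bbI_N)\|_2=K$, reducing the problem to bounding $\tr\big((\boldsymbol\Psi\otimes\bbS)^{\tau}((\boldsymbol\Psi\otimes\bbS)^{\tau})^{\herm}\big)=\|(\boldsymbol\Psi\otimes\bbS)^{\tau}\|_F^2$, which it then controls via $\|\cdot\|_F\le\sqrt{r}\,\|\cdot\|_2$, submultiplicativity, and $\|\bbA\otimes\bbB\|_2=\|\bbA\|_2\|\bbB\|_2$ before summing the geometric series. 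You instead carry out the contraction with $\boldsymbol 1^\top\otimes\bbI_N$ exactly through the Kronecker mixed-product rule, arriving at the exact identity $\zeta_{yt}^{\text{q}}=\tfrac{\sigma^2_{\text{q}}}{N}\sum_{m=1}^{t}\big(\sum_{k=1}^K|\psi_k|^{2m}\big)\|\bbS^m\|_F^2$, and only then loosen it with $\sum_k|\psi_k|^{2m}\le K\psi_\text{max}^{2m}$ and $\|\bbS^m\|_F^2\le N\lambda_\text{max}^{2m}$. This buys something concrete: the paper applies $\|\bbA\|_F\le\sqrt{r}\|\bbA\|_2$ with ``$r$ at most $N$'' to the $NK\times NK$ matrix $(\boldsymbol\Psi\otimes\bbS)^{\tau}$, whose rank can be as large as $NK$; taken literally, that chain delivers a $K^2$ prefactor (the one that does appear in the time-varying bound \eqref{upperbound-MSE-error-ARMA-time-varying}), whereas your exact contraction shows the factor $K$ claimed in \eqref{upperbound-MSE-error-ARMA} is indeed justified, so your intermediate step is both sharper and cleaner. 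Your identification of $\lambda_\text{max}$ with a bound on $\|\bbS\|_2$ (exact for normal $\bbS$) mirrors the paper's own implicit use, and the final geometric-series summation under $\psi_\text{max}\lambda_\text{max}<1$ and the steady-state limit coincide with the paper's conclusion.
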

\textit{Proof}: See Appendix \ref{Appendix-Proposition-MSE-ARMA}. \qed

Proposition \ref{bound-MSE-error-ARMA} shows that the quantization MSE ${ \zeta}_{yt}^\text{q}$ of ARMA graph filters is upper bounded by a term that depends on the shift operator maximum eigenvalue. At steady-state $t \rightarrow \infty$, the overall ARMA MSE in \eqref{MSE-ARMA} is governed by the quantization MSE ${\zeta}_{yt}^\text{q}$ since the deviation ${\zeta}_{yt}$ from the steady-state  vanishes ${\zeta}_{yt \rightarrow \infty}^* \rightarrow 0$  for convergent stable filters. Therefore, we conclude that a fixed quantization stepsize heavily affects the ARMA filter behavior, which even at the steady-state, although not divergent, might lead to a completely different filtering behavior.

The filtering behavior of the ARMA recursion will not be considerably affected by the quantization noise in the early regime (i.e., small value of $t$) as long as:
\vspace{-0.1cm}
\begin{equation}
 {\zeta}_{yt}^*  \gg {\zeta}_{yt}^\text{q}. 
 %\vspace{-0.15cm}
\end{equation}
However, for larger $t$, this inequality will be violated and the overall ARMA MSE will by dominated by  the quantization MSE ${\zeta}_{yt}^\text{q}$. While we might control \eqref{bound-MSE-error-ARMA} in the design phase of FIR graph filters, we should consider the challenges encountered when designing convergent distributed ARMA filters \cite{IsufiLoukasSimonettoLeus17}, i.e., the difficulty to guarantee an  accuracy-quantization robustness tradeoff. Rephrasing a non-convex design problem akin to \eqref{optimiz-prob-FIR} is possible, but because of non-convexity  that may lead to suboptimal design solutions,   in this work, we tackle this challenge by considering a decreasing quantization stepsize with $t$.

%%%%%%%%%%%%%%%%%%%%%%%%%%%%%%%%%%%%%%%%%%%
%%%%%%%%%%		S U B S E C T I O N 			       %%%%%%%%%%%%
%%%%%%%%%%%%%%%%%%%%%%%%%%%%%%%%%%%%%%%%%%%
\vspace{-0.18cm}
\subsection{Dynamically decreasing  quantization stepsize}\label{Sect-decreas-quantiz-cell-ARMA}
Consider now a dynamic quantization stepsize size $\Delta_{t}$ that decreases with $t$ in a form that the quantization MSE ${\zeta}_{yt}^\textnormal{q}$ decreases with $t$ at least with the rate of the unquantized ARMA error $ {\zeta}_{yt}^*$ in \eqref{MSE-ARMA}. 
The following proposition shows this can be achieved. 
\vspace{-0.15cm}
\begin{theorem}
Consider the ARMA$_K$ graph filter of order $K$ in \eqref{eq.ARMAK_iter_compact-Qunatiz-evolution} with coefficients $\boldsymbol \Psi$ and  $\boldsymbol \varphi$, and quantization error  $\boldsymbol \epsilon_{yt}^\text{q}$.
Let $\psi_\text{max}=\text{max}(|\psi_1|,|\psi_2|, \cdots, |\psi_K|)$ be the ARMA$_K$ coefficient with largest magnitude and let all ARMA$_K$ branches be stable i.e., $\psi_\text{max} \lambda_{\text{max}}  < 1$ for all $k=1 \cdots K$. 
Consider also that the signal is quantized with a uniform quantizer with a   decreasing stepsize over the iterations $t$ as  $ \Delta_{t}=(\psi_\text{max} \lambda_{\text{max}})^{t} \Delta_{0}$. The quantization MSE $ {\zeta}_{yt}^\text{q}$ of the  filter output at iteration $t$ %in  \eqref{MSE-ARMA-QQ} 
is upper bounded by:
\begin{equation}
 {\zeta}_{yt}^\text{q}   \leq   \frac{K  \Delta_{0}}{12} t (\psi_\text{max}  \lambda_{\text{max}})^{2t}     
\label{MSE-epsilon-DynCell-ARMA}
 \vspace{-0.15cm}
 \end{equation}
which at the steady-state  converges to zero (${\zeta}_{yt \rightarrow \infty }^\text{q}  \rightarrow 0$) at a rate of $t (\psi_\text{max}  \lambda_{\text{max}})^{2t}$. 
\label{Prop-MSE-epsilon-DynCell-ARMA}
 \vspace{-0.15cm}
 \end{theorem}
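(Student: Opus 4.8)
\emph{Proof plan.} The plan is to retrace the argument behind Proposition~\ref{Proposition-MSE-ARMA}, but instead of pulling a constant $\sigma_{\text{q}}^2$ out of the sum I would keep the iteration-dependent variance $\sigma_{\text{q},\tau}^2 = \Delta_\tau^2/12$ inside it. Recall from \eqref{eq.ARMAK_iter_compact-Qunatiz} that the accumulated quantization error is $\boldsymbol\epsilon_t^\text{q} = \sum_{\tau=0}^{t-1}(\boldsymbol\Psi\otimes\bbS)^{t-\tau}\bbn_\tau^\text{q}$ and that $\boldsymbol\epsilon_{yt}^\text{q} = (\boldsymbol 1^\top\otimes\bbI_N)\boldsymbol\epsilon_t^\text{q}$.

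First I would expand $\zeta_{yt}^\text{q} = \frac{1}{N}\mathbb{E}[\text{tr}(\boldsymbol\epsilon_{yt}^\text{q}{\boldsymbol\epsilon_{yt}^\text{q}}^{\text{H}})]$. Under subtractive dithering the quantization errors $\bbn_\tau^\text{q}$ are zero-mean, white with covariance $\sigma_{\text{q},\tau}^2\bbI_{NK}$, mutually independent across $\tau$, and independent of $\bbx$ and $\bbw_0$; hence every cross term with $\tau\ne\tau'$ vanishes in expectation and $\zeta_{yt}^\text{q} = \frac{1}{N}\sum_{\tau=0}^{t-1}\sigma_{\text{q},\tau}^2\,\text{tr}((\boldsymbol 1^\top\otimes\bbI_N)(\boldsymbol\Psi\otimes\bbS)^{t-\tau}((\boldsymbol\Psi\otimes\bbS)^{t-\tau})^{\text{H}}(\boldsymbol 1^\top\otimes\bbI_N)^{\text{H}})$. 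Next I would bound each trace exactly as in Proposition~\ref{Proposition-MSE-ARMA}: writing $(\boldsymbol\Psi\otimes\bbS)^{t-\tau} = \boldsymbol\Psi^{t-\tau}\otimes\bbS^{t-\tau}$, using $\|\boldsymbol 1^\top\otimes\bbI_N\|_2 = \sqrt{K}$, $\|\boldsymbol\Psi\|_2 = \psi_{\text{max}}$, $\|\bbS\|_2\le\lambda_{\text{max}}$, and $\text{tr}(\bbM\bbM^{\text{H}})\le\text{rank}(\bbM)\,\|\bbM\|_2^2\le N\|\bbM\|_2^2$ for the $N\times NK$ matrix $\bbM=(\boldsymbol 1^\top\otimes\bbI_N)(\boldsymbol\Psi\otimes\bbS)^{t-\tau}$, each trace is at most $NK(\psi_{\text{max}}\lambda_{\text{max}})^{2(t-\tau)}$, so $\zeta_{yt}^\text{q}\le K\sum_{\tau=0}^{t-1}\sigma_{\text{q},\tau}^2(\psi_{\text{max}}\lambda_{\text{max}})^{2(t-\tau)}$.

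Finally I would substitute the prescribed schedule $\Delta_\tau=(\psi_{\text{max}}\lambda_{\text{max}})^{\tau}\Delta_0$, i.e.\ $\sigma_{\text{q},\tau}^2 = (\psi_{\text{max}}\lambda_{\text{max}})^{2\tau}\Delta_0^2/12$. The two geometric factors cancel inside the sum: each summand equals $\tfrac{\Delta_0^2}{12}(\psi_{\text{max}}\lambda_{\text{max}})^{2t}$, independent of $\tau$, so adding the $t$ terms yields $\zeta_{yt}^\text{q}\le\tfrac{K\Delta_0^2}{12}\,t\,(\psi_{\text{max}}\lambda_{\text{max}})^{2t}$, i.e.\ \eqref{MSE-epsilon-DynCell-ARMA}. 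For the steady-state claim, stability ($\psi_{\text{max}}\lambda_{\text{max}}<1$) makes $(\psi_{\text{max}}\lambda_{\text{max}})^{2t}$ decay geometrically, which overwhelms the linear factor $t$, so $\zeta_{yt}^\text{q}\to 0$ as $t\to\infty$ at the stated rate $t(\psi_{\text{max}}\lambda_{\text{max}})^{2t}$.

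The step I expect to be the crux is the per-iteration trace estimate: the cancellation that produces the factor $t$ works only if both the constant $K$ and the exponent $2(t-\tau)$ in that estimate are sharp, so the Kronecker/spectral-norm bookkeeping must be carried out carefully rather than loosely; the independence of the cross terms used in the first step is precisely what the subtractive-dither model of Section~\ref{sec:back} was introduced to guarantee.
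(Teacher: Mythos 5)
Your proof is correct and follows essentially the same route as the paper's: pull out the factor $K$ coming from $\boldsymbol 1^\top \otimes \bbI_N$, drop the $\tau\neq\tau'$ cross terms via the zero-mean, independent dithered quantization noise, bound each remaining trace by (rank $\le N$) times the squared spectral norm of $(\boldsymbol\Psi\otimes\bbS)^{t-\tau}$ using $\|\boldsymbol\Psi\|_2=\psi_{\text{max}}$ and $\|\bbS\|_2\le\lambda_{\text{max}}$, and then let the stepsize schedule cancel the geometric factors so that the sum collapses to $t$ identical terms. Note that the constant $\Delta_0^2/12$ you obtain is the dimensionally consistent one (since $\sigma_{\text{q},\tau}^2=\Delta_\tau^2/12$); the first-power $\Delta_0$ in the theorem statement and in the paper's final display is a typo.
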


\textit{Proof}: See Appendix \ref{Appendix-Prop-MSE-epsilon-DynCell-ARMA}. \qed

Theorem \ref{Prop-MSE-epsilon-DynCell-ARMA} shows that by adopting a decreasing quantization stepsize, the quantization MSE for the ARMA filters vanishes at the steady-state. This behavior is similar to the convergence error of the unquantized ARMA ${\zeta}_{yt}^*$ and suggests that at steady-state, we can reach the designed filter response. However, the quantization MSE converges with a rate $t (\psi_\text{max}  \lambda_{\text{max}})^{2t}$ instead of $(\psi_\text{max}  \lambda_{\text{max}})^{2t}$. Faster convergence rates can be achieved by decreasing the quantization stepsize at a faster rate over time but this requires transmitting more bits  for larger values of $t$. %In the current subquantization setting, the communication cost of the ARMA$_K$ graph filter per iteration is $\ccalO(MKb_t)$, where $b_t$ is the number of bits transmitted at iteration $t$. If $b$ is the average number of bits transmitted over $T$ iterations, the ARMA$_K$ communication complexity is of order $\ccalO(MKTb)$.

Despite vanishing the quantization MSE at steady-state, the dynamic quantization stepsize comes together with a price. In particular, for large values of $t$, this implies that the quantization stepsize becomes infinitesimal; hence, the number of bits transmitted per round becomes that of the conventional ARMA graph filter [cf. \eqref{eq.ARMAK_iter}] after some iteration number $t \ge t^*$. Nevertheless, this strategy reduces the communication efforts in the first iterations, i.e., we can start with a coarser $\Delta_0$. For $b_t$ being the number of bits transmitted at iteration $t$, the communication cost of the ARMA$_K$ graph filter per iteration is of order $\ccalO(MKb_t)$. If $b$ is the average number of bits transmitted over $T$ iterations, the ARMA$_K$ communication complexity amounts to $\ccalO(MKTb)$. The benefits of following this approach is that the ARMA design is readily available from the unquantized setting \cite{IsufiLoukasSimonettoLeus17}.

A related problem that can be of  interest is to find the best sequence of quantization stepsizes $\Delta_0, \Delta_1, \cdots, \Delta_t$  by taking into account  the constraints of a given  total bit budget $\mathcal{B}$  available and  a  maximum  number of iterations $t_{\text{max}}$ and   where  $ \Delta_{t}=(\psi_\text{max} \lambda_{\text{max}})^{t} \Delta_{0}$.
 Note that the quantization stepsize $\Delta_t$   is defined as the ratio of the   quantization range $r_t$ at iteration $t$ over the number of quantization intervals given by
 %\vspace{-0.1cm}
 %\begin{equation}
    $\Delta_t= {r_t}/{2^{b_t}}     ={ 2\;  \| \bold S^t \; \bold x \|_{\infty}} /{2^{{b_t}}}$,
   % \vspace{-0.15cm}
 %\end{equation}
 where   $\| \bold S^t \; \bold x \|_{\infty}$ is the maximum size of the messages  exchanged  between nodes at iteration $t$ and that
can be upper bounded by $ \| \bold S^t  \; \bold x  \|_{\infty} \leq \| \bold S^t   \|_2  \; \| \bold x \|_{2} \leq  {\rho}^t \; \| \bold x \|_{2} \leq  \| \bold x \|_{2}$
if the shift operator $\bold S$ has a spectral radius bound i.e., $\rho \leq 1$.
Thus, the best sequence of quantization stepsizes can be obtained for $\rho \leq 1$, $\psi_\text{max}  \lambda_{\text{max}} \neq 0$ and $  \psi_\text{max}  \lambda_{\text{max}} <1$ by solving the problem
 %\begin{equation}
$\sum_{t=0}^{t_{\text{max}}} \log_2 \left(\frac{2 \; \| \bold x \|_{2}}{(\psi_\text{max} \lambda_{\text{max}})^{t} \Delta_0}\right)=\mathcal{B}$, 
 %\end{equation}
which implies  $\Delta_0=       2^{\left(1- \frac{\mathcal{B}}{1+t_{\text{max}}}   \right)}\; \| \bold x \|_2 \; (\psi_\text{max} \lambda_{\text{max}})^{-\frac{t_{\text{max}}}{2}}$.

\vspace{-0.001cm}
\section{Quantization analysis over time-varying graphs}\label{sec.tvGraphs}

We now extend the quantization analysis to cases where the graph connectivity changes randomly over the filtering iterations. This scenario is expected to occur in applications  of graph filtering over  WSNs. For our analysis, we consider directly the more general dynamically decreasing quantization stepsize and the random edge sampling model from \cite{IsufiLoukasSimonetto2017}. 
\vspace{-0.05cm}
\begin{definition}[Random edge sampling model \cite{IsufiLoukasSimonetto2017}]\label{def.RES}
 Consider an underlying graph $\mathcal{G}=(\mathcal{V}, \mathcal{E})$. A random  edge  sampling  (RES)  graph
 realization $\mathcal{G}_t=(\mathcal{V}, \mathcal{E}_t)$ of $\mathcal{G}$ is composed of the same set of nodes
 $\mathcal{V}$  and a random set of  links $\mathcal{E}_t \subseteq \mathcal{E}$  
 that are activated (i.e., $(i,j)\in  \mathcal{E}_t$)  with a probability $p_{ij}$ ($0 < p_{ij}\leq 1$).
 The links  are activated independently over  the graph and time and 
 are mutually independent from the graph signal.
 \label{def-res-model}
 \vspace{-0.15cm}
\end{definition}

We consider the RES graph realization to model the link losses that occur at each filter iteration. As such, the RES model states that  the realization $\mathcal{G}_t=(\mathcal{V}, \mathcal{E}_t)$ at iteration $t$ is drawn 
from the underlying connectivity graph $\mathcal{G}=(\mathcal{V}, \mathcal{E})$, where
 the links $\mathcal{E}_t \subseteq \mathcal{E}$ are generated via an i.i.d. Bernoulli process with  probability $p_{ij}$. 
Let then $\bold P \in \mathbb R^{N \times N} $ denote the matrix that collects the link activation probabilities $p_{ij}$. 
Let also $\bold S$, $\bold S_t$, and $\bar{\bold S}$ denote, respectively, the shift operator of the underlying graph $\mathcal{G}$, the graph realization $\mathcal{G}_t$ at iteration $t$, and the expected graph $\bar{\mathcal{G}}$. Since graph $\mathcal{G}$ has an upper bounded shift operator $\|\bbS\|_2 \le \rho$,  all its realizations $\mathcal{G}_t$ have also  an upper bounded shift operator $\| \bold S_t  \|_2  {\leq} \| \bold S  \|_2  {\leq} \rho$  \cite{GamaIsufiLeus2018,HOPPENMonsalve2019}.

Before, we proceed with the filter analysis, the following remark is in order. Under  the  RES model, %the expected shift operator $\bar{\bold S}$ is such that  $\bar{\bold S}=\mathbb{E}[\bold S_t]=\bold P \circ \bold S$
$\text{if} \; \bold S= \bold A \;\text{then}$ the expected shift operator is $\bar{\bold S}=\mathbb{E}[\bold A_t]=\bold P \circ \bold A$.
If  $\bold S= \bold L$, then the expected shift operator\footnote{Note that if  $\bold P$ has equal rows so that  $p_{ij}=p_i \; \forall \; j \; \in \; \mathcal{V}$ or has equal entries i.e. $p_{ij}=p$, we have $\mathbb{E}[\bold L_t]=\bold P \circ \bold L$.}
%and $\bold P$ has equal rows so that  $p_{ij}=p_i \; \forall \; j  \in \mathcal{V}$ or has equal entries i.e. $p_{ij}=p$, then $\mathbb{E}[\bold L_t]=\bold P \circ \; \bold L$.
%Otherwise for any $\bold P$, 
 is $\bar{\bold S}=\mathbb{E}[\bold L_t]=  \bar{\bbD} - (\bold P \circ \bold A)$, where  $\bar{\bbD}=\mathbb{E}[\bold D_t]$ is a  diagonal matrix whose non zero  entries are given by $[\bar{\bbD}]_{ii}=\sum_{j=1}^{N} a_{ij} p_{ij}$.

%%%%%%%%%%%%%%%%%%%%%%%%%%%%%%%%%%%%%%%%%%%
%%%%%%%%%%		S U B S E C T I O N 			       %%%%%%%%%%%%
%%%%%%%%%%%%%%%%%%%%%%%%%%%%%%%%%%%%%%%%%%%
\vspace{-0.25cm}
\subsection{FIR graph filters}\label{sec.tvGraphs-FIR}

When the FIR filter is run over RES graph realizations, the instantaneous shift operator $\bbS_t$ is present in the filtering expression \eqref{eq.FIRpol} and affects the output.
To characterize this output, let us define the transition matrix of the RES graph realisations $\ccalG_t, \ldots, \ccalG_{t^\prime}$, 
$\bold \Theta(t',t) = \prod_{\tau=t}^{t'}  \bold S_{\tau} \; \text{if} \; t'\geq  t$ %the matrix $\bold \Theta(t',t)$  is given by 
and $\bold I  \; \text{if} \; t' <  t$.
%Consider FIR graph filter performed over time-varying graphs according to the RES graph model 
The FIR filter output over a sequence of $K$ time-varying graphs is: 
\vspace{-0.18cm}
\begin{equation}
 \bold y_{t} =\sum_{k=0}^{K} \phi_k \;\bold \Theta(t-1,t-k) \; \bold x
\label{time-var-NVGF}
\vspace{-0.05cm}
 \end{equation}
where the filter output is computed by considering all graph realizations from the  iteration  $t {-} K$ to $t$.
From the independence of RES graph realizations, the expected FIR output is: 
%%%%%%%%
\vspace{-0.18cm}
\begin{equation}\label{eq.FIRTV_mean}
\begin{aligned}
  \bar{\bold y}_{t} &= \mathbb{E} \big[ \bold y_t \big]=\displaystyle\sum_{k=0}^{K}  \phi_k   {\bar{ \bold S}}^{k}  \bold x %=\displaystyle\sum_{k=0}^{K}  \phi_k   {(\bold P \circ \bold S)}^{k}  \bold x.
\end{aligned} 
\vspace{-0.05cm}
\end{equation}
  %%%%%%%%

As shown in  Appendix \ref{Appendix-FIR-output-varying-graph}, the quantized FIR filter output over RES graph realizations can be written as $ {\bold y}_{t}^\text{q}  =\bby_t +\boldsymbol \epsilon_t$
where the quantization error $\boldsymbol \epsilon_t$ has the expression: 
\vspace{-0.1cm}
\begin{equation}
 \boldsymbol \epsilon_t=   \sum_{k=1}^{K}  \sum_{\kappa=0}^{k-1}   \phi_k  \bold \Theta(t{-}\kappa -1,t{-}k) \;  \bbn_\text{q}^{(\kappa)}.
%\vspace{-0.15cm}
 \end{equation}
The latter accounts for the percolation of the quantization noise $\bbn_\textnormal{q}^{(\kappa)}$ over different random graph realizations.
%
%\noindent The total filtered  error $\boldsymbol \epsilon_t$ due to the quantization effects is given by the second term of \eqref{quantized-output-FIR-Varying-Graph} as follows:
%\begin{equation}
%\begin{aligned}
%      \boldsymbol \epsilon_t       &= {\bold y}_t^\text{q}  -{\bold y}_{t}  =\sum_{k=1}^{K} \sum_{i=0}^{k-1}  \phi_k \; \bold \Theta(t{-}1{-}i,t{-}k)  \; \bbn_\text{q}^{(i)}. \\
%     \end{aligned}
% \end{equation}
%
Since the quantization noise has a zero mean, the expected FIR output with quantization is $ \mathbb{E} \big[ {\bold y}_t^\text{q} \big] = \bar{\bold y}_{t}$ [cf. \eqref{eq.FIRTV_mean}]. That is, in expectation, the FIR graph filter behaves as the filter in \eqref{time-var-NVGF} operating on the expected graph with unquantized data.
%
%By  leveraging the independence between the graph realizations and the quantization error,
%the expected output of the FIR graph filter with quantization
%has the form:
%\begin{equation}
%\begin{aligned}
% \mathbb{E} \big[ {\bold y}_t^\text{q} \big]  &=\bar{\bold y}_{t}+    \sum_{k=1}^{K} \sum_{i=0}^{k-1} \phi_k   \bigg(\prod_{\tau=t-1-i}^{t-k} \mathbb{E} \big[ \bold S_{\tau} \big] \bigg)   \mathbb{E}[\bbn_\text{q}^{(i)}]=\bar{\bold y}_{t}
%  \end{aligned}
%  \label{expectedOutputQuantiz}
%  \end{equation}

To quantify the statistical impact of the quantization noise, we analyze the second order moment of the quantized output $\bby_t^\text{q}$ in the following proposition. %The following proposition provides an upper bound on the MSE.
\vspace{-0.01cm}
\begin{proposition}
Consider the FIR graph filter operating over the RES graph realizations $\ccalG_t$ [cf. Def. \ref{def.RES}] with shift operators $\bbS_t$ upper bounded as $\|\bbS_t\|_2 \le \rho$. Let also the filter input signal be quantized with a dynamic  quantization stepsize size $\Delta_t$ at iteration $t$. The MSE of the filter output due to quantization and graph randomness $\zeta^\textnormal{q}_t=  \mathbb{E}[\frac{1}{N} \text{tr}({\boldsymbol \epsilon_t} {\boldsymbol \epsilon_t}^H )]$ is upper bounded by:
\vspace{-0.15cm}
\begin{equation}
\begin{aligned}
  \zeta^\text{q}_t   &  \leq   \frac{1}{12} \sum_{\kappa=1}^{K} \Delta_{\kappa-1}^2  \bigg(\sum_{k=\kappa}^{K} \rho^{k-\kappa+1} | \phi_k | \bigg)^2.\\
  %\zeta^\text{q}_t   &  \leq    \sum_{i=1}^{K} \sigma_{\text{q},i-1}^2  \bigg(\sum_{k=i}^{K} \rho^{k-i+1} | \phi_k | \bigg)^2.\\
   % \zeta^\text{q}_t   &  \leq    \sum_{i=1}^{K} \sigma_{\text{q},i-1}^2  \bigg(\sum_{k=i}^{K} \rho^{k-i+1} | \phi_k | \bigg)^2.\\
  \end{aligned}
 \label{eq_prop18bound}
\end{equation}
\label{Prop-MSE-Vary-graph}
\vspace{-0.15cm}
\end{proposition}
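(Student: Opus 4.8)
The plan is to expand the quantization error $\boldsymbol \epsilon_t$ as a linear combination of the mutually independent, zero-mean dithered-quantization noise vectors $\bbn_\text{q}^{(0)},\dots,\bbn_\text{q}^{(K-1)}$, use this independence to collapse the second moment into a single sum, and then bound each term by submultiplicativity of the norms together with the uniform bound $\|\bbS_\tau\|_2\le\rho$. First I would swap the order of summation in $\boldsymbol \epsilon_t=\sum_{k=1}^{K}\sum_{\kappa=0}^{k-1}\phi_k\,\bold \Theta(t-\kappa-1,t-k)\,\bbn_\text{q}^{(\kappa)}$ so as to group the terms by the noise index $\kappa$, obtaining $\boldsymbol \epsilon_t=\sum_{\kappa=0}^{K-1}\bbM_\kappa\,\bbn_\text{q}^{(\kappa)}$ with $\bbM_\kappa:=\sum_{k=\kappa+1}^{K}\phi_k\,\bold \Theta(t-\kappa-1,t-k)$; note that $\bold \Theta(t-\kappa-1,t-k)$ is a product of exactly $k-\kappa$ shift operators of the RES realizations.

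Next, from $\frac1N\text{tr}(\boldsymbol \epsilon_t\boldsymbol \epsilon_t^H)=\frac1N\sum_{\kappa,\kappa'}\text{tr}\big(\bbM_{\kappa'}^H\bbM_\kappa\,\bbn_\text{q}^{(\kappa)}(\bbn_\text{q}^{(\kappa')})^H\big)$, I would condition on the graph realizations and invoke the dithered-quantization model: each $\bbn_\text{q}^{(\kappa)}$ is zero-mean, mutually independent across rounds, and independent of the RES realizations, so the conditional expectation of $\bbn_\text{q}^{(\kappa)}(\bbn_\text{q}^{(\kappa')})^H$ is $\mathbf{0}$ for $\kappa\ne\kappa'$ and $\frac{\Delta_\kappa^2}{12}\bbI$ for $\kappa=\kappa'$. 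The cross terms therefore vanish and
\begin{equation}
 \zeta^\text{q}_t=\frac1N\,\mathbb{E}\big[\text{tr}(\boldsymbol \epsilon_t\boldsymbol \epsilon_t^H)\big]=\frac1{12N}\sum_{\kappa=0}^{K-1}\Delta_\kappa^2\,\mathbb{E}\big[\|\bbM_\kappa\|_F^2\big].
\end{equation}
It then remains to bound $\|\bbM_\kappa\|_F$: by the triangle inequality $\|\bbM_\kappa\|_F\le\sum_{k=\kappa+1}^{K}|\phi_k|\,\|\bold \Theta(t-\kappa-1,t-k)\|_F$, and using $\|AB\|_F\le\|A\|_2\|B\|_F$ with $\|\bbS_\tau\|_2\le\rho$ and $\|\bbS_\tau\|_F\le\sqrt N\,\|\bbS_\tau\|_2$ gives $\|\bold \Theta(t-\kappa-1,t-k)\|_F\le\sqrt N\,\rho^{\,k-\kappa}$ \emph{pointwise} over the graph realizations, hence $\mathbb{E}[\|\bbM_\kappa\|_F^2]\le N\big(\sum_{k=\kappa+1}^{K}\rho^{\,k-\kappa}|\phi_k|\big)^2$. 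Substituting this into the identity above, cancelling $N$, and re-indexing $\kappa\mapsto\kappa-1$ yields the stated bound $\zeta^\text{q}_t\le\frac1{12}\sum_{\kappa=1}^{K}\Delta_{\kappa-1}^2\big(\sum_{k=\kappa}^{K}\rho^{\,k-\kappa+1}|\phi_k|\big)^2$.

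I expect the only genuinely delicate point to be the independence argument used to kill the cross terms: one must be sure that the quantization noise injected at communication round $\kappa$ is independent both of the noise of the other rounds and of the RES graph realizations that subsequently propagate it, which is precisely what the subtractive-dither construction and Definition~\ref{def.RES} (links activated i.i.d.\ and independently of the graph signal) guarantee. Everything else is routine norm algebra; in particular, because the Frobenius-norm bound on $\bold \Theta(t-\kappa-1,t-k)$ holds deterministically for every realization, the outer expectation over the random graphs is trivial and no finer second-moment analysis of the random products $\bold \Theta$ (of the kind used for random graph filters in \cite{IsufiLoukasSimonetto2017}) is required for this worst-case bound.
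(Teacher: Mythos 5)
Your proposal is correct and follows essentially the same route as the paper's proof: group the accumulated error by quantization round, use the zero-mean, round-independent (and graph-independent) dithered noise to cancel cross terms, and bound the resulting deterministic matrix factors by powers of $\rho$ via norm submultiplicativity. The only cosmetic difference is that you carry out the bookkeeping with $\|\bbM_\kappa\|_F \le \sqrt{N}\,\rho^{k-\kappa}\sum_k|\phi_k|$, whereas the paper uses $\tr(\bbA\bbB)\le\|\bbA\|_2\tr(\bbB)$ together with Jensen's inequality on the spectral norm; both yield the identical final bound.
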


\textit{Proof}: See Appendix \ref{Appendix-Prop-MSE-Vary-graph}. \qed
 
Note that Proposition \ref{Prop-MSE-Vary-graph} represents the worst-case bound for the graph randomness. This is similar to the unquantized graph filters over RES graphs \cite{IsufiLoukasSimonetto2017}  because the spectral radius $\rho$ accounts for all potential link losses (it is independent on the probabilities $p_{ij}$). On the other hand, this result serves  as a proxy for the MSE to design a graph filter that is robust to both link losses and quantization error. 

\smallskip
\noindent \textbf{Filter design.} %\red{I would be to the point here! say how you will do the design by words in one sentence and then move on to the math!}
Our goal is to design the filter coefficients $\phi_0, \ldots, \phi_K$ to reduce the quantization MSE in \eqref{eq_prop18bound} while keeping the quantized graph filter output $ \bby_{t}^\text{q} $ close in expectation to the unquantized output over the deterministic graph $\ccalG$; we denote the latter as $\bby^\diamond=\sum_{k = 0}^K \phi_k^\diamond \bbS^k\bbx$. 
   Then, let us consider the expected error due to quantization (bias):
%\vspace{-0.1cm}
\begin{equation}
 \bar{\bold e}=\mathbb{E} \big[ \bby_{t}^\text{q}   -  \bold y^\diamond \big]=\mathbb{E} \big[ \bby_{t}^\text{q}  \big] -  \bold y^\diamond. 
%\vspace{-0.15cm}
 \end{equation}
While we can design the coefficients to minimize this bias, they will not account for the deviation around it. Therefore, we consider the more involved problem of  finding the filter coefficients as a  trade-off between 
 the expected error of the filter output and the quantization MSE. For this, let us define the filtering matrix difference $ \bar {\bold E}$: 
 \vspace{-0.15cm}
  \begin{equation}\label{eq:matrixU}
\bar {\bold E}= \sum_{k=0}^{K} \left(  \phi_k  \; {\bar{ \bold S}}^{k}   -   \phi_k^\text{$\diamond$}   \; \bold S^{k}  \right) 
 \vspace{-0.15cm}
 \end{equation}
 that accounts for the response difference between the graph filtering over the expected graph $\bar{\mathcal{G}}$ and the graph filtering over the deterministic graph $\mathcal{G}$. Then, we find the filter coefficients by solving the convex problem: 
\vspace{-0.15cm}
 \begin{equation}
\begin{array}{ll}
\underset{ \phi_k}{\text{minimize}}   \left  \| \bar {\bold E} \right \|_F   +  \displaystyle \frac{\gamma}{12}  \sum_{\kappa=1}^{K} \Delta_{\kappa-1}^2  \bigg(\sum_{k=\kappa}^{K} \rho^{k-\kappa+1} | \phi_k | \bigg)^2
\label{optimiz-prob}
\end{array}
\vspace{-0.15cm}
\end{equation}
\noindent where  $\|  \bar {\bold E} \|_F$ is the Frobenius norm of \eqref{eq:matrixU}
 and $\gamma$ is a weighting factor trading-off the expected error and the quantization MSE.

 %%%%%%%%%%%%%%%%%%%%%%%%%%%%%%%%%%%%%%%%%%%
%%%%%%%%%%		S U B S E C T I O N 			       %%%%%%%%%%%%
%%%%%%%%%%%%%%%%%%%%%%%%%%%%%%%%%%%%%%%%%%%

 \vspace{-0.35cm}
\subsection{ARMA graph filters}
The parallel ARMA filter operating over random graphs has the branches outputs: % \cite{IsufiLoukasSimonetto2017}:
%\red{shall we remove the equation of z; and focus only on y, i.e., w? no need to get repetitive here!}
\vspace{-0.1cm}
\begin{align}\label{eq.ARMAK_iter_compact-time-varying}
\begin{split}
\bbw_t &= (\boldsymbol \Psi  \otimes \bbS_{t-1}) \bbw_{t-1} + \boldsymbol \varphi \otimes \bbx\\
%\bby_t &= (\boldsymbol 1^\top  \otimes \bbI_N)  \bbw_t
\end{split}%\quad \for~t \ge 1.
\vspace{-0.15cm}
\end{align}
%
% \noindent with $\bby_t=[{\bby_t^{(1)}}^\top,{\bby_t^{(2)}}^{\top},\cdots, {\bby_t^{(K)}}^{\top} ]^{\top}$
% is the $N K \times  1$ stacked vector, $\boldsymbol \Psi=\text{diag}(\psi_1,\psi_2, \cdots, \psi_k)$ is a diagonal
%  $K \times K$ coeffcient matrix, $\boldsymbol \varphi=[\varphi_1,\varphi_2,\cdots, \varphi_k]^\top$
% is $K \times 1$ coeffcient vector and $\bold 1$ is $K \times 1$  the 
% one-vector.
%
which in the presence of quantization noise becomes:
\vspace{-0.1cm}
\begin{align}\label{eq.ARMAK_iter_quantiz-time-varying-compact}
\begin{split}
\bbw_t^\text{q} &= (\boldsymbol \Psi  \otimes \bbS_{t-1}) (\bbw_{t-1}^\text{q}+\bbn_{t-1}^\text{q} ) + \boldsymbol \varphi \otimes \bbx.\\
%\bby_t^q &= (\boldsymbol 1^\top  \otimes \bbI_N)  \bbw_t^q
\end{split}
\vspace{-0.1cm}
\end{align}
By expanding \eqref{eq.ARMAK_iter_quantiz-time-varying-compact} to all the terms, we can write the overall ARMA filter output due to quantization as: 
%\begin{subequations}
\vspace{-0.15cm}
\begin{align}\label{eq.ARMAK_iter_compact-Quantiz-time-varying}
\begin{split}
\bbw_t^\text{q} \! &= \!  \big(\!  \prod_{\tau=0}^{t-1}\! \boldsymbol \Psi  \!\otimes \!\bbS_{\tau} \big) \bbw_{0} \!+\!\boldsymbol \varphi \!\otimes \!\bbx \!+ \!\! \sum_{\tau=1}^{t-1} \! \! \bigg(\! \prod_{\tau^\prime=t-\tau}^{t-1} \!  \boldsymbol \Psi  \! \otimes \!\bbS_{\tau^\prime} \bigg)\!  (\boldsymbol \varphi \!\otimes \!\bbx) %\nonumber
\! + \!\boldsymbol \varepsilon_t^\text{q}\\
%& +\sum_{\tau=0}^{t-1} \bigg(\prod_{\tau^\prime=\tau}^{t-1}  \boldsymbol \Psi  \otimes \bbS_{\tau^\prime} \bigg) \bbn_\tau^\text{q}\\ %\nonumber 
\bby_t^\text{q} &= (\boldsymbol 1^\top  \otimes \bbI_N)  \bbw_t^\text{q}
\end{split}%\quad \for~t \ge 1
\vspace{-0.15cm}
\end{align}
where in order to ease notation, we have denoted by $\boldsymbol \varepsilon_t^\text{q} = \sum_{\tau=0}^{t-1} \big(\prod_{\tau^\prime=\tau}^{t-1}  \boldsymbol \Psi  \otimes \bbS_{\tau^\prime} \big) \bbn_\tau^\text{q}$ the percolation of the quantization noise $\bbn_\tau^\text{q}$ over the parallel ARMA branches up to time $t$.
% \begin{equation}
% \boldsymbol \varepsilon_t^q = \sum_{\tau=0}^{t-1} \big(\prod_{\tau^\prime=\tau}^{t-1}  \boldsymbol \Psi  \otimes \bbS_{\tau^\prime} \big) \bbn_\tau^\text{q}
% \end{equation}
%
Then, let us consider the filter output error $\boldsymbol \varepsilon_{yt}=\bby_t^\text{q} - \bby^*$ from the steady-state expected ARMA output $\bby^*$:
\vspace{-0.15cm}
\begin{equation}
\begin{aligned}
%\boldsymbol \epsilon_{\bby,t} =&\bby_t - \bby^*= (\boldsymbol 1^\top  \otimes \bbI_N)   (\bby_t - \bby^*)\\
\boldsymbol \varepsilon_{yt}=   \boldsymbol \varepsilon_{yt}^\text{q} + \boldsymbol \varepsilon_{yt}^*
\vspace{-0.15cm}
\end{aligned}
\label{quantiz-error-ARMA-time-varying}
\end{equation}
 where $\boldsymbol \varepsilon_{yt}^\text{q}=(\boldsymbol 1^\top  \otimes \bbI_N)  \boldsymbol \varepsilon_t^\text{q}$ is the quantization error on the output; $\boldsymbol \varepsilon_{yt}^*=(\boldsymbol 1^\top  \otimes \bbI_N)  \boldsymbol \varepsilon_t^* $ is the unquantized ARMA graph filter error at iteration $t$ w.r.t. to its steady-state $\bby^*$. Then, let us denote by $\boldsymbol \varepsilon_t^*$  the unquantized ARMA error  w.r.t. to the steady-state $\bbw^*$, which is given by:
 \vspace{-0.15cm}
 \begin{equation}
 \boldsymbol \varepsilon_t^*=\!\bigg(\!\prod_{\tau=0}^{t-1}  \! \boldsymbol \Psi  \otimes \bbS_{\tau} \bigg) \bbw_{0} +\boldsymbol \varphi \otimes \bbx + \!\sum_{\tau=1}^{t-1} \!\!\bigg(\!\!\prod_{\tau^\prime=t-\tau}^{t-1} \!\!\! \boldsymbol \Psi  \otimes \bbS_{\tau^\prime}\!\! \bigg) (\boldsymbol \varphi \otimes \bbx) - \bbw^*.
\vspace{-0.15cm}
 \end{equation}
 % under the the RES model.  
%  \noindent The following proposition provides the expected
% output of ARMA graph filter with quantization effects. 
% \red{further go slightly slower in the equations before and in (52). that area is a bit crowded with notation;}
%  

  Under the RES graph model and  given the zero-mean quantization noise, 
  it can be easily shown from  \eqref{eq.ARMAK_iter_compact-time-varying} and \eqref{eq.ARMAK_iter_quantiz-time-varying-compact} %  and \eqref{ARMAK_iter_compact-time-varying-expected} that  the expected
that $\mathbb{E}[{\bbw}_t^\text{q}]=\mathbb{E}[{\bbw}_t]$; i.e., in expectation both the quantized and unquantized ARMA filters  give the same output.  
However, the quantization  impacts on the second order moment of the filter output error $\boldsymbol \varepsilon_{yt}$
in \eqref{quantiz-error-ARMA-time-varying}.
We analyze  next  the    MSE  of the latter, which  by simple algebra, can be split as: % (see Appendix \ref{Appendix-Equation-MSE-ARMA-time-varying}): 
\vspace{-0.15cm}
  \begin{subequations}
\begin{align} \label{MSE-ARMA-time-varying}
 {  \xi}_{yt}&= \frac{1}{N} \mathbb{E}[ \text{tr}({\boldsymbol \varepsilon}_{yt} {\boldsymbol \varepsilon}_{yt}^\text{H} )]= {\xi}_{yt}^*+{\xi}_{yt}^\text{q}.
 \vspace{-0.15cm}
 \end{align}
 where:
 \vspace{-0.15cm}
  \begin{align}
{  \xi}_{y t}^*&=\frac{1}{N} \mathbb{E}[ \text{tr}((\boldsymbol 1^\top  \otimes \bbI_N) {\boldsymbol \varepsilon}_{t}^* ({\boldsymbol \varepsilon}_{t}^*)^\text{H} (\boldsymbol 1^\top  \otimes \bbI_N)^\text{H})]\\ 
{ \xi}_{y t}^\text{q}&=\frac{1}{N} \mathbb{E}[ \text{tr}((\boldsymbol 1^\top  \otimes \bbI_N) {\boldsymbol \varepsilon}_{t}^\text{q} ({\boldsymbol \varepsilon}_{t}^\text{q})^\text{H} (\boldsymbol 1^\top  \otimes \bbI_N)^\text{H})]\label{MSE-ARMA-time-varying-QQ}
\vspace{-0.15cm}
\end{align}
\end{subequations}
and where we have used  the linearity of the expectation w.r.t the trace, the  cyclic property of the trace,  %$\text{tr}(\bold A \bold B \bold C)=\text{tr}(\bold C \bold A \bold B )$,
 and  the independence of $\bbx$, $\bbw_0$ and $\bbn_{\tau}^\text{q}$.
 ${\xi}_{y t}^*$ is the MSE for the case of unquantized filter  w.r.t. to its steady-state output.
The next proposition provides an upper bound on  the quantization MSE  ${\xi}_{yt}^\text{q}$, when the quantization stepsize $\Delta_{k}$ decreases at each iteration $k$.

\begin{theorem}
Consider the  ARMA$_K$ graph filter operating over RES graph realizations $\ccalG_t$ [cf. Def. \ref{def.RES}]  with shift operators $\bbS_t$ upper bounded as $\|\bbS_t\|_2 \le \rho$.
Let $\psi_\text{max}=\text{max}(|\psi_1|,|\psi_2|, \cdots, |\psi_K|)$ be the ARMA$_K$ coefficient with largest magnitude and let all ARMA$_K$ branches be stable i.e., $\psi_\text{max} \;\rho  < 1$ for all $k=1 \cdots K$.
Let also the filter input signal be quantized with a uniform quantizer 
having a stepsize decreasing over the iterations $t$ as  $\Delta_{t}=(\psi_\text{max}  \;\rho )^{t} \Delta_{0}$. 
% Consider uniform quantizers with $\sigma^2_{qt}=\Delta_{t}^2/12$
% and  the specific choice of the  quantization step $ \Delta_{t}=(\|  \boldsymbol \Psi \|_2   \;\rho )^{t} \Delta_{0}$ and assume   $\|  \boldsymbol \Psi \|_2   \;\rho <1$,
%the MSE due to quantization effects $ {\xi}_{y,t}^\text{q}  $  is upper  bounded by:
 The MSE of the  ARMA filter output at iteration $t$  due to quantization and graph randomness $ {\xi}_{y t}^\text{q}$ can be upper bounded by:
\vspace{-0.15cm}
 \begin{equation}
 {\xi}_{y t}^{\text{q}}   \leq   \frac{K^2\;  \Delta_{0}}{12} t \;( \psi_\text{max}  \; \rho )^{2t}     
\label{MSE-epsilon-DynCell-ARMA-TVGraph}
\vspace{-0.15cm}
\end{equation}
%  with the specific choice of the  quantization step:
%  \begin{equation}
%  \Delta_{\tau}=(|\psi_\text{max}| \lambda_{\text{max}})^{\tau} \Delta_{0}.
%  \label{DynCell-step-ARMA}
%   \end{equation}
 making the quantization MSE converge to zero (${\xi}_{y t \rightarrow \infty }^\text{q}  \rightarrow 0$)  at a rate of $t (\psi_\text{max}   \;\rho)^{2t}$.
\label{Prop-MSE-epsilon-DynCell-ARMA-TVGraph}
\vspace{-0.15cm}
 \end{theorem}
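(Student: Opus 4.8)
\emph{Proof plan.} I would follow the same three-step template as the proof of Theorem~\ref{Prop-MSE-epsilon-DynCell-ARMA}, the only new element being the random transition matrices $\boldsymbol\Theta(t-1,\tau)=\prod_{\tau'=\tau}^{t-1}\bbS_{\tau'}$ that replace the deterministic powers $\bbS^{t-\tau}$. Starting from $\boldsymbol\varepsilon_t^\text{q}=\sum_{\tau=0}^{t-1}\big(\prod_{\tau'=\tau}^{t-1}\boldsymbol\Psi\otimes\bbS_{\tau'}\big)\bbn_\tau^\text{q}=\sum_{\tau=0}^{t-1}\big(\boldsymbol\Psi^{t-\tau}\otimes\boldsymbol\Theta(t-1,\tau)\big)\bbn_\tau^\text{q}$, I would first write the output error blockwise as $\boldsymbol\varepsilon_{yt}^\text{q}=(\boldsymbol 1^\top\otimes\bbI_N)\boldsymbol\varepsilon_t^\text{q}=\sum_{k=1}^{K}\sum_{\tau=0}^{t-1}\psi_k^{\,t-\tau}\,\boldsymbol\Theta(t-1,\tau)\,\bbn_\tau^{\text{q}(k)}$, exploiting that the transition matrix is common to all $K$ branches because they share the same shift operator.

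For the second step I would bound ${\xi}_{yt}^\text{q}=\tfrac1N\mathbb{E}\big[\|\boldsymbol\varepsilon_{yt}^\text{q}\|_2^2\big]$. Applying Cauchy--Schwarz over the $K$ branch contributions gives $\|\boldsymbol\varepsilon_{yt}^\text{q}\|_2^2\le K\sum_{k=1}^{K}\big\|\sum_{\tau}\psi_k^{\,t-\tau}\boldsymbol\Theta(t-1,\tau)\bbn_\tau^{\text{q}(k)}\big\|_2^2$. Conditioning on the RES realizations and using that the dithered quantization noise is zero-mean, white across iterations, and independent of the random topology, the inner expectation collapses to $\sum_{\tau=0}^{t-1}\psi_k^{2(t-\tau)}\,\tfrac{\Delta_\tau^2}{12}\,\mathbb{E}\big[\text{tr}\big(\boldsymbol\Theta(t-1,\tau)\boldsymbol\Theta(t-1,\tau)^\text{H}\big)\big]$; here I would invoke $\mathbb{E}\big[\|\boldsymbol\Theta(t-1,\tau)\|_F^2\big]\le N\,\mathbb{E}\big[\|\boldsymbol\Theta(t-1,\tau)\|_2^2\big]\le N\rho^{2(t-\tau)}$, which holds for every realization by submultiplicativity of the spectral norm together with $\|\bbS_{\tau'}\|_2\le\rho$, so the graph randomness drops out of the bound entirely. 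Collecting terms yields ${\xi}_{yt}^\text{q}\le\tfrac{K^2}{12}\sum_{\tau=0}^{t-1}\Delta_\tau^2\,(\psi_\text{max}\rho)^{2(t-\tau)}$.

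Finally I would substitute the schedule $\Delta_\tau=(\psi_\text{max}\rho)^{\tau}\Delta_0$, which makes each summand equal to $\Delta_0^2(\psi_\text{max}\rho)^{2t}$, independent of $\tau$; summing the $t$ identical terms gives ${\xi}_{yt}^\text{q}\le\tfrac{K^2}{12}\,t\,(\psi_\text{max}\rho)^{2t}\Delta_0^2$, i.e. \eqref{MSE-epsilon-DynCell-ARMA-TVGraph} (with $\Delta_0^2\le\Delta_0$ in the small-stepsize regime, as in Theorem~\ref{Prop-MSE-epsilon-DynCell-ARMA}), and since $\psi_\text{max}\rho<1$ we get $t(\psi_\text{max}\rho)^{2t}\to0$, hence ${\xi}_{yt\to\infty}^\text{q}\to0$. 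I expect the delicate points to be bookkeeping rather than conceptual: one must (i) be sure that temporal whiteness \emph{and} independence of the quantization noise from the random shift operators are precisely what make all cross terms vanish when $\mathbb{E}[\|\cdot\|_2^2]$ is expanded, and (ii) pair $\bbn_\tau^\text{q}$ with the matching stepsize $\Delta_\tau$ so the geometric factors telescope into exactly $t$ equal contributions rather than a convergent geometric series; the non-commuting random Kronecker products themselves cause no difficulty, since only their worst-case spectral norm enters and that norm is the same for every RES realization.
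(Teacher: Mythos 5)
Your proposal is correct and takes essentially the same route as the paper's proof in Appendix~\ref{Appendix-Prop-MSE-epsilon-DynCell-ARMA-TVGraph}: both expand the accumulated error $\boldsymbol \varepsilon_t^\text{q}$, use the zero-mean, temporally white quantization noise (independent of the RES realizations) to cancel all cross terms, bound each random transition product by the worst-case spectral norm $\rho^{t-\tau}$ together with $|\psi_k|\le\psi_\text{max}$, and let the schedule $\Delta_\tau=(\psi_\text{max}\rho)^{\tau}\Delta_0$ collapse the sum into $t$ identical terms; your Kronecker mixed-product step and branch-wise Cauchy--Schwarz are merely a reorganization of the paper's chain $\|(\boldsymbol 1^\top\otimes\bbI_N)^\text{H}(\boldsymbol 1^\top\otimes\bbI_N)\|_2=K$, $\text{tr}(\bbA\bbB)\le\|\bbA\|_2\,\text{tr}(\bbB)$, and Jensen's inequality, producing the same $K^2$ factor. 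Your observation that the computation naturally yields $\Delta_0^2$ rather than $\Delta_0$ is also consistent with the paper, whose own derivation reaches $\Delta_{t-\tau}^2(\psi_\text{max}\rho)^{2\tau}=\Delta_0^2(\psi_\text{max}\rho)^{2t}$ before the bound is stated with $\Delta_0$.
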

 \textit{Proof}: See Appendix \ref{Appendix-Prop-MSE-epsilon-DynCell-ARMA-TVGraph}. \qed

 %Note that for a fixed quantization stepsize  the quantization noise MSE ${\xi}_{y t}^\text{q}$ can only be  upper bounded (see Appendix \ref{Appendix-Proposition-MSE-ARMA-time-varying}).
 
  Theorem \ref{Prop-MSE-epsilon-DynCell-ARMA-TVGraph} highlights that  the quantization MSE converges to zero
 when using a decreasing quantization stepsize, despite the random topological changes and the presence  of quantization.
 This implies that there is no  need to consider the quantization MSE in the design phase.
 However, contrarily to  time-invariant graphs,  the overall MSE of ARMA filters, which is affected by both the quantization ${\xi}_{yt}^\text{q}$ and the random variation part ${\xi}_{yt}^\text{*}$, can not reach the desired filter response  at steady-state ($t \rightarrow \infty$), because
 even if the quantization MSE ${\xi}_{yt}^\text{q}$ can be made to converge  to zero, the unquantized MSE ${\xi}_{yt}^\text{*}$ does not converge  to zero due to graph topological changes. 

 Similarly to time-invariant graphs in Section~\ref{Sect-decreas-quantiz-cell-ARMA}, where we consider the constraints of a given   total bit budget $\mathcal{B}$  available and  a  maximum  number of iterations $t_{\text{max}}$,  the best sequence of quantization stepsizes  is
  given by  $\Delta_0=       2^{\left(1- \frac{\mathcal{B}}{1+t_{\text{max}}}   \right)}\; \| \bold x \|_2 \; (\psi_\text{max} \rho)^{-\frac{t_{\text{max}}}{2}}$ and   $\Delta_{t}=(\psi_\text{max}  \;\rho )^{t} \Delta_{0}$ for $\rho \leq 1$ and $ 0 < \psi_\text{max}  \;\rho  < 1 $.

  \begin{corollaryT}
% Consider the  ARMA$_K$ graph filter operating over RES graph realizations $\ccalG_t$ [cf. Def. \ref{def.RES}]  with shift operators $\bbS_t$ upper bounded as $\|\bbS_t\|_2 \le \rho$.
% Let $\psi_\text{max}=\text{max}(|\psi_1|,|\psi_2|, \cdots, |\psi_K|)$ be the ARMA$_K$ coefficient with largest magnitude and let all ARMA$_K$ branches be stable i.e., $\psi_\text{max} \;\rho  < 1$ for all $k=1 \cdots K$.
Consider  same settings as Theorem \ref{Prop-MSE-epsilon-DynCell-ARMA-TVGraph} with 
 the  input signal  quantized with a uniform quantizer having a  fixed quantization stepsize  $\Delta$.
The MSE of the filter output due to quantization and graph randomness ${\xi}_{y t}^\text{q}$ can be upper bounded by: 
\vspace{-0.15cm}
\begin{equation}   \label{upperbound-MSE-error-ARMA-time-varying}
{\xi}_{y t}^\text{q}  \leq K^2  \sigma^2_{\text{q}}   \frac{  ( \psi_\text{max} \;\rho)^2 - [ (\psi_\text{max} \;\rho)^2]^{t+1}   }{  1-  (\psi_\text{max} \;\rho)^{2}  }   
 \vspace{-0.15cm}
 \end{equation}
 \noindent which in the steady-state ($t \rightarrow \infty$) becomes:
 \vspace{-0.15cm}
 \begin{equation}
{\xi}_{y t \rightarrow \infty}^\text{q}  \leq K^2  \sigma^2_{\text{q}}   \frac{  (\psi_\text{max}  \;\rho)^2   }{  1-  (\psi_\text{max} \;\rho)^{2}  }   
 \vspace{-0.18cm}
 \end{equation}
 \label{Proposition-MSE-ARMA-time-varying}
\end{corollaryT}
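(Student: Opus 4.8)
The statement is the fixed-stepsize counterpart of Theorem~\ref{Prop-MSE-epsilon-DynCell-ARMA-TVGraph} (equivalently, the time-varying counterpart of Proposition~\ref{Proposition-MSE-ARMA}), so the plan is to replay the proof of Theorem~\ref{Prop-MSE-epsilon-DynCell-ARMA-TVGraph} with the constant choice $\Delta_\tau\equiv\Delta$, i.e.\ $\sigma_\text{q}^2=\Delta^2/12$ at every iteration, and to carry the resulting geometric sum to the end. First I would isolate the percolated quantization noise $\boldsymbol\varepsilon_t^\text{q}=\sum_{\tau=0}^{t-1}\big(\prod_{\tau'=\tau}^{t-1}\boldsymbol\Psi\otimes\bbS_{\tau'}\big)\bbn_\tau^\text{q}$ that appears in \eqref{eq.ARMAK_iter_compact-Quantiz-time-varying}, and use the Kronecker mixed-product rule together with the fact that $\boldsymbol\Psi$ is a fixed diagonal matrix to write $\prod_{\tau'=\tau}^{t-1}(\boldsymbol\Psi\otimes\bbS_{\tau'})=\boldsymbol\Psi^{t-\tau}\otimes\bold\Theta(t-1,\tau)$, hence $\boldsymbol\varepsilon_{yt}^\text{q}=(\boldsymbol 1^\top\otimes\bbI_N)\boldsymbol\varepsilon_t^\text{q}=\sum_{\tau=0}^{t-1}(\boldsymbol 1^\top\otimes\bbI_N)\big(\boldsymbol\Psi^{t-\tau}\otimes\bold\Theta(t-1,\tau)\big)\bbn_\tau^\text{q}$.

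Next I would expand ${\xi}_{yt}^\text{q}=\tfrac1N\,\mathbb E[\text{tr}(\boldsymbol\varepsilon_{yt}^\text{q}(\boldsymbol\varepsilon_{yt}^\text{q})^\text{H})]$ from \eqref{MSE-ARMA-time-varying-QQ}, conditioning on the RES realizations and taking the expectation over the dither/quantization noise first. Since the subtractive-dither quantization noise is zero-mean, mutually independent across iterations (and across branches), and independent of the topology [cf.\ \eqref{statis-prop-quantiz} and Def.~\ref{def.RES}], every cross term with $\tau\neq\tau'$ vanishes and $\mathbb E[\bbn_\tau^\text{q}(\bbn_\tau^\text{q})^\text{H}]=\sigma_\text{q}^2\bbI_{NK}$; the double sum collapses to ${\xi}_{yt}^\text{q}=\tfrac{\sigma_\text{q}^2}{N}\sum_{\tau=0}^{t-1}\mathbb E\big[\big\|(\boldsymbol 1^\top\otimes\bbI_N)(\boldsymbol\Psi^{t-\tau}\otimes\bold\Theta(t-1,\tau))\big\|_F^2\big]$.

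The remaining step is to bound each summand. Using $\|AB\|_F\le\|A\|_2\|B\|_F$ with $A=\boldsymbol 1^\top\otimes\bbI_N$ (so $\|A\|_2^2=K$) and $B=\boldsymbol\Psi^{t-\tau}\otimes\bold\Theta(t-1,\tau)$, and $\|B\|_F^2=\|\boldsymbol\Psi^{t-\tau}\|_F^2\,\|\bold\Theta(t-1,\tau)\|_F^2\le\big(K\,\psi_\text{max}^{2(t-\tau)}\big)\big(N\,\rho^{2(t-\tau)}\big)$, where the last factor follows from $\|\bold\Theta(t-1,\tau)\|_F^2\le N\|\bold\Theta(t-1,\tau)\|_2^2$ and the sub-multiplicativity $\|\bold\Theta(t-1,\tau)\|_2\le\prod_{\tau'=\tau}^{t-1}\|\bbS_{\tau'}\|_2\le\rho^{t-\tau}$, I get $\big\|(\boldsymbol 1^\top\otimes\bbI_N)(\boldsymbol\Psi^{t-\tau}\otimes\bold\Theta(t-1,\tau))\big\|_F^2\le K^2 N(\psi_\text{max}\rho)^{2(t-\tau)}$; the two separate factors $K$ (one from $\boldsymbol 1^\top\otimes\bbI_N$, one from $\boldsymbol\Psi^{t-\tau}$) are what produce the $K^2$ in the statement. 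Hence ${\xi}_{yt}^\text{q}\le K^2\sigma_\text{q}^2\sum_{\tau=0}^{t-1}(\psi_\text{max}\rho)^{2(t-\tau)}$, and re-indexing $s=t-\tau$ and summing the finite geometric series — legitimate because $\psi_\text{max}\rho<1$ by the stability hypothesis — gives $\sum_{s=1}^{t}(\psi_\text{max}\rho)^{2s}=\frac{(\psi_\text{max}\rho)^2-[(\psi_\text{max}\rho)^2]^{t+1}}{1-(\psi_\text{max}\rho)^2}$, which is exactly \eqref{upperbound-MSE-error-ARMA-time-varying}; letting $t\to\infty$, the term $[(\psi_\text{max}\rho)^2]^{t+1}$ vanishes and yields the steady-state bound.

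I expect the only real care points to be (i) the non-commutativity of the $\bbS_{\tau'}$ inside $\bold\Theta(t-1,\tau)$, which is a time-ordered product and must not be treated as a matrix power, and (ii) the bookkeeping that all noise cross terms — across iterations and across the $K$ parallel branches — drop, which relies on the full independence structure of the subtractive-dither noise and its independence from the RES realizations. As in Proposition~\ref{Prop-MSE-Vary-graph} and \cite{IsufiLoukasSimonetto2017}, the resulting bound is a worst case over the link activations: it depends only on the spectral-radius bound $\rho$ and not on the activation probabilities $p_{ij}$, so it cannot be sharpened in the design phase by exploiting the $p_{ij}$.
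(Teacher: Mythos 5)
Your proposal is correct and takes essentially the same approach as the paper: the paper's proof of this corollary simply specializes the intermediate bound \eqref{Proof-MSE-ARMA-quantiz-time-varying-final-dyn} from the proof of Theorem~\ref{Prop-MSE-epsilon-DynCell-ARMA-TVGraph} to a constant stepsize, obtaining ${\xi}_{yt}^{\text{q}} \le K^2\sigma_{\text{q}}^2 \sum_{\tau=1}^{t}(\psi_{\text{max}}\,\rho)^{2\tau}$ and summing the finite geometric series, which is exactly the sum you arrive at. The only difference is cosmetic: you re-derive that per-term bound from scratch via the Kronecker factorization $\boldsymbol{\Psi}^{t-\tau}\otimes\boldsymbol{\Theta}(t-1,\tau)$ and Frobenius-norm identities, whereas the paper's appendix reaches the same $K^2\sigma_{\text{q}}^2(\psi_{\text{max}}\,\rho)^{2(t-\tau)}$ per term through trace inequalities and Jensen's inequality on the spectral norm before summing the identical geometric series.
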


 \begin{proof}
 By considering a fixed quantization stepsize $\Delta$,  the  upper bound of the  MSE of ARMA filter due to quantization and graph randomness in \eqref{Proof-MSE-ARMA-quantiz-time-varying-final-dyn} becomes: 
 \vspace{-0.18cm}
 \begin{equation}
\begin{aligned}
%{ \xi}_{y,t}^\text{q} &\leq K^2 \sigma_\text{q}^2  \sum_{\tau=0}^{t-1}  \big((|\psi_\text{max}| \; \rho)^2 \big)^{t-\tau} \leq K^2 \sigma_\text{q}^2  \sum_{\tau=1}^{t}  \big((|\psi_\text{max}| \; \rho)^2 \big)^{\tau}   
{ \xi}_{y t}^\text{q} & \leq K^2 \sigma_\text{q}^2  \sum_{\tau=1}^{t}  \big(\psi_\text{max} \; \rho\big)^{2\tau}   
\end{aligned}
\label{Proof-MSE-ARMA-quantiz-time-varying-final}
\vspace{-0.15cm}
\end{equation}

\noindent  By considering
the upper bound in \eqref{Proof-MSE-ARMA-quantiz-time-varying-final} is   finite geometric series  with argument smaller than 1, 
${ \xi}_{y t}^\text{q}$  can be upper bounded by \eqref{upperbound-MSE-error-ARMA-time-varying}.
\vspace{-0.15cm}
 \end{proof}

%  The vanishing quantization MSE of the ARMA graph filter has the major benefit that at steady-state it behaves as the unquantized ARMA filter. 
%  As such, it allows designing the ARMA filter in an unquantized fashion, 

%  This also means that the optimization
%   problem \eqref{optimiz-prob-ARMA} will now  only have  to consider  the expected error of the filter output and minimize
%   $\| \bar {\bold U} \|_F $ to find the optimal filter coefficients. Fortunately, the new resulting optimization problem 
%   can be easily solved since it is    convex.

% \noindent where $\gamma$ is a weighting factor between the expected error and the  quantization MSE. In other words, problem (\ref{optimiz-prob}) finds the filter coefficients as a  trade-off between 
%  the expected error of the filter output and the quantization MSE.
% 

%%%%%%%%%%%%%%%%%%%%%%%%%%%%%%%%%%%%%%%%%%%
%%%%%%%%%%		S E C T I O N 					 %%%%%%%%%%%
%%%%%%%%%%%%%%%%%%%%%%%%%%%%%%%%%%%%%%%%%%%
\begin{figure}[t]
\begin{center}
\includegraphics[scale=0.38]{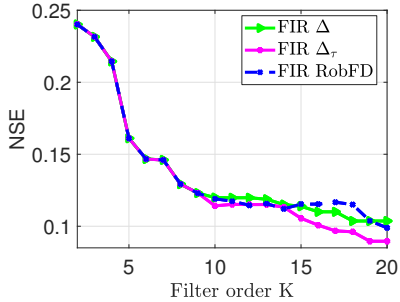}
\end{center}
\caption{NSE  of  FIR graph filters over time-invariant graphs, when approximating an ideal low-pass filter. 
 The  filter coefficients  are optimized by solving \eqref{optimiz-prob-FIR}, where the maximum number of bits at each iteration is $\chi=32$ bits. 
 The results are compared to the Robust  Filter  Design (RobFD) solution \cite{ChamonRibeiro2017}.
}
 \label{fig-lowPassFilterFIR}
 \vspace{-0.2cm}
\end{figure}

\vspace{-0.2cm}
\section{Numerical Experiments}\label{sect-results}
This section corroborates our theoretical findings with numerical  experiments on both synthetic and real data from 
the Molene and the Intel Berkely sensor network.

\vspace{-0.2cm}
\subsection{Synthetic data}
We consider   WSNs with $N=100$  sensor nodes, which are randomly and uniformly distributed over
 a square  area of side $150$ m.
 Each node can communicate with the  neighbors within the transmission range $R=50$ m. The latter forms a communication network that can be used to perform distributed graph filtering operations. In the sequel, we evaluate the quantized filters in the baseline ideal-low pass filter and signal denoising. To account for the graph randomness, we averaged the results over 1000 different realizations.
% with a probability $p=0.95$.
%The  results of the conducted experiments are  averaged over 1000 different  realizations.

\smallskip
\noindent\textbf{Ideal low-pass filter.}
We considered the FIR graph filter to approximate an ideal low-pass filter with 
 frequency response $h(\lambda)=1$ if $  \lambda \leq \lambda_c$ and zero otherwise. The shift operator is the normalized Laplacian $\bold \bbL_\text{n}=\bbD^{-1/2} \bbL \bbD^{-1/2}$ and the cut off frequency $\lambda_c$ is half the spectrum. The input signal $\bold x$ has a white unitary spectrum w.r.t. the underlying graph $\mathcal{G}$.

% We analyze the  Normalized  Squared Error NSE$=\| \bby^{\text{q}} -  \bold \bby \|_2^2/ \|\bold \bby \|_2^2$  between the quantized output 
% and the unquantized output of the FIR graph filter.
 \begin{figure}[t]
\begin{center}
\includegraphics[scale=0.38]{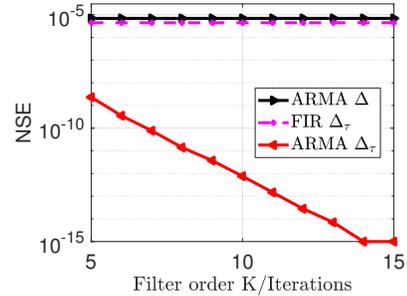}
\end{center}
\caption{ NSE  between the quantized output 
and the unquantized output of FIR and ARMA  filtering over time-invariant graphs for the Tikhonov denoising problem, where  $\bold S=\bold \bbL_\text{n}$ and  $w=0.3$.
 The FIR filter coefficients are optimized by solving \eqref{optimiz-prob-FIR},  with $\chi=25$ bits. 
The x-axis is the filter order for the FIR filter and the number of iterations for the ARMA filter.
 }
\label{fig-FIRARMAFixDynStepDenoisTIVG}
\vspace{-0.15cm}
\end{figure}

   \begin{figure*}[t]
\begin{center}
\subfloat[$p=0.95$, $\chi=15$ bits]{\includegraphics[scale=0.38]{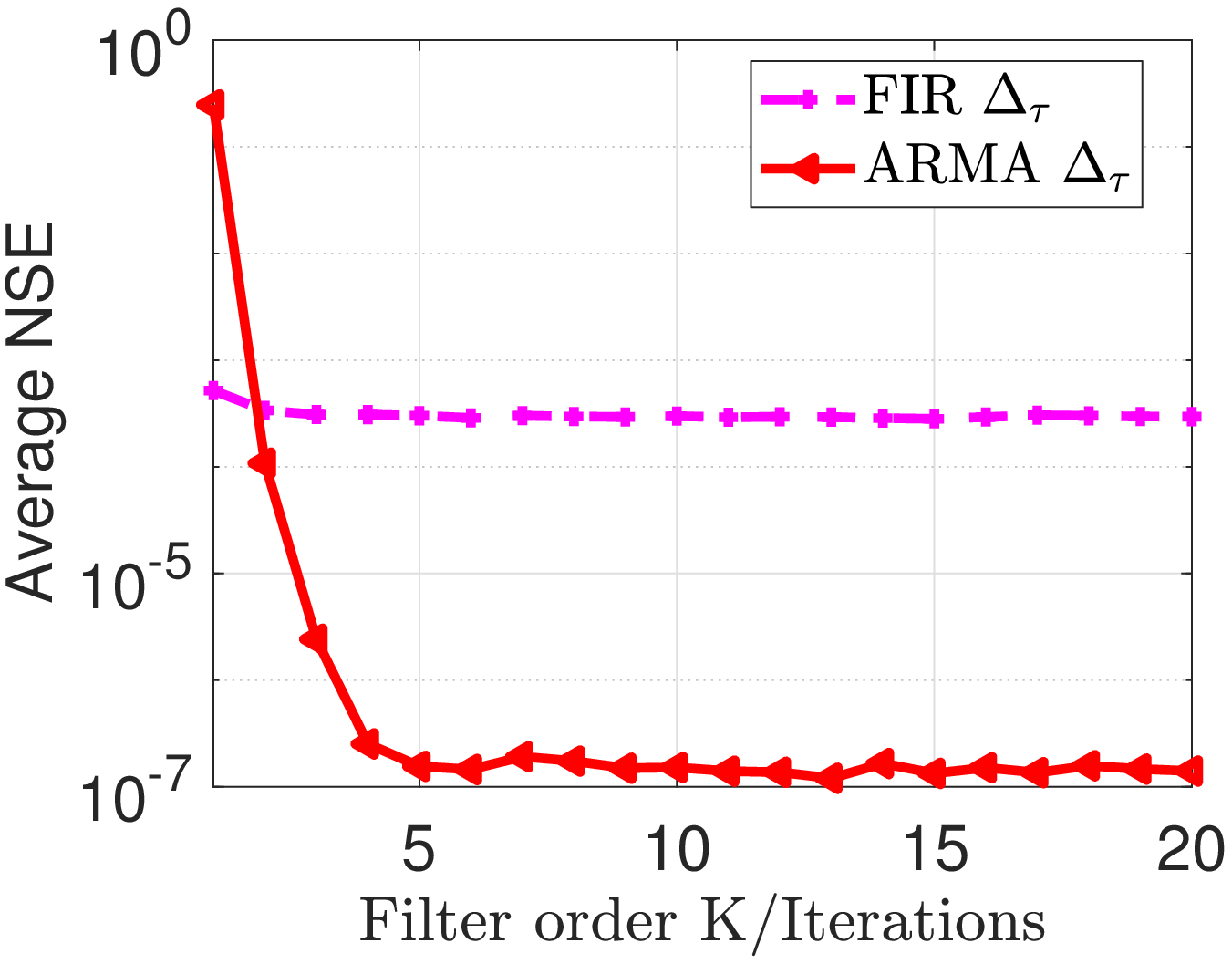}\label{fig-FIRARMADynStepDenoisTVaryG}}
\hspace{0.5cm}\subfloat[$\chi=15$ bits]{\includegraphics[scale=0.38]{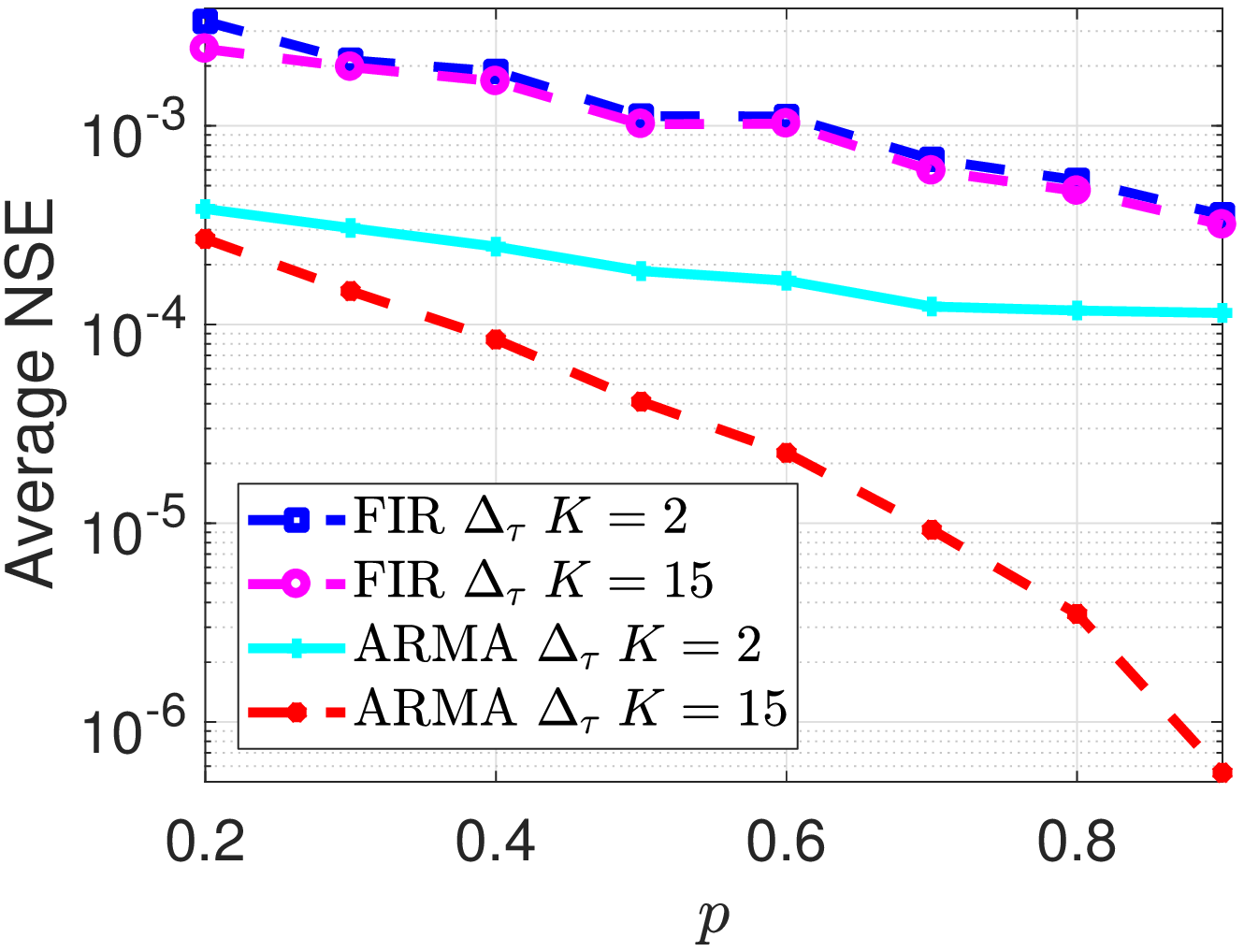}\label{fig-FIRARMADynStepDenoisTVaryGprob}}
\hspace{0.5cm}\subfloat[$K=10$]{\includegraphics[scale=0.38]{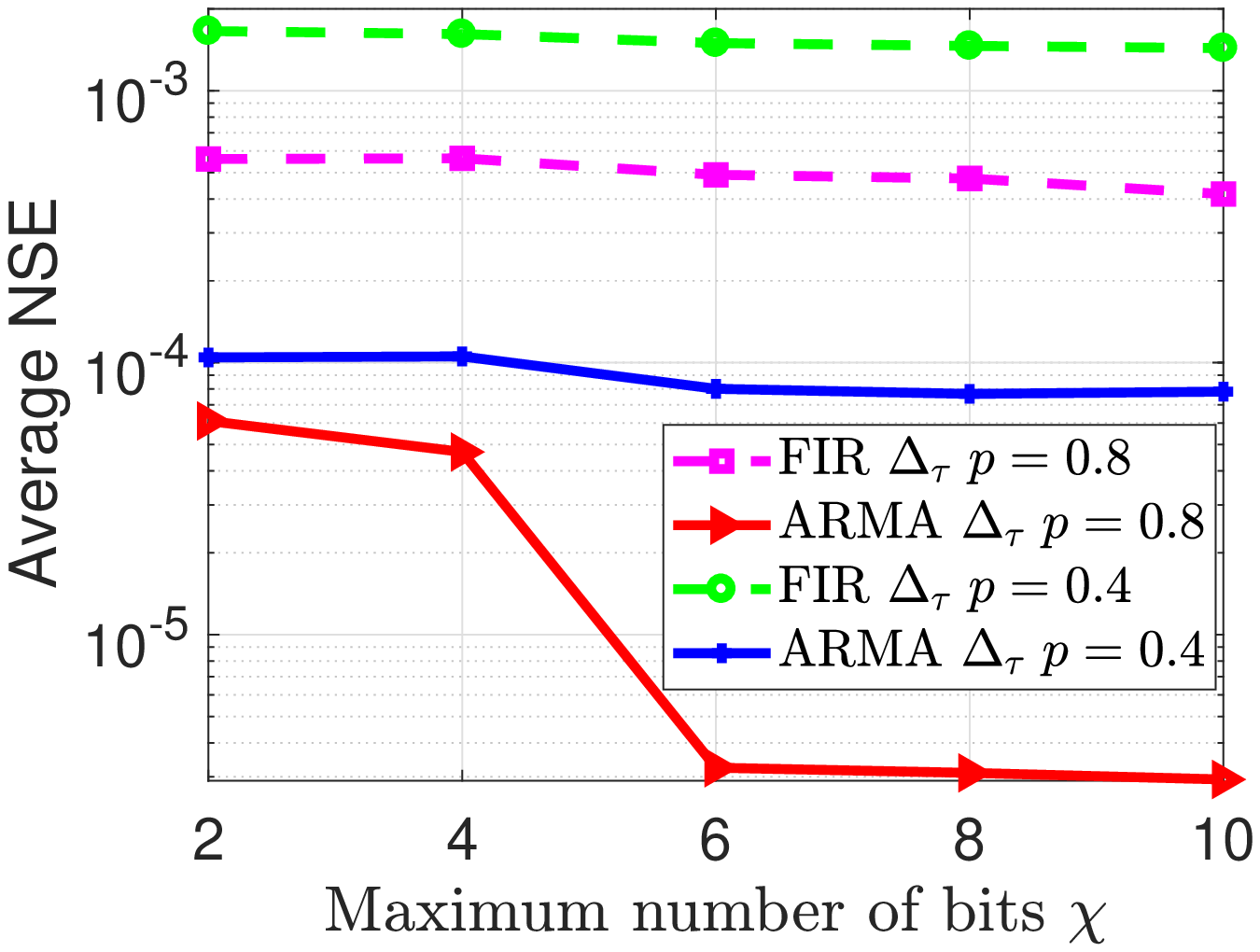}\label{fig-FIRARMADynStepDenoisTVaryGMaxBit}}
\end{center}
\caption{
(a)(b)(c) Average NSE  between the  quantized output over time-varying graph
and the unquantized output over deterministic graph of  FIR and ARMA  filters for Tikhonov denoising problem, where $\bold S=\lambda_{\text{max}}^{-1}\bbL$ and $w=0.25$.
 The FIR filter coefficients are optimized by solving \eqref{optimiz-prob}.
(a) Average NSE  vs. filter order $K$/iterations,  where $p=0.95$ and $\chi=15$ bits. The x-axis is the filter order for FIR filter and the number of iterations for  ARMA filter.
(b) Average NSE  vs.  the probability of link activation $p$, where $\chi=15$ bits.
(c) Average NSE  vs.  the maximum number of bits $\chi$  at each iteration, where $K=10$.
}
\label{fig-denoising-TVaryG}
\vspace{-0.15cm}
\end{figure*}
 
  \begin{figure*}[ht]
\begin{center}
\subfloat[]{\includegraphics[scale=0.38]{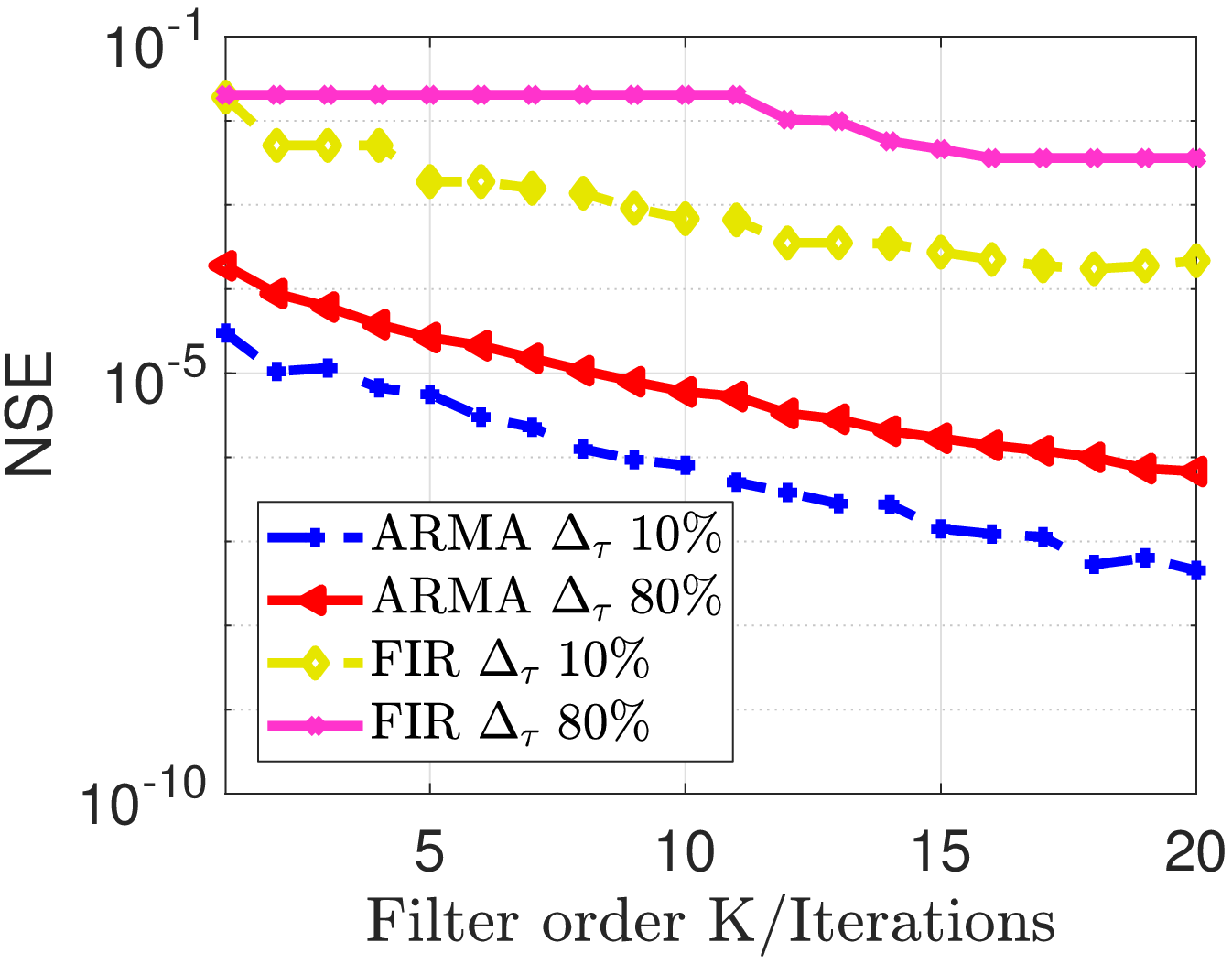}\label{fig-realDataMoleneBrestInterpolation}}%\hspace{0.7cm}
\hspace{0.15cm}\subfloat[]{\includegraphics[scale=0.38]{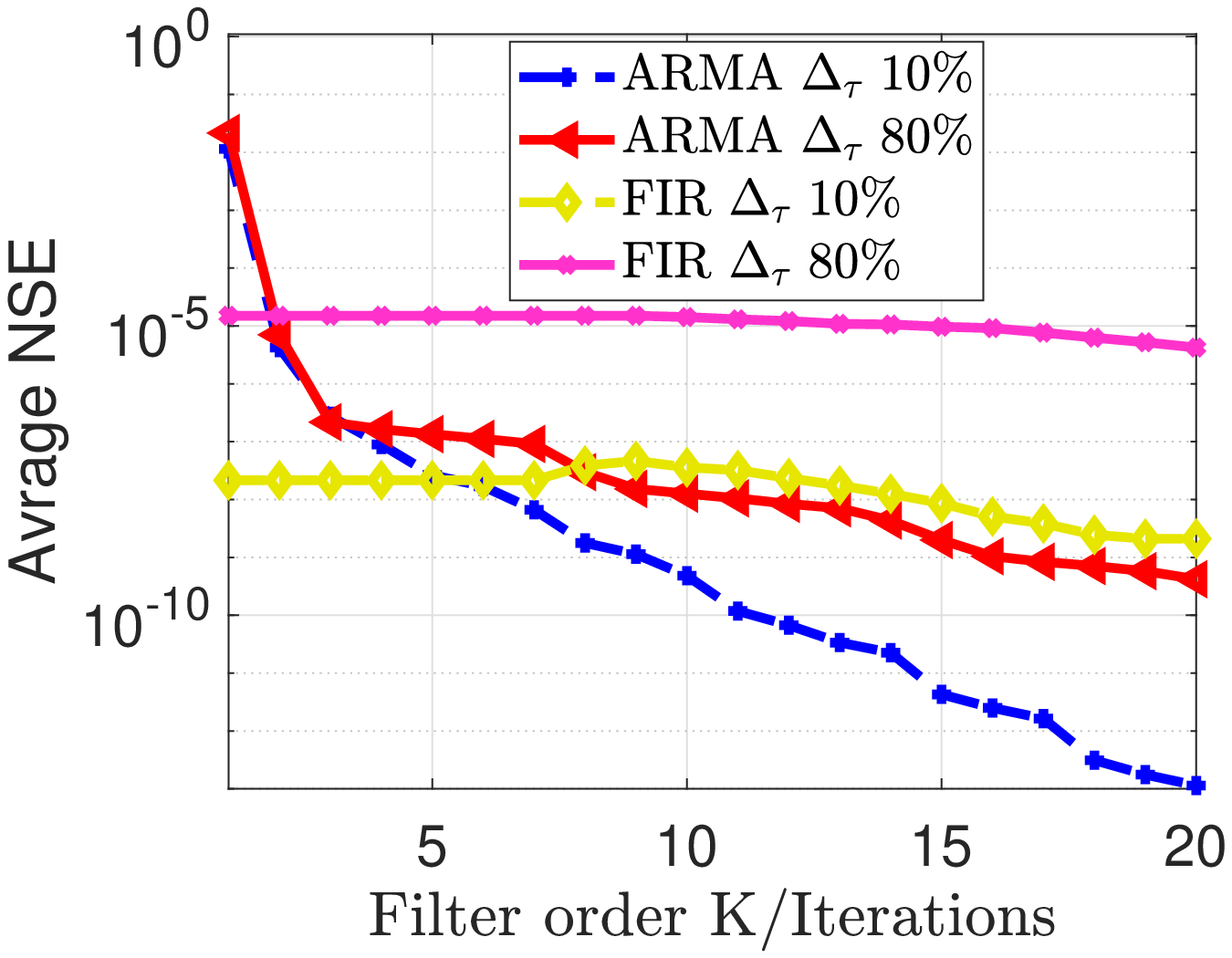}\label{fig-realDataIntelDataInterpolation}}%\hspace{0.7cm}
\hspace{0.1cm}\subfloat[]{\includegraphics[scale=0.38]{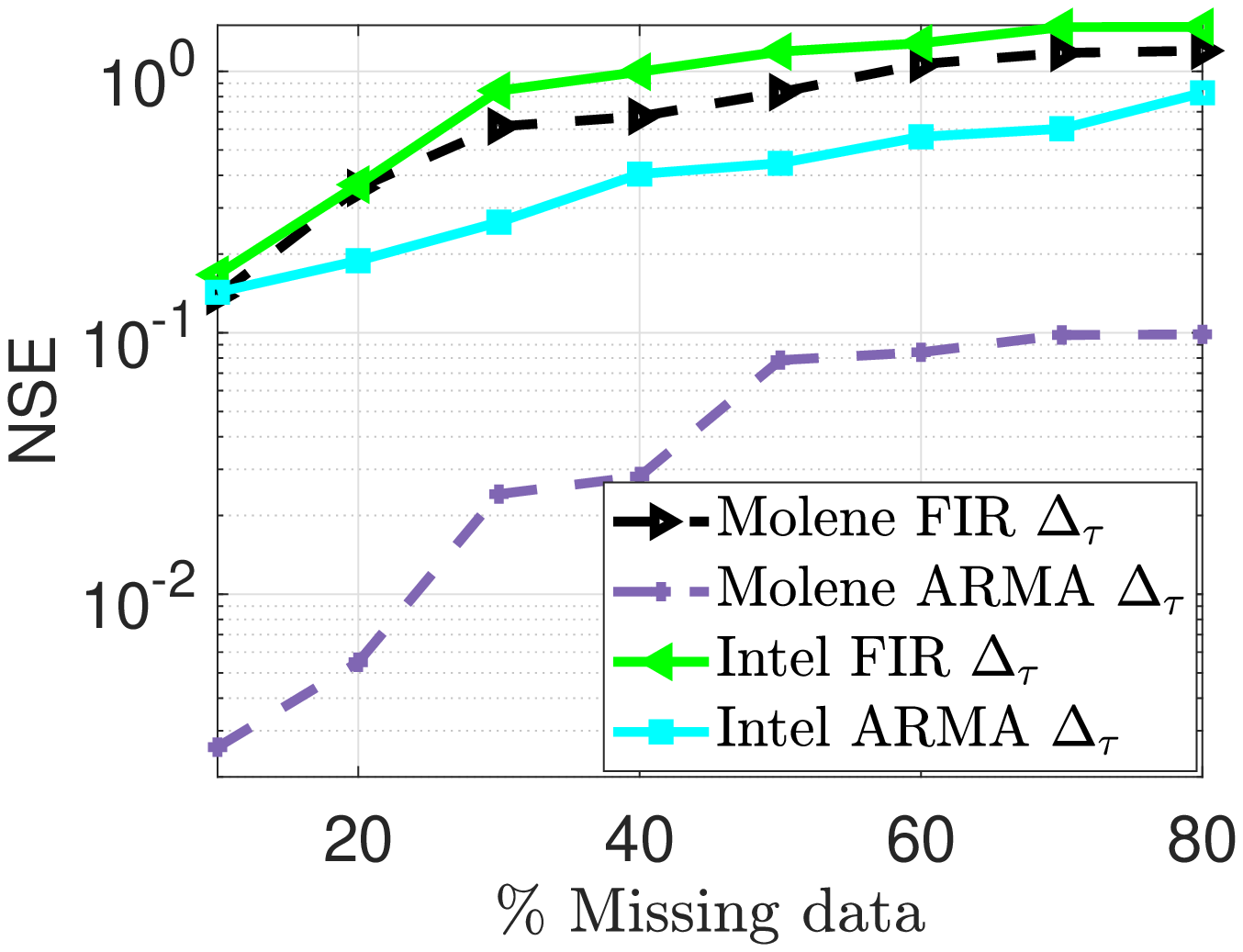}\label{fig-realDataReconstruction}}
\end{center}
\caption{
(a)  NSE  between the  quantized output 
and the unquantized output of  graph  filters, when interpolating  the missing temperature values in Molene weather data set
and  where $\bbS=\bbL_\text{n}$.  
The FIR filter coefficients are optimized by solving \eqref{optimiz-prob-FIR}.
(b) Average NSE  between the  quantized output 
and the unquantized output of  graph  filters, when interpolating the missing light values in Intel
  Lab  data set  and where $\bbS=\lambda_{\text{max}}^{-1}\bbL$.
The FIR filter coefficients are optimized by solving \eqref{optimiz-prob}.
(a)-(b) The x-axis is  the number of iterations for  ARMA filter. 
(c) NSE  between the  quantized output and the graph signal to be reconstructed for different data set vs. percentage of missing data after $20$ filter iterations.
(a)(b)(c) The parameters are $w=0.3$ and $\chi=15$ bits.
}
\label{fig-interpolation-realData}
\vspace{-0.15cm}
\end{figure*}

Fig. \ref{fig-lowPassFilterFIR} shows the  Normalized  Squared Error NSE $=\| \hby^{\text{q}} -  \hby \|_2^2/ \| \hby \|_2^2$
between the quantized output  $\hby^{\text{q}}$ and the unquantized desired signal  $\hby$
in the graph frequency domain, when the FIR filters run over time-invariant graphs. 
The filter coefficients of FIR graph filters with fixed  and decreasing quantization stepsizes are optimized by solving \eqref{optimiz-prob-FIR}.
The results show that both  our designed FIR graph filters and the Robust  Filter  Design (RobFD)  \cite{ChamonRibeiro2017} achieve similar 
performance  for low filter orders ($K<10$). However, the FIR graph filter with decreasing quantization stepsize 
performs better than the other two alternatives for higher filter orders.

\smallskip
%\subsubsection{Tikhonov denoising}
\noindent\textbf{Tikhonov denoising.}
We now evaluate the performance of the proposed solutions in distributed denoising.
We assume a noisy graph signal $\bold x=\bold z+ \bold n$, where $\bold z$ is the  signal of interest and 
 $\bold n$ is a zero mean additive noise.
 To recover  signal $\bold z$, %,  which is assumed to be smooth with  respect to the underlying graph \cite{Shuman2013}, 
 we solve the  Tikhonov denoising problem:
 \vspace{-0.05cm}
 \begin{equation}\label{eq_Tikhonov_denoising}
 \bold z^{*}=  \underset{\bold z \in {\mathbb{R}}^N} {\text{argmin}} \| \bold x -\bold z    \|_2^2  + w \; \bold z^\top  \bold S \bold z
 %\vspace{-0.1cm}
 \end{equation}
 for   $\bold S=\bbL$ or $\bbS=\bbL_\text{n}$, and where the regularizer $\bold z^\top  \bold S \bold z$ is  based on the prior assumption  the graph signal varies  smoothly with  respect to the underlying graph and $w$ is the weighting factor trading smoothness and noise removal \cite{Shuman2013}.
 The  closed-form solution of (\ref{eq_Tikhonov_denoising})  is an  ARMA$_{1}$  filter $\bold z^{*}=(\bbI+w\bbS)^{-1} \bbz$ with coefficients $\psi=-w$ and $\varphi=1$ \cite{IsufiLoukasSimonettoLeus17}. Hence, we can employ said filter to solve distributively the Tikhonov denoising problem. 
%Fig. \ref{fig-FIRARMAFixDynStepDenoisTIVG} and Fig. \ref{fig-denoising-TVaryG} show 

In Fig. \ref{fig-FIRARMAFixDynStepDenoisTIVG}, we compare the NSE$=\| \bby_t^{\text{q}} -   \bby_t \|_2^2/ \| \bby_t \|_2^2$ between the quantized  
and the unquantized outputs  of FIR  and  ARMA graph filters over time-invariant graphs. The noise in this instance is zero-mean Gaussian with variance $\sigma^2=0.2$.
%, given respectively by
%NSE=$\| \bby_t^{\text{q}} -   \bby_t \|_2^2/ \| \bby_t \|_2^2$ and NSE=$\| \bby_t^{\text{q}} -   \bby_t \|_2^2/ \| \bby_t \|_2^2$.
 %we analyze the NSE =$\| \bby_t^{\text{q}} -  \bold \bby_t \|_2^2/ \|\bold \bby_t \|_2^2$
 %with fixed  and decreasing quantization stepsizes  over time-invariant graphs. %, in term of Normalized  Squared Error NSE NSE =$\| \bby_t^{\text{q}} -  \bold \bby_t \|_2^2/ \|\bold \bby_t \|_2^2$.
%In Fig. \ref{fig-FIRARMAFixDynStepDenoisTIVG}, 
We observe the ARMA graph filter with decreasing quantization stepsize significantly outperforms both the ARMA with fixed quantization stepsize and
the FIR graph filter with optimized filter coefficients and decreasing quantization stepsize. The latter corroborates our finding in Theorem~\ref{Prop-MSE-epsilon-DynCell-ARMA}: ARMA filters reach machine precision with a decreasing quantization stepsize.
% Note also    when adopting  ARMA filter with a fixed-stepsize quantization, 
%  avoiding thus an increase  in the communication overhead, lower filtering accuracy
% is achieved, as compared to FIR filter with decreasing quantization stepsize, and where   more bits are transmitted at each iteration.
% %This highlights the trade-offs between FIR and ARMA filters in terms of fixed and dynamic quantization stepsizes.

 %  $\bold S=\frac{1}{\lambda_{\text{max}}}\bbL - 0.5 \;\bbI$.

 We now evaluate the filters over time-varying graphs,
 by analyzing the average NSE between the  quantized output over the time-varying graph $\bby^\text{q}_t$ and the unquantized output $\bby_t$ over the deterministic graph.
 As shown in Fig. \ref{fig-denoising-TVaryG}~\subref{fig-FIRARMADynStepDenoisTVaryG},
 the ARMA graph filter presents  significantly better performance than   
the FIR graph filter,  when the link activation probability is $p=0.95$
and the quantization stepsize is decreasing over the iterations.
%In Fig. \ref{fig-denoising-TVaryG}~\subref{fig-FIRARMADynStepDenoisTVaryG},
We also observe the average NSE for ARMA filters
reduces considerably when the number of iterations increases.
 %in Fig.~\ref{fig-denoising-TVaryG}~\subref{fig-FIRARMADynStepDenoisTVaryG} 
Notice also the NSE floor of ARMA filter is the value when the signal is quantized with all the available  bits and where  $\Delta_k$ is very small. The latter corresponds to the machine precision accuracy, corroborating our results in Theorem~\ref{Prop-MSE-epsilon-DynCell-ARMA-TVGraph}.

 %\textcolor{red}{EI: Shall we analyze numerically the effect of $\Delta_0$? In theory, we can also start with 2 bits initially i.e., 1/0 and then add more. I think, this will affect the convergence rate. We can display this effect. Further, after some number of iterations, we should transmit the signal quantized with the full bits because $\Delta_k$ will become zero basically.}

%  \begin{figure}[ht!]
% \begin{center}
% \includegraphics[scale=0.4]{Figures/FIRARMADynStepDenoisTVaryG.eps}
% \end{center}
% \caption{ NSE  between the average quantized output 
% and  the unquantized output of  FIR and ARMA  filters with fixed and decreasing quantization step
% over time-varying graphs with probability of link activation $p=0.95$. The x-axis is the number of iterations for  ARMA. The weighting factor is  $w=0.45$.
% The FIR filter coefficients are optimized by solving \eqref{optimiz-prob}. The maximum number of bits exchanged over iterations is 10 bits.
% The shift operator used  is $\bold S=\lambda_{\text{max}}^{-1}\bbL$. }
% \label{fig-FIRARMADynStepDenoisTVaryG}
% \end{figure}

In Figs.~\ref{fig-denoising-TVaryG}~\subref{fig-FIRARMADynStepDenoisTVaryGprob}-\subref{fig-FIRARMADynStepDenoisTVaryGMaxBit}, we analyze  the average NSE  %between the  quantized output over time-varying graphs and the unquantized output over deterministic graph
for different  probabilities of link activation and different maximum numbers  of bits used  for quantization. 
ARMA filters with decreasing quantization stepsize achieves always the highest filtering accuracy with a significant margin compared to other filters.
Fig.~\ref{fig-denoising-TVaryG}~\subref{fig-FIRARMADynStepDenoisTVaryGprob}  shows that, as expected, better link connectivities (higher $p$)
lead to lower errors as expected.
It is also worth noticing that the graph filtering accuracy is less affected by topological changes due to link losses 
 for lower filter orders $K$,  as compared to higher filter orders. This is because the exchanges between nodes through problematic links reduce.
This highlights the trade-off between the filter order and robustness to topological changes: a higher order should be preferred when the topology is highly stable.
Fig.~\ref{fig-denoising-TVaryG}~\subref{fig-FIRARMADynStepDenoisTVaryGMaxBit} shows  that  the average NSE  decreases when the maximum number  of bits used  for quantization at each iteration is higher. 
 This because increasing the quantization bits decreases the quantization stepsize at each iteration, which  reduces as well the quantization errors accumulated among iterations.
We can also observe that  increasing the quantization bits does not lead
necessarily to a noticeable improve of the  filtering accuracy, especially  for low  probability of link activation, as compared to higher  probability of link activation. % , and where the average of NSE decreases significantly.
We attribute this behavior to the large number of links that fall, therefore, the error due to link losses dominates that of quantization.

\vspace{-0.15cm}
 \subsection{Real data}
 We now illustrate the performance of the proposed solutions for the graph signal 
 interpolation task over time-invariant and time-varying topologies with two real data sets.
 
 \smallskip
\noindent\textbf{Molene weather data set.}
The Molene weather data contains hourly observations of temperature
 measurements of $N=32$ weather stations collected  in the region of
 Brest in France, for a total of $744$ hours. 
 We consider a geometric graph  constructed from the node coordinates using
the default nearest neighbor approach, as in \cite{IsufiLoukasPerraudin2019}.

% Let consider  a noizy observation of the graph signal $\bbx$  given by:
% \begin{equation}
%  \bbx'=\bbx+ \bbn
% \end{equation}

Let  $\bbx'$ be the observed graph signal $\bbx$ with missing values.
 % and $\bbn$ a zero-mean Gaussian noise with variance $\sigma^2$.
We aim at reconstructing the overall graph signal $\bbx$ from the 
observations $\bbx'$ by exploiting the smoothness of $\bbx$ over the graph. This problem can be formulated
as \cite{narang2013signal,mao2014image}: 
\vspace{-0.15cm}
 \begin{equation}\label{eq_interpolation}
 {\bbx}^\star= \underset{ {\bold x \in {\mathbb{R}}^N}}    {\text{argmin}} \| \bbT (\bold x -\bold x')    \|_2^2  + w \; \bold x^\top  \bold S \bold x
 \vspace{-0.15cm}
 \end{equation}
 where $\bbT$ is a diagonal matrix with $\bbT_{ii} = 1$ if $x_i$ is known and
$\bbT_{ii} = 0$ otherwise and  $w$ is the weighting factor.
The optimal solution of the convex optimization problem \eqref{eq_interpolation} is 
${\bbx}^\star=(\bbT+w\bbS)^{-1} \bbx'=(\bbI - \tilde{\bbS})^{-1} \bbx'$, which is an
 ARMA$_1$ filter for the shift operator $\tilde{\bbS}=\bbT+w\bbS-\bbI$ \cite{IsufiLoukasSimonettoLeus17}.
 To generate missing  values in the Molene weather data set, we randomly wipe off signal values up to certain percentage.
 Then, we analyze the NSE  between the quantized output 
and the unquantized output of   graph filters,
for different percentages of  missing values.

Fig. \ref{fig-interpolation-realData}~\subref{fig-realDataMoleneBrestInterpolation} shows the NSE decreases considerably at each iteration, particularly for ARMA filters.
It is also  worth noticing  this decrease  enhances  when less data are missing.

%  \begin{figure}[ht!]
% \begin{center}
% \includegraphics[scale=0.4]{Figures/realDataMoleneBrestInterpolation_2.eps}
% \end{center}
% \caption{ NSE  between the  quantized output 
% and the unquantized output of  FIR and  ARMA  filters, when interpolating  missing values in Molene weather data set with different percentage of missing data. 
% The weighting factor is  $w=0.3$.
% The maximum number of bits exchanged over iterations is $15$ bits. 
% The FIR filter coefficients are optimized by solving \eqref{optimiz-prob-FIR}.
% The shift operator  used is  $\bbS=\bbL_\text{n}$.}
% \label{fig-realDataMoleneBrestInterpolation}
% \end{figure}

\smallskip
\noindent\textbf{Intel Lab data set.}
The Intel Berkeley Research Lab data set contains light data of $N=54$   Mica2Dot sensor nodes  distributed in an indoor environment over an area of 1200 $\text{m}^2$ \cite{BodikHong2004}.
The  communication between the sensor nodes  is wireless and prone to   channel noise and interference, leading to  time-varying  graph topological changes due to link losses.
The  probability of link activation of the nodes is about 0.13 with a standard deviation of 0.18.
The underlying graph topology has  high connectivity with an  average node degree  of  47, implying    multiple
communication paths  exist between nodes, helping to make signal exchanged between nodes robust to link losses.

In Fig. \ref{fig-interpolation-realData}~\subref{fig-realDataIntelDataInterpolation}, 
we analyze the average quantized NSE as a function of the missing values for the FIR and ARMA graph filters.
Even though  the graph filtering accuracy is affected by the accumulated quantization errors over iterations and the graph topological changes, ARMA filters provide a significant
decrease in terms of NSE, when the number of iterations grows and the percentage of missing data is low.
 %This indicates that  a high  filtering accuracy is achieved.
 %\red{This shows.... provide some more in-depth explanation}.
Fig. \ref{fig-interpolation-realData}~\subref{fig-realDataReconstruction} depicts the NSE between 
the quantized graph signal output and the true signal
for the two  data sets  as a function of the missing values. 
The results show that for both data sets a quite good performance in terms of signal reconstruction is achieved,  especially
with ARMA graph filters. This stresses our finding in Theorem~\ref{Prop-MSE-epsilon-DynCell-ARMA-TVGraph}, which means that with a decreasing quantization stepsize there is no need to perform a robust ARMA graph filter design since the proposed strategy achieves the optimal steady-state solution.

% \red{ This stresses our finding in Theorem~xx that...
% link this with the main arma results and say no need to design because we can achieve the optimal solution at steady state.}

%  \begin{figure}[ht!]
% \begin{center}
% \includegraphics[scale=0.4]{Figures/realDataIntelDataInterpolation_2.eps}
% \end{center}
% \caption{ NSE  between the  quantized output 
% and the unquantized output of  FIR and ARMA  filters, when interpolating  missing values in  data set of Intel
% Research Berkeley Lab with different percentage of missing data. 
% The weighting factor is  $w=0.3$.
% The FIR filter coefficients are optimized by solving \eqref{optimiz-prob}.
% The maximum number of bits exchanged over iterations is $15$ bits. 
% The shift operator  used is  $\bbS=\lambda_{\text{max}}^{-1}\bbL$.}
% \label{fig-realDataIntelDataInterpolation}
% \end{figure}

%%%%%%%%%%%%%%%%%%%%%%%%%%%%%%%%%%%%%%%%%%%
%%%%%%%%%%		S E C T I O N 					 %%%%%%%%%%%
%%%%%%%%%%%%%%%%%%%%%%%%%%%%%%%%%%%%%%%%%%%
\vspace{-0.25cm}
\section{Conclusion}\label{sect-conclusion}
In this work, we provided a broader analysis of the  quantization effects of both FIR and ARMA graph filters over time-invariant and time-varying graphs. 
We analyzed the impact of fixed and dynamic quantization stepsize on the filtering performance.
For  FIR filters, we first showed  that a dynamic quantization stepsize leads to a  more control on the quantization MSE  than fixed-stepsize quantization
and then we proposed a robust filter design that minimizes the quantization noise.
For ARMA graph filters, we showed that 
decreasing  the  quantization stepsize over iterations  reduces the quantization MSE to zero at steady-state.
We extended our quantization effects analysis of FIR and ARMA  graph filters  to  networks affected by random  topological changes due to link losses and  propose a novel filter design strategy that is robust to quantization and topological changes.
Extensive numerical experiments with synthetic and real data show the  different trade-offs between 
quantization bits, filter order, and robustness to topological randomness,  ultimately, highlighting the efficiency of the proposed solutions.

%the FIR and ARMA graph filters in terms of fixed-stepsize versus dynamically decreasing quantization stepsizes and between the quantization rate and the link loss probability,

As our work puts a new practical paradigm for distributed aspects of graph filters, we identify as relevant future research direction the application of these filters for digital and distributable graph neural networks, network coding, and finite-time consensus.

\vspace{-0.15cm}
\section{Appendix}
%
%\vspace{-0.05cm}
\subsection{Quantized FIR graph filter output}\label{Appendix-FIR-output}
% Considering  ${\bold x}^{(0)} =\bold x $, the exchanged signal between nodes with quantization is: \textcolor{red}{EI: there should not be $y$ here; those should be $x$.} %\red{shorten this to 1 line; no need to refer to the equation; we need space}
% \begin{equation}
%  \begin{aligned}
%  \tilde{\bby}^{(0)} & =Q(\bby^{(0)})=\bby^{(0)}+ \bbn_\text{q}^{(0)} = \bbx + \bbn_\text{q}^{(0)} \\
%   \tilde{\bby}^{(1)} & =Q(\bby^{(1)})=Q(\bbS \tilde{\bby}^{(0)}) =\bbS \bbx+ \bbS \bbn_\text{q}^{(0)} + \bbn_\text{q}^{(1)} \\
%  %\tilde{\bby}^{(2)} & =Q(\bby^{(2)})=Q(\bbS \tilde{\bby}^{(1)}) =\bbS^2 \bbx + \bbS^2 \bbn_\text{q}^{(0)} + \bbS \bbn_\text{q}^{(1)} + \bbn_\text{q}^{(2)} \\
%  \vdots &\\
%  \tilde{\bby}^{(K-1)} & =Q(\bby^{(K-1)})=Q(\bbS \tilde{\bby}^{(K-2)}) =\bbS^{K-1} \bbx \\
%   & + \bbS^{K-1} \bbn_\text{q}^{(0)}  + \bbS^{K-2} \bbn_\text{q}^{(1)} + \cdots + \bbS \bbn_\text{q}^{(K-2)} + \bbn_\text{q}^{(K-1)} \\ 
%  \end{aligned}
%  %\label{quantiaz-vector-exch}
% \end{equation}

Considering  ${\bold x}^{(0)} =\bold x $ and the  quantized message at iteration $k$,
$\tilde{\bold x}^{(k)}= {\bold x}^{(k)}+ \bbn_\text{q}^{(k)}$, the output of the shifted signal with quantization is:
\vspace{-0.1cm}
\begin{small}
\begingroup
\allowdisplaybreaks
\begin{align}\label{quantiaz-vector-exch}
  {\bold x}^{(1)} &= \bold S \tilde{\bold x}^{(0)}  = \bold S  ({\bold x}^{(0)}{+} \bbn_\text{q}^{(0)})= \bold S  {\bold x}^{(0)}{+} \bold S \bbn_\text{q}^{(0)} \nonumber\\
  {\bold x}^{(2)} &= \bold S \tilde{\bold x}^{(1)} %= \bold S_{t{-}2}  ({\bold x}^{(1)}{+} \bbn_\text{q}^{(1)})
  =  \bold S^2 {\bold x}^{(0)}{+}   \bold S^2 \bbn_\text{q}^{(0)}{+}  \bold S \bbn_\text{q}^{(1)} \nonumber\\
  \vdots  & \nonumber  \\
 {\bold x}^{(k)} &= \;   \bold S^k   \; \bold x^{(0)} + \sum_{\kappa=0}^{k-1}  \;  \bold S^{k-\kappa}  \; \bbn_\text{q}^{(\kappa)}, \; \; k \geq 1.
  \end{align}
  \endgroup
\end{small} 
\vspace{-0.25cm}

From (\ref{quantiaz-vector-exch}), the FIR graph filter output with quantization is:
%\red{put all this in 2 lines by introducing the sum operator also for steps 1, 2, and 3}
\vspace{-0.1cm}
\begin{small}
\begin{equation}
 \begin{aligned}
& \bby_\text{q}  =\phi_0 \bbx 
        +   \phi_1 (\bbS \bbx + \bbS \bbn_\text{q}^{(0)})
        + \phi_2 (\bbS^2 \bbx+ \bbS^2 \bbn_\text{q}^{(0)} + \bbS \bbn_\text{q}^{(1)} )
     %  & + \phi_3 (\bbS^3 \bbx + \bbS^3 \bbn_\text{q}^{(0)} + \bbS^2 \bbn_\text{q}^{(1)} + \bbS \bbn_\text{q}^{(2)}  )\\
        + \cdots \\
       & + \phi_k (\bbS^{K} \bbx + \bbS^{K} \bbn_\text{q}^{(0)}  + \bbS^{K-1} \bbn_\text{q}^{(1)} + \cdots + \bbS^2 \bbn_\text{q}^{(K-2)}
        + \bbS \bbn_\text{q}^{(K-1)} )\\
       & = \sum_{k=0}^K \phi_k \bbS^k \bbx +  \sum_{k=1}^K \phi_k  \sum_{\kappa=0}^{k-1} \bbS^{k-\kappa } \bbn_\text{q}^{(\kappa)}. \\
 \end{aligned}
\end{equation}
\end{small}
\vspace{-0.5cm}

\vspace{-0.2cm}
\subsection{Proof of Proposition \ref{Proof-MSE-epsilon}}\label{Appendix-Proof-MSE-epsilon}
 By  applying the GFT on both sides of (\ref{quantiz-error}),  the quantization error has the spectral response:
\vspace{-0.1cm}
\begin{equation}
\begin{aligned}
 \hat{\boldsymbol \epsilon}  &= \sum_{k=1}^K \phi_k  \sum_{\kappa=0}^{k-1} \boldsymbol \Lambda^{k-\kappa }   \hat{\bbn}_\text{q}^{(\kappa)}  \\
\end{aligned}
\label{quantiz-error-freq}
\vspace{-0.15cm}
\end{equation}
 \noindent where $ \hat{\bbn}_\text{q}^{(\kappa)}$
is still i.i.d. with same statistics as $\bbn_\text{q}^{(\kappa)}$ iff $\boldsymbol \Sigma_{\text{q}\kappa} =  \sigma^2_{\text{q}\kappa} \; \bbI$. From the linearity of the expectation and from the matrix property $(\bbA \bbB)^\text{H} = \bbB^\text{H}  \bbA^\text{H}$,
% and substituting  $\hat{\boldsymbol \epsilon}$  with its expression,
the quantization noise covariance matrix becomes:
\vspace{-0.1cm}
\begin{small}
\begin{equation}
\begin{aligned}
&\mathbb{E}[ \hat{\boldsymbol \epsilon} \hat{\boldsymbol \epsilon}^\text{H} ]= %\mathbb{E} \! \left[ \! \bigg(\! \sum_{k=1}^K \!\! \phi_k  \!\! \sum_{\kappa=0}^{k-1} \!\! \boldsymbol \Lambda^{k-\kappa }   \hat{\bbn}_q^{(\kappa)}\! \bigg) \! \bigg(\! \sum_{k=1}^K \! \phi_k \! \sum_{\kappa=0}^{k-1} \! \boldsymbol \Lambda^{k-\kappa }   \hat{\bbn}_q^{(\kappa)} \! \bigg)^\text{H} \right]\\
%& \stackrel{(a)}{=} \!\!\sum_{k_1=1}^K \!\!  \phi_{k_1} \!\! \sum_{\kappa_1=0}^{k_1-1}\sum_{k_2=1}^K  \! \phi_{k_2} \sum_{\kappa_2=0}^{k_2-1} \mathbb{E}\left[\boldsymbol \Lambda^{k_1-\kappa_1 }   \hat{\bbn}_q^{(\kappa_1)} (\hat{\bbn}_q^{(\kappa_2)})^\text{H}  (\boldsymbol \Lambda^{k_2-\kappa_2 })^\text{H} \right]\\
\!\!\sum_{k_1, k_2=1}^K \!\!\!\!\!\phi_{k_1} \phi_{k_2}\!\!\!\! \sum_{\kappa_1=0}^{k_1-1} \sum_{\kappa_2=0}^{k_2-1} \!\!\! \boldsymbol \Lambda^{k_1-\kappa_1 }   \mathbb{E}\!\left[ \hat{\bbn}_\text{q}^{(\kappa_1)} (\hat{\bbn}_\text{q}^{(\kappa_2)})^\text{H} \right] \! (\! \boldsymbol \Lambda^{k_2-\kappa_2 })^\text{H}.
\end{aligned}
\label{part-MSE-error-frequency}
\vspace{-0.3cm}
\end{equation}
\end{small}

\noindent Given the quantization noise has independent realizations and a constant quantization stepsize $\Delta$ for all iterations, we can rewrite \eqref{part-MSE-error-frequency} as:
\vspace{-0.3cm}

\begin{small}
\begin{equation}
\begin{aligned}
&\mathbb{E}[ \hat{\boldsymbol \epsilon} \hat{\boldsymbol \epsilon}^\text{H} ] \!=\!\sum_{k=1}^K  \!\! \phi_{k}  \!\! \sum_{\kappa=0}^{k-1}  \!\!\boldsymbol \Lambda^{k-\kappa }  \boldsymbol \Sigma_{\text{q}\kappa}  (\boldsymbol \Lambda^{k-\kappa})^\text{H} = \! \sigma^2_{\text{q}} \! \sum_{k=1}^K  \!\! \phi_{k}  \!\! \sum_{\kappa=0}^{k-1}  \!\! \boldsymbol \Lambda^{k-\kappa }   (\boldsymbol \Lambda^{k-\kappa})^\text{H}. \\
\end{aligned}
\label{covariance-matrix-FixedStep}
\vspace{-0.15cm}
\end{equation}
\end{small}
\noindent Then, by substituting (\ref{covariance-matrix-FixedStep}) into the MSE expression $\hat{  \zeta}_\text{q} =  \frac{1}{N} \text{tr}(\mathbb{E}[ \hat{\boldsymbol \epsilon} \hat{\boldsymbol \epsilon}^H ])$  and using the relation between the Frobenius norm and the trace $ \|\bbA \|_F= \sqrt{\text{tr}(\bbA \bbA^H)}$, result (\ref{MSE-error-freq-final}) yields.

\vspace{-0.15cm}
\subsection{Proof of Corollary \ref{bound-MSE-error-freq-corollary}} \label{Appendix-bound-MSE-error-freq-corollary}
From \eqref{MSE-error-freq-final} and the relation between the $l_2$-norm and the Frobenius norm $\| \bbA \|_F\leq \sqrt{r} \| \bbA \|_2$ with $r$ the rank of $\bbA$ (at most $N$), $\hat{  \zeta}_\text{q}$ can be upper bounded as:
\begin{equation}
\begin{aligned}
  \hat{  \zeta}_\text{q}  &\leq N \frac{\sigma^2_{\text{q}}}{N}  \sum_{k=1}^K  \phi_{k}  \sum_{\kappa=0}^{k-1}  \| \boldsymbol \Lambda^{k-\kappa } \|_2^2    
    \leq  \sigma^2_{\text{q}} \sum_{k=1}^K  \phi_{k}  \sum_{\kappa=0}^{k-1}  ( \lambda_{\text{max}}^2)^{k-\kappa}. 
   \end{aligned}
   \vspace{-0.15cm}
\end{equation}

\noindent Then, with the index change $\sum_{\tau = 0}^{k-1} a^{k-\tau} =\sum_{\tau = 1}^{k} a^{\tau}$, we obtain the finite geometric series whose argument is different from one by hypothesis; thus, $\hat{  \zeta}_\text{q}$ can be upper bound as in (\ref{bound-MSE-error-freq}).
 Similarly, by  exploiting again the relationship between the $l_2$-norm and Frobenius norm of matrices ($\| \bbA \|_2 \leq  \| \bbA \|_F$) in \eqref{MSE-error-freq-final},  $\hat{  \zeta}_\text{q}$ 
can be likewise lower bounded to obtain (\ref{bound-MSE-error-freq}).

\vspace{-0.15cm}
\subsection{Proof of Proposition \ref{MSE-epsilon-DynCell}} \label{Appendix-MSE-epsilon-DynCell}
By  equivalence to (\ref{MSE-error-freq-final}),  the MSE on the filter output due to the quantization noise has
the form: 
\vspace{-0.15cm}
\begin{equation}
\begin{aligned}
 &\hat{  \zeta}_\text{q}  =\! \frac{1}{N} \! \sum_{k=1}^K  \! \phi_{k}  \! \sum_{\kappa=0}^{k-1} \! \sigma^2_{\text{q} \kappa} \| \boldsymbol \Lambda^{k-\kappa } \|_F^2    
  \leq \! \sum_{k=1}^K \! \phi_{k}  \sum_{\kappa=0}^{k-1} \!\sigma^2_{\text{q} \kappa}   ( \lambda_{\text{max}}^2)^{k-\kappa}.    \\
 \end{aligned}
 \label{MSE-error-freq-DynCell}
 \vspace{-0.15cm}
\end{equation}
By choosing the quantization stepsize $\Delta_{k}=(\lambda_{\text{max}})^{-k} \Delta_{0}$, we have:
% \begin{equation}
% \begin{aligned}
%  \hat{  \zeta}_q  &\leq  \frac{1}{12} \sum_{k=1}^K  \phi_{k}  \sum_{\kappa=0}^{k-1}  ( \lambda_{\text{max}})^{-2k}  \Delta_{(0)}^2 ( \lambda_{\text{max}}^2)^{k-\kappa}    \\
%    &\leq  \frac{\Delta_{(0)}^2}{12} \sum_{k=1}^K  \phi_{k}  \sum_{\kappa=0}^{k-1}  ( \lambda_{\text{max}}^{-2})^{\kappa}      
%  \leq  \frac{\Delta_{(0)}^2}{12} \sum_{k=1}^K  \phi_{k}  \frac{1- ( \lambda_{\text{max}}^{-2})^{k}  }{1-( \lambda_{\text{max}}^{-2}) }       \\
%  & \stackrel{(a)}{\leq}  \frac{\Delta_{(0)}^2}{12} \sum_{k=1}^K  \phi_{k}  \frac{1}{1-( \lambda_{\text{max}}^{-2}) }       
%   \leq  \frac{\Delta_{(0)}^2}{12(1-( \lambda_{\text{max}}^{-2}))}  {\bold 1}^\top \boldsymbol \phi_{1}       
%  \end{aligned}
% \end{equation}
\vspace{-0.15cm}
\begin{equation}
\begin{aligned}
 \hat{  \zeta}_\text{q}   &\leq  \frac{\Delta_{0}^2}{12} \sum_{k=1}^K  \phi_{k}  \sum_{\kappa=0}^{k-1}  ( \lambda_{\text{max}}^{-2})^{\kappa}.     
 \label{proof-bound-MSE-error-freq-DynCell}
 \end{aligned}
 \vspace{-0.15cm}
\end{equation}
The bound \eqref{proof-bound-MSE-error-freq-DynCell} contains geometric series; thus, under the assumption $\lambda_{\text{max}}>1$, we have: 
\vspace{-0.15cm}
\begin{equation}
\begin{aligned}
 \hat{  \zeta}_\text{q}  & \leq  \frac{\Delta_{0}^2}{12} \sum_{k=1}^K  \phi_{k}  \frac{1- ( \lambda_{\text{max}}^{-2})^{k}  }{1-( \lambda_{\text{max}}^{-2}) }      
   \leq  \frac{\Delta_{0}^2}{12} \sum_{k=1}^K  \phi_{k}  \frac{1}{1-( \lambda_{\text{max}}^{-2}) }    \\   
 % &\leq  \frac{\Delta_{(0)}^2}{12(1-( \lambda_{\text{max}}^{-2}))}  {\bold 1}^\top \boldsymbol \phi_{1}       
 \end{aligned}
 \vspace{-0.15cm}
\end{equation}
\noindent where the final bound can be written as \eqref{DynCell-FIR}.

\vspace{-0.15cm}
\subsection{Proof of Proposition \ref{Proposition-MSE-ARMA}} \label{Appendix-Proposition-MSE-ARMA}
By using (\ref{MSE-ARMA-QQ}), the trace cyclic property $\text{tr}(\bold A \bold B \bold C)=\text{tr}(\bold C \bold A \bold B )$, the inequality $\text{tr}(\bold A \bold B)\leq  \| \bold A\|_2 \text{tr}(\bold B)$
--which holds for any positive semi-definite matrix $\bold B \succeq \mathbf{0} $ and square matrix $\bold A$ of appropriate dimensions \cite{SaniukRhodes87}--, and   the linearity of the expectation w.r.t the trace, we can write:
\vspace{-0.15cm}
\begin{equation}
\begin{aligned}
{ \zeta}_{y t}^\text{q}&=\frac{1}{N} \mathbb{E}[ \text{tr}( (\boldsymbol 1^\top  \otimes \bbI_N)^\text{H}) (\boldsymbol 1^\top  \otimes \bbI_N) {\boldsymbol \epsilon}_{t}^\text{q} ({\boldsymbol \epsilon}_{t}^\text{q})^\text{H} ]\\
&\leq \frac{1}{N}  \|(\boldsymbol 1^\top  \otimes \bbI_N)^\text{H} (\boldsymbol 1^\top  \otimes \bbI_N) \|_2 \; \text{tr}( \mathbb{E}[  {\boldsymbol \epsilon}_{t}^\text{q} ({\boldsymbol \epsilon}_{t}^\text{q})^\text{H} ]).
\end{aligned}
\label{Proof-MSE-ARMA-quantiz}
\vspace{-0.15cm}
\end{equation}

\noindent  Then, by substituting $\boldsymbol \epsilon_t^\text{q} = \sum_{\tau = 0}^{t-1} (\boldsymbol \Psi  \otimes \bbS)^{t-\tau} \bbn_{\tau}^\text{q}$ in (\ref{Proof-MSE-ARMA-quantiz}), 
$\mathbb{E}[  {\bbn}_{\tau}^\text{q} ({\bbn}_{\tau}^\text{q})^\text{H} ]= \sigma^2_{\text{q}} \bbI$ which holds for fixed quantization stepsize in each iteration, 
and since $\|(\boldsymbol 1^\top  \otimes \bbI_N)^\text{H} (\boldsymbol 1^\top  \otimes \bbI_N) \|_2 =K$,  we can write:
  \vspace{-0.15cm}
  \begin{equation}
\begin{aligned}
{ \zeta}_{y t}^\text{q}  &\leq  \frac{K \sigma^2_{\text{q}}}{N}   \; \sum_{\tau = 0}^{t-1} \text{tr}\left((\boldsymbol \Psi  \otimes \bbS)^{t-\tau} ((\boldsymbol \Psi  \otimes \bbS)^{t-\tau})^\text{H}\right).
\end{aligned}
\label{Proof-MSE-ARMA-quantiz1}
\vspace{-0.15cm}
\end{equation}
  \noindent  By using  in (\ref{Proof-MSE-ARMA-quantiz1}) the index change $\sum_{\tau = 0}^{t-1} \bbA^{t-\tau} (\bbA^{t-\tau})^\text{H}=\sum_{\tau = 1}^{t} \bbA^{\tau} (\bbA^{\tau})^\text{H}$,
 the Frobenius norm $\| \bbA \|_F=\sqrt{\text{tr}(\bbA  \bbA^\text{H} })$, 
 the inequality $\| \bbA \|_F\leq \sqrt{r} \| \bbA \|_2$, with $r$ the rank of $\bbA$ (at most $N$), 
 and the triangle inequality of the norms $\| \bbA^2 \|_2 \leq \| \bbA \|_2^2 $,
 we have:
 \vspace{-0.15cm}
  \begin{equation}
\begin{aligned}
{ \zeta}_{y t}^\text{q} &\leq  \frac{K \sigma^2_{\text{q}}}{N}   \; \sum_{\tau = 1}^{t} \|(\boldsymbol \Psi  \otimes \bbS)^{\tau} \|_F^2
\leq  K \sigma^2_{\text{q}}  \; \sum_{\tau = 1}^{t} \|(\boldsymbol \Psi  \otimes \bbS) \|_2^{2 \tau}.
\end{aligned}
\label{Proof-MSE-ARMA-quantiz2}
\vspace{-0.15cm}
\end{equation}
Then, from the Kronecker product identity $\| \bbA \otimes \bbB \|_2 =\| \bbA \|_2  \| \bbB \|_2  $ and
   the $l_2$-norm matrix norm expression $\| \bbA \|_2=\sqrt{\text{max} \;\text{eig}(\bbA^\text{H} \bbA)}$,
 we can further rewrite \eqref{Proof-MSE-ARMA-quantiz2} as:
 \vspace{-0.15cm}
 \begin{equation}
\begin{aligned}
{ \zeta}_{y t}^\text{q} &\leq K \sigma^2_{\text{q}} \; \sum_{\tau = 1}^{t} \|\boldsymbol \Psi \|_2^{2\tau}  \| \bbS \|_2^{2\tau}
\leq  K \sigma^2_{\text{q}}   \; \sum_{\tau = 1}^{t} ( \psi_\text{max}  \lambda_{\text{max}})^{2\tau}.
\end{aligned}
\label{Proof-MSE-ARMA-quantiz33}
\vspace{-0.15cm}
\end{equation}
Finally, since \eqref{Proof-MSE-ARMA-quantiz33} is a finite geometric series with an argument smaller than one, the
quantization MSE ${ \zeta}_{y t}^\text{q} $ can be upper bounded by (\ref{upperbound-MSE-error-ARMA}).

\vspace{-0.15cm}
\subsection{Proof of Theorem \ref{Prop-MSE-epsilon-DynCell-ARMA}} \label{Appendix-Prop-MSE-epsilon-DynCell-ARMA}
By  equivalence to (\ref{Proof-MSE-ARMA-quantiz1}), but with a dynamic quantization stepsize,
  the MSE on the filter output due to the quantization noise is
upper bounded by: 
\vspace{-0.15cm}
\begin{equation}
\begin{aligned}
 &{ \zeta}_{y t}^\text{q} \leq  \frac{K }{N}   \; \sum_{\tau = 0}^{t-1} \sigma^2_{\text{q} \tau} \text{tr}((\boldsymbol \Psi  \otimes \bbS)^{t-\tau} ((\boldsymbol \Psi  \otimes \bbS)^{t-\tau})^\text{H})\\
&\leq  \! \frac{K }{12 N}   \! \sum_{\tau = 1}^{t} \! \Delta_{t-\tau}^2 \|(\boldsymbol \Psi  \otimes \bbS)^{\tau} \|_F^2
\leq  \! \frac{K }{12}   \! \sum_{\tau = 1}^{t}  \! \Delta^2_{t-\tau}  ( \psi_\text{max}  \lambda_{\text{max}})^{2\tau}\\
\end{aligned}
 \label{Proof-MSE-error-DynCell-ARMA}
 \vspace{-0.15cm}
\end{equation}
 \noindent  where similarily to (\ref{Proof-MSE-ARMA-quantiz2}) and (\ref{Proof-MSE-ARMA-quantiz33}),
 we changed the summatiom index, used the expression of the Frobenius norm $\| \bbA \|_F=\sqrt{\text{tr}(\bbA  \bbA^\text{H} })$, and leveraged
the norm properties.
 
For the quantization stepsize $\Delta_{\tau}=(\psi_\text{max} \lambda_{\text{max}})^{\tau} \Delta_{0}$, \eqref{Proof-MSE-error-DynCell-ARMA} can be further upper bounded as:
 \vspace{-0.15cm}
 \begin{equation}
\begin{aligned}
{ \zeta}_{y t}^\text{q}  &\leq \frac{ K}{12} \sum_{\tau = 1}^{t} (\psi_\text{max}  \lambda_{\text{max}})^{2t}  \Delta_{0}    
  \end{aligned}
  \vspace{-0.15cm}
 \end{equation}
which can be easily rephrased as in (\ref{MSE-epsilon-DynCell-ARMA}).

\vspace{-0.15cm}
\subsection{Quantized FIR graph filter  over time-varying graphs}\label{Appendix-FIR-output-varying-graph}
Considering  ${\bold x}^{(0)} =\bold x $ and the  quantized message at iteration $k$,
$\tilde{\bold x}^{(k)}= {\bold x}^{(k)}+ \bbn_\text{q}^{(k)}$, the output of the shifted signal with quantization % with filter coefficients  ${\boldsymbol\phi}^{(l)}=[\phi_{1}^{(l)},...,\phi_{N}^{(l)}]^{\top}$,
performed over $\mathcal{G}_{t}$ is:
\vspace{-0.05cm}
\begin{small}
\begingroup
\allowdisplaybreaks
\begin{align}
  {\bold x}^{(1)} &= \bold S_{t{-}1} \tilde{\bold x}^{(0)}  = \bold S_{t{-}1}  ({\bold x}^{(0)}{+} \bbn_\text{q}^{(0)})= \bold S_{t{-}1}  {\bold x}^{(0)}{+} \bold S_{t{-}1} \bbn_\text{q}^{(0)} \nonumber\\
  {\bold x}^{(2)} &= \bold S_{t{-}2} \tilde{\bold x}^{(1)} %= \bold S_{t{-}2}  ({\bold x}^{(1)}{+} \bbn_\text{q}^{(1)})
  =  \bold S_{t{-}2} \bold S_{t{-}1} {\bold x}^{(0)}{+}   \bold S_{t{-}2} \bold S_{t{-}1} \bbn_\text{q}^{(0)}{+}  \bold S_{t{-}2} \bbn_\text{q}^{(1)} \nonumber\\
  \vdots  & \nonumber  \\
 {\bold x}^{(k)} &= \; \bigg(\prod_{\tau=t-1}^{t-k}  \bold S_{\tau}  \bigg) \; \bold x^{(0)} + \sum_{\kappa=0}^{k-1}  \; \bigg(\prod_{\tau=t-1-\kappa}^{t-k}  \bold S_{\tau}  \bigg) \; \bbn_\text{q}^{(\kappa)}, \; \; k \geq 1
  \end{align}
  \endgroup
\end{small} 
\vspace{-0.25cm}

\noindent The quantized output of  FIR graph filter at iteration $t$, performed over $\mathcal{G}_{t}$ with quantization effects, is given by:
\vspace{-0.1cm}
\begin{small}
\begingroup
\allowdisplaybreaks
\begin{align}
 &{\bold y}_{t}^\text{q} =  \phi_0  \; \bold x +    \sum_{k=1}^{K}  \phi_k \; {\bold x}^{(k)} \nonumber \\
      & = \phi_0  \; \bold x +    \sum_{k=1}^{K}  \phi_k  \bigg(  \bold \Theta(t{-}1,t{-}k) \;\bold x^{(0)} +    \sum_{\kappa=0}^{k-1}   \bold \Theta(t{-}1{-}\kappa,t{-}k)\;    \bbn_\text{q}^{(\kappa)} \bigg) \nonumber\\
      &= \sum_{k=0}^{K} \! \phi_k \bold \Theta(t{-}1,t{-}k)  \bold x +  \sum_{k=1}^{K} \sum_{\kappa=0}^{k-1}   \phi_k \; \bold \Theta(t{-}1{-}\kappa,t{-}k)  \; \bbn_\text{q}^{(\kappa)}. 
\vspace{-0.25cm}
\end{align}
\endgroup
 \end{small}

 \vspace{-0.35cm}
\subsection{Proof of Proposition \ref{Prop-MSE-Vary-graph}} \label{Appendix-Prop-MSE-Vary-graph}
  By using  $\|\bold x \|_2^2=\text{tr} (\bold x \bold x^\text{H})$
 and  rearranging the   summation indices, we can write  the  MSE of the filter output due quantization and graph randomness as:
 \vspace{-0.15cm}
 \begin{equation}
\begin{aligned}
  \zeta^\text{q}_t&=  \mathbb{E}[\frac{1}{N} \text{tr}({\boldsymbol \epsilon_t} {\boldsymbol \epsilon_t}^H )]= \frac{1}{N} \mathbb{E}[\|\boldsymbol \epsilon_t\|_2^2]  \\
      &  =\frac{1}{N} \mathbb{E} \bigg[ \bigg \| \sum_{\kappa=1}^{K}  \sum_{k=\kappa}^{K}  \phi_k  \; \bold \Theta(t{-}\kappa,t{-}k)    \; \bbn_\text{q}^{(\kappa-1)} \bigg \|_2^2 \bigg].\\
   \end{aligned}\label{rearrangingMSE-Vary-graph}
   \vspace{-0.15cm}
 \end{equation}

\noindent Let then vector $\boldsymbol \omega(\kappa,t)=\sum_{k=\kappa}^{K}  \phi_k  \bold \Theta(t{-}\kappa,t{-}k)    \;   \bbn_\text{q}^{(\kappa-1)}$ account for the accumulated quantization noise over time-varying graphs. By using  $\|\bold x \|_2^2=\bold x^\text{H} \bold x $, we can write: 
%\textcolor{red}{EI: use $\kappa_1$ and $\kappa_2$ to be consistent}:
 \vspace{-0.25cm}
 
  \begin{small}
  \begin{equation}
\begin{aligned}
  \mathbb{E} \bigg[\bigg\| \sum_{\kappa=1}^{K} \boldsymbol \omega(\kappa,t) \bigg\|_2^2 \bigg] 
 % {=}  \mathbb{E} \bigg[ \sum_{i=1}^{K} \sum_{j=1}^{K}  \boldsymbol \omega(i,t)^\text{H} \boldsymbol \omega(j,t) \bigg]\\
   = \sum_{\kappa_1=1}^{K} \sum_{\kappa_2=1}^{K} \mathbb{E} \bigg[ \boldsymbol \omega(\kappa_1,t)^\text{H}  \boldsymbol \omega(\kappa_2,t) \bigg].\\
  \end{aligned}
   \vspace{-0.25cm}
 \end{equation}
  \end{small}
  
 \noindent Since the quantization errors are zero mean  
 and independent from  graph topology processes, we have:
 \vspace{-0.25cm}
 
   \begin{small}
 \begin{equation}
\mathbb{E} \bigg[ \boldsymbol \omega(\kappa_1,t)^\text{H} \boldsymbol \omega(\kappa_2,t) \bigg]= \left\{
    \begin{array}{ll}
        0 & \text{if} \; \kappa_1\neq \kappa_2\\
        \mathbb{E} [ \|\boldsymbol \omega(\kappa_1,t)\|_2^2]  & \text{if} \; \kappa_1= \kappa_2.
    \end{array}
\right.
  \label{equalitySimplification}
   \vspace{-0.25cm}
\end{equation}
  \end{small}

\noindent Therefore, we can rewrite \eqref{equalitySimplification} as:
 \vspace{-0.25cm}
 
  \begin{small}
   \begin{equation}
\begin{aligned}
  \mathbb{E} \bigg[\bigg\| \sum_{\kappa=1}^{K} \boldsymbol \omega(\kappa,t) \bigg\|_2^2 \bigg]& =
   \sum_{\kappa=1}^{K}  \mathbb{E} \bigg[ \big \|\boldsymbol \omega(\kappa,t) \big\|_2^2 \bigg].\\
  \end{aligned}
  \label{equalitySomme}
   \vspace{-0.25cm}
 \end{equation}
   \end{small}

\noindent  Using once again the norm property $\|\bold x \|_2^2=\text{tr} (\bold x \bold x^\text{H})$, the cyclic property of the trace $\text{tr}(\bold A \bold B \bold C)=\text{tr}(\bold C \bold A \bold B )$,
 and  the commutativity
 of the trace to respect to the expectation, we can write:
  \vspace{-0.25cm}
 
 \begin{small}
  %\begin{equation}
  \begingroup
\allowdisplaybreaks
\begin{align}
  & \mathbb{E} [ \| {\boldsymbol \epsilon_t}  \|_2^2  ] = \sum_{\kappa=1}^{K} \mathbb{E} \bigg[ \text{tr} \bigg( \bigg(\sum_{k=\kappa}^{K}  \phi_k  \bold \Theta(t{-}\kappa,t{-}k)     \bigg) \; \bbn_\text{q}^{(\kappa-1)} ({\bbn_\text{q}^{(\kappa-1)}})^\text{H} \nonumber\\
 &\qquad\qquad   \times \; \bigg(\sum_{k=\kappa}^{K} \phi_k  {\bold \Theta(t{-}\kappa,t{-}k)}       \bigg)^\text{H} \bigg ) \bigg] \nonumber\\
   &    = \sum_{\kappa=1}^{K} \text{tr} \bigg(  \mathbb{E} \bigg[  \bigg(\sum_{k=\kappa}^{K}  \phi_k \bold \Theta(t{-}\kappa,t{-}k)^\text{H}       \bigg)   \bigg(\sum_{k=\kappa}^{K}  \phi_k   \bold \Theta(t{-}\kappa,t{-}k) \bigg) \bigg] \nonumber\\
  &\qquad\qquad  \times    \mathbb{E} \big[ \bbn_\text{q}^{(\kappa-1)} ({\bbn_\text{q}^{(\kappa-1)}})^\text{H}   \big] \bigg ).
 \end{align}
 \endgroup
  \vspace{-0.45cm}
 \end{small}

\noindent By using the  inequality $\text{tr}(\bold A \bold B)\leq  \| \bold A\|_2 \;\text{tr}(\bold B)$, we obtain:
% that can be applied
% for any positive semi-definite  matrix $\bold B \succeq \mathbf{0} $ and any square matrix $\bold A$   with appropriate dimensions \cite{SaniukRhodes87}, we obtain:
  \vspace{-0.35cm}
  
  \begin{small}
   \begin{equation}
\begin{split}
   & \mathbb{E} [ \| {\boldsymbol \epsilon_t}  \|_2^2  ]  \leq  \sum_{\kappa=1}^{K} \text{tr} \bigg(  \mathbb{E} \big[ \bbn_\text{q}^{(\kappa-1)} ({\bbn_\text{q}^{(\kappa-1)}})^\text{H}   \big] \bigg )\\
   &  \times \bigg \|   \mathbb{E} \bigg[ \bigg(\sum_{k=\kappa}^{K}  \phi_k  \bold \Theta(t{-}\kappa,t{-}k)^\text{H}       \bigg)   \bigg(\sum_{k=\kappa}^{K}  \phi_k \bold \Theta(t{-}\kappa,t{-}k)     \bigg) \bigg] \bigg \|_2. \\
      \end{split}
       \vspace{-0.35cm}
 \end{equation}
 \end{small}
 \vspace{-0.25cm}

 \noindent Since $\text{tr} \big(  \mathbb{E} \big[ \bbn_\text{q}^{(\kappa)} ({\bbn_\text{q}^{(\kappa)}})^\text{H}   \big] \big )= \text{tr} \big( \sigma_{\text{q} \kappa}^2  \bold I \big)= N \sigma_{\text{q} \kappa}^2$
 and
using the Jensen's inequality of the spectral norm ($\| \mathbb{E}[\bold A] \|_2 \leq \mathbb{E}[ \| \bold A \|_2  ]$),
  we can further write:
  \vspace{-0.15cm}
  \begin{small}
    \begin{equation}
\begin{split}
   & \mathbb{E} [ \| {\boldsymbol \epsilon_t}  \|_2^2  ]  \leq  \\ 
  % \!\!\!\!& \! N  \! \sum_{i=1}^{K} \!\! \sigma_{\text{q},(i-1)}^2 \bigg \|   \mathbb{E} \bigg[\! \bigg( \sum_{k=i}^{K} \!  \phi_k  {\bold \Theta(t{-}i,t{-}k)}^\text{H}     \bigg) \! \bigg(      \! \sum_{k=i}^{K}   \phi_k  \bold \Theta(t{-}i,t{-}k) \!  \bigg)   \bigg] \bigg \|_2 \\     
   & N \!\sum_{\kappa=1}^{K} \!\! \sigma_{\text{q} \kappa-1}^2  \mathbb{E} \! \left[ \bigg \| \!  \bigg( \sum_{k=\kappa}^{K} \!\! \phi_k  {\bold \Theta(t{-}\kappa,t{-}k)}^\text{H}      \bigg)  \!\!  \bigg(  \! \sum_{k=\kappa}^{K} \!  \phi_k  \bold \Theta(t{-}\kappa,t{-}k) \!  \bigg) \bigg \|_2  \! \right]  \\     
       &\leq   \frac{N}{12} \sum_{\kappa=1}^{K}  \Delta_{\kappa{-}1}^2 \; \mathbb{E} \big[\Upsilon(t,\kappa) \big]\\
          %    &\leq   N \sum_{i=1}^{K}  \sigma_{\text{q},i-1}^2  \; \mathbb{E} \big[\Upsilon(t,i) \big]
\end{split}
 \label{mideqresult}
 \end{equation}
 \end{small}
 \vspace{-0.15cm}
  \noindent  where $ \Upsilon(t,\kappa)$ is:
  \vspace{-0.15cm}
    \begin{equation}\label{eq.proofUpsilon}
 \Upsilon(t,\kappa)=\bigg \| \!  \bigg( \sum_{k=\kappa}^{K} \!\! \phi_k  {\bold \Theta(t{-}\kappa,t{-}k)}^\text{H}      \bigg)  \!\!  \bigg(  \! \sum_{k=\kappa}^{K}  \phi_k  \bold \Theta(t{-}\kappa,t{-}k) \!  \bigg) \bigg \|_2.
  \vspace{-0.15cm}
  \end{equation}

 \noindent  By using the spectral norm sub-multiplicativity $\| \bold A \bold B  \|_2  \leq \| \bold A   \|_2  \|  \bold B  \|_2 $ and subadditivity $\| \bold A +\bold B  \|_2  \leq \| \bold A   \|_2 + \|  \bold B  \|_2 $ along with the upper bound of the shift operator $\| \bold S_t  \|_2  {\leq} \| \bold S  \|_2  {\leq} \rho$ for all $t$, we upper bound \eqref{eq.proofUpsilon} as:
%   and by assuming that the spectral norm of the shift operator used is upper bounded
%  by the spectral radius bound $\rho$, 
% \textit{i.e.,} $\| \bold S_t  \|_2  {\leq} \| \bold S  \|_2  {\leq} \rho$ for all $t$  \cite{GamaIsufiLeus2018,HOPPENMonsalve2019}, we can write:
\vspace{-0.25cm}
 
  \begin{small}
    \begin{equation}
\begin{split}
    \Upsilon(t,\kappa) & \leq       \bigg \| \!   \sum_{k=\kappa}^{K} \! \phi_k {\bold \Theta(t{-}\kappa,t{-}k)}^\text{H}     \bigg \|_2  \;\bigg \|  \! \sum_{k=\kappa}^{K}   \phi_k  \bold \Theta(t{-}\kappa,t{-}k)  \bigg \|_2    \\     
    &\leq     \bigg ( \!   \sum_{k=\kappa}^{K}  \rho^{k-\kappa+1} \;  | \phi_k |  \bigg )^2.       
 \end{split}
 \label{endeqresult}
 \vspace{-0.25cm}
 \end{equation}
 \end{small}

Finally, by substituting \eqref{endeqresult} into \eqref{mideqresult} and computing the expectation, $\zeta^\text{q}_t$ can be upper bounded by (\ref{eq_prop18bound}).

\vspace{-0.15cm}
\subsection{Proof of Theorem \ref{Prop-MSE-epsilon-DynCell-ARMA-TVGraph}} \label{Appendix-Prop-MSE-epsilon-DynCell-ARMA-TVGraph}
\noindent Similarly to \eqref{Proof-MSE-ARMA-quantiz}, we can write the MSE of ARMA filter due to quantization and graph randomness \eqref{MSE-ARMA-time-varying-QQ} as:
\vspace{-0.25cm}

\begin{small}
\begin{equation}
\begin{aligned}
{ \xi}_{y t}^\text{q} %&=\frac{1}{N} \mathbb{E}[ \text{tr}( (\boldsymbol 1^\top  \otimes \bbI_N)^\text{H}) (\boldsymbol 1^\top  \otimes \bbI_N) {\boldsymbol \epsilon}_{t}^\text{q} ({\boldsymbol \epsilon}_{t}^\text{q})^\text{H} ]\\
&\leq \! \frac{1}{N} \! \|(\boldsymbol 1^\top \! \otimes \bbI_N)^\text{H} (\boldsymbol 1^\top \! \otimes \bbI_N) \|_2  \text{tr}( \mathbb{E}[  {\boldsymbol \varepsilon}_{t}^\text{q} ({\boldsymbol \varepsilon}_{t}^\text{q})^\text{H} ])
\leq \! \frac{K}{N}   \text{tr}( \mathbb{E}[  {\boldsymbol \varepsilon}_{t}^\text{q} ({\boldsymbol \varepsilon}_{t}^\text{q})^\text{H} ])
\end{aligned}
\label{Proof-MSE-ARMA-quantiz-time-varying}
\end{equation}
\end{small}
\vspace{-0.25cm}

\noindent Then, by substituting $\boldsymbol \varepsilon_t^\text{q}$ with its expression,
using  the linearity of the expectation w.r.t the trace, the  cyclic property of the trace $\text{tr}(\bold A \bold B \bold C)=\text{tr}(\bold C \bold A \bold B )$, we can write:  
 \vspace{-0.25cm}
 
 \begin{small}
 \begin{equation}
\begin{aligned}
  & \text{tr}( \mathbb{E}[  {\boldsymbol \varepsilon}_{t}^\text{q} ({\boldsymbol \varepsilon}_{t}^\text{q})^\text{H} ]) =  \sum_{\tau_1=0}^{t-1} \sum_{\tau_2=0}^{t-1} \mathbb{E} \bigg[ \text{tr} \bigg( \big(\prod_{\varsigma=\tau_2}^{t-1} \boldsymbol \Psi  \otimes \bbS_{\varsigma} \big)^\text{H}  \\ 
  & \times \big(\prod_{\varsigma=\tau_1}^{t-1}  \boldsymbol \Psi  \otimes \bbS_{\varsigma} \big) \; \bbn_{\tau_1}^\text{q} (\bbn_{\tau_2}^\text{q})^\text{H}   \bigg) \bigg]\\
  &=  \sum_{\tau_1=0}^{t-1} \sum_{\tau_2=0}^{t-1}  \text{tr} \bigg(  \mathbb{E} \bigg[\big(\prod_{\varsigma=\tau_2}^{t-1} \boldsymbol \Psi  \otimes \bbS_{\varsigma} \big)^\text{H}  \big(\prod_{\varsigma=\tau_1}^{t-1}  \boldsymbol \Psi  \otimes \bbS_{\varsigma} \big) \bigg] \\ 
  &  \times  \mathbb{E} [ \bbn_{\tau_1}^\text{q} (\bbn_{\tau_2}^\text{q})^\text{H}]   \bigg)
\end{aligned}
\end{equation}
\end{small}
\vspace{-0.2cm}

\noindent By  considering
$\mathbb{E} [ \bbn_{\tau_1}^\text{q} (\bbn_{\tau_2}^\text{q})^\text{H}]=\bold 0 $ if $\tau_1 \neq \tau_2$,
using the  inequality $\text{tr}(\bold A \bold B)\leq  \| \bold A\|_2 \;\text{tr}(\bold B)$,
assuming unified quantization with dynamic stepsize i.e., $\text{tr} (\mathbb{E} [ \bbn_{\tau}^\text{q} (\bbn_{\tau}^\text{q})^\text{H}])=K N  \sigma_{\text{q},\tau}^2 $
and using the Jensen's inequality of the spectral norm ($\| \mathbb{E}[\bold A] \|_2 \leq \mathbb{E}[ \| \bold A \|_2  ]$), we can write:
 \vspace{-0.18cm}
 \begin{small}
 \begin{equation}
\begin{aligned}
   \text{tr}( \mathbb{E}[  {\boldsymbol \varepsilon}_{t}^\text{q} ({\boldsymbol \varepsilon}_{t}^\text{q})^\text{H} ]) 
  % &\leq  K  N   \sum_{\tau=0}^{t-1} \!\sigma_{\text{q},\tau}^2   \bigg\|  \mathbb{E} \bigg[\big(\!\prod_{\varsigma=\tau}^{t-1} \boldsymbol \Psi  \otimes \bbS_{\varsigma} \big)^\text{H}  \big(\!\prod_{\varsigma=\tau}^{t-1}  \boldsymbol \Psi  \otimes \bbS_{\varsigma} \big) \bigg] \bigg\|_2 \\ 
  & \leq K N    \sum_{\tau=0}^{t-1} \! \sigma_{\text{q},\tau}^2     \mathbb{E} \bigg[ \bigg\| \big(\!\prod_{\varsigma=\tau}^{t-1} \boldsymbol \Psi  \otimes \bbS_{\varsigma} \big)^\text{H}  \big(\!\prod_{\varsigma=\tau}^{t-1}  \boldsymbol \Psi  \otimes \bbS_{\varsigma} \big) \bigg\|_2 \bigg]  \\ 
\label{Proof-MSE-ARMA-quantiz-time-varying-Part2}
  \end{aligned}
\end{equation}
 \end{small}
 \vspace{-0.25cm}

\noindent By using  the sub-multiplicativity  property of the spectral norm
 of a square matrix i.e., $\| \bold A \bold B  \|_2  \leq \| \bold A   \|_2  \|  \bold B  \|_2 $,
the property $\| \bbA \otimes \bbB\|_2 = \| \bold A\|_2 \| \bold B\|_2$ and 
 assuming that the spectral norm of the shift operator used is upper bounded
 i.e.,  $\| \bbS \|_2  \leq \| \bbS_t \|_2 \leq \rho$ for all $t$, we have:
 \vspace{-0.25cm}
 
 \begin{small}
  \begin{equation}
\begin{aligned}
 \bigg\|  \big(\!\prod_{\varsigma=\tau}^{t-1} \! \boldsymbol \Psi  \otimes \bbS_{\varsigma} \big)^\text{H}  \big(\!\prod_{\varsigma=\tau}^{t-1}  \! \boldsymbol \Psi  \otimes \bbS_{\varsigma} \big)  \bigg\|_2 & \!\! \! \leq \!\!
  % \bigg\|  \big(\!\prod_{\varsigma=\tau}^{t-1} \boldsymbol \Psi  \otimes \bbS_{\varsigma} \big)^\text{H} \bigg\|  \bigg\|  \big(\!\prod_{\varsigma=\tau}^{t-1}  \boldsymbol \Psi  \otimes \bbS_{\varsigma} \big)  \bigg\| \leq \\
    \big(\!\prod_{\varsigma=\tau}^{t-1} \! \|  \boldsymbol \Psi \|_2 \! \|  \bbS_{\varsigma}  \|_2 \big)    \big(\!\prod_{\varsigma=\tau}^{t-1} \! \|  \boldsymbol \Psi  \|_2 \! \|  \bbS_{\varsigma}  \|_2 \big) \\
 &\leq (\psi_\text{max} \; \rho)^{2t-2\tau} %\leq  (|\psi_\text{max}| \; \rho)^{2t-2\tau} 
 \label{Proof-MSE-ARMA-quantiz-time-varying-bound}
\end{aligned}
\vspace{-0.25cm}
\end{equation}
 \end{small}
 
\noindent  By applying the expectation to \eqref{Proof-MSE-ARMA-quantiz-time-varying-bound} and combining it with 
 \eqref{Proof-MSE-ARMA-quantiz-time-varying-Part2} and \eqref{Proof-MSE-ARMA-quantiz-time-varying}, and
 making an index change using $\sum_{\tau = 0}^{t-1} c_\tau a^{t-\tau} =\sum_{\tau = 1}^{t} c_{t-\tau} a^{\tau}$,  
we can write:
\vspace{-0.15cm}
\begin{equation}
\begin{aligned}
%{ \xi}_{y,t}^\text{q} &\leq K^2 \sigma_\text{q}^2  \sum_{\tau=0}^{t-1}  \big((|\psi_\text{max}| \; \rho)^2 \big)^{t-\tau} \leq K^2 \sigma_\text{q}^2  \sum_{\tau=1}^{t}  \big((|\psi_\text{max}| \; \rho)^2 \big)^{\tau}   
{ \xi}_{y t}^\text{q} &\leq K^2   \sum_{\tau=0}^{t-1}  \sigma_{\text{q},\tau}^2 \big((\psi_\text{max} \; \rho)^2 \big)^{t-\tau} \leq K^2   \sum_{\tau=1}^{t} \sigma_{\text{q},t-\tau}^2 \big((\psi_\text{max} \; \rho)^2 \big)^{\tau}\\
                      & \leq \frac{K^2}{12}   \sum_{\tau=1}^{t}  \Delta_{t-\tau}^2 (\psi_\text{max} \; \rho)^{2\tau}
\end{aligned}
\label{Proof-MSE-ARMA-quantiz-time-varying-final-dyn}
\vspace{-0.1cm}
\end{equation}

\noindent With  the choice of the quantization stepsize $\Delta_{\tau}=(  \psi_\text{max} \; \rho)^{\tau} \Delta_{0}$, 
 the final bound in \eqref{Proof-MSE-ARMA-quantiz-time-varying-final-dyn}  becomes:
 \vspace{-0.15cm}
 \begin{equation}
\begin{aligned}
{ \xi}_{y t}^\text{q} &\leq \frac{ K^2 }{12} \sum_{\tau = 1}^{t} (\psi_\text{max} \; \rho)^{2t}  \Delta_{0}    
  \end{aligned}
  \vspace{-0.15cm}
 \end{equation}
\noindent Therefore, ${\xi}_{y t}^\text{q}$ can be upper bounded by \eqref{MSE-epsilon-DynCell-ARMA-TVGraph}.

\vspace{-0.1cm}
%\begin{small}
\bibliographystyle{IEEEtran}
\bibliography{references,dissertation}
%\end{small}

\end{document}